\documentclass[final,12pt]{elsarticle}

\usepackage{amsmath,amssymb,amsthm}
\usepackage[a4paper]{geometry}
\usepackage[utf8]{inputenc}
\usepackage[T1]{fontenc}
\usepackage{lmodern}
\usepackage{mathrsfs}
\usepackage{microtype}
\usepackage{latexsym}
\usepackage[colorlinks]{hyperref}
\usepackage{cleveref}
\usepackage{graphicx}
\usepackage{bookmark}
\usepackage{booktabs}
\usepackage{enumitem}
\usepackage{diagbox}
\usepackage{lineno}
\usepackage{float}
\usepackage{color}
\usepackage{bm}
\usepackage{longtable}

\newtheorem{thm}{Theorem}
\newtheorem{lem}{Lemma}
\newtheorem{prop}{Proposition}

\newtheorem{coro}{Corollary}
\theoremstyle{definition}
\newtheorem{defi}{Definition}
\newtheorem{ex}{Example}
\newtheorem{rem}{Remark}

\DeclareMathOperator{\ord}{\mathrm{ord}}
\DeclareMathOperator{\wt}{\mathrm{wt}}

\DeclareMathOperator{\supp}{\mathrm{supp}}
\renewcommand{\vec}[1]{\boldsymbol{#1}}
\renewcommand{\d}{\mathrm{d}}

\begin{document}
	
\begin{frontmatter}

%% Title, authors and addresses

%% use the tnoteref command within \title for footnotes;
%% use the tnotetext command for theassociated footnote;
%% use the fnref command within \author or \affiliation for footnotes;
%% use the fntext command for theassociated footnote;
%% use the corref command within \author for corresponding author footnotes;
%% use the cortext command for theassociated footnote;
%% use the ead command for the email address,
%% and the form \ead[url] for the home page:
%% \title{Title\tnoteref{label1}}
%% \tnotetext[label1]{}
%% \author{Name\corref{cor1}\fnref{label2}}
%% \ead{email address}
%% \ead[url]{home page}
%% \fntext[label2]{}
%% \cortext[cor1]{}
%% \affiliation{organization={},
%%             addressline={},
%%             city={},
%%             postcode={},
%%             state={},
%%             country={}}
%% \fntext[label3]{}

\title{Cyclic codes and cyclically covering subspaces \tnoteref{fund}
}
\tnotetext[fund]{The work of Minjia Shi was supported by the National Natural Science Foundation of China under Grant 12471490. 
}

%% use optional labels to link authors explicitly to addresses:
%% \author[label1,label2]{}
%% \affiliation[label1]{organization={},
%%             addressline={},
%%             city={},
%%             postcode={},
%%             state={},
%%             country={}}
%%
%% \affiliation[label2]{organization={},
%%             addressline={},
%%             city={},
%%             postcode={},
%%             state={},
%%             country={}}

\author[ahu1]{Xuan Wang}
\author[ahu1]{Minjia Shi\corref{CorAuthor}}
\cortext[CorAuthor]{Corresponding author.	E-mail address: smjwcl.good@163.com (M. Shi).}
%\author[sog1,sog2]{Junmin An}
%\author[sog1,sog2]{Jon-Lark Kim}
%% Author name

%% Author affiliation
\affiliation[ahu1]{organization={Key Laboratory of Intelligent Computing Signal Processing, Ministry of Education},%Department and Organization
            addressline={School of Mathematical Sciences, Anhui University},
            city={Hefei},
            postcode={230601},
            state={Anhui},
            country={China}}
%\affiliation[ahu2]{organization={Key Laboratory of Intelligent Computing Signal Processing, Ministry of Education, Anhui University},%Department and Organization
%            addressline={111, Jiulong},
%            city={Hefei},
%            postcode={230601},
%            state={Anhui},
%            country={China}}
%\affiliation[sog1]{organization={Department of Mathematics, Sogang University},%Department and Organization
%            addressline={35, Baekbeom-ro},
%            city={Mapo-gu},
%            postcode={04107},
%            state={Seoul},
%            country={Republic of Korea}}
%\affiliation[sog2]{organization={Institute for Mathematical and Data Sciences, Sogang University},%Department and Organization
%            addressline={35, Baekbeom-ro},
%            city={Mapo-gu},
%            postcode={04107},
%            state={Seoul},
%            country={Republic of Korea}}

%% Abstract
\begin{abstract}
%% Text of abstract
A subspace of $\mathbb{F}_q^n$ is called cyclically covering if the union of $\sigma^i(U)$ can cover the whole space $\mathbb{F}_q^n$, where $\sigma$ is the cyclic shift, $0 \leqslant i \leqslant n-1$. 
Let $h_q(n)$ be the largest possible co-dimension of a cyclically covering subspace of $\mathbb{F}_q^n$. 
We show that $h_2(2p) = 2$ for every prime $p$ such that $2$ is a primitive root modulo $p$. 
By constacyclic codes, we show that $h_q((q-1)n) = 0$ when $h_q(n) = 0$ and $\gcd(n,q-1) = 1$. 
We also derive a lower bound on $h_q(n)$ by the concept of support weight distribution, which is important in coding theory. 
Finally, using irreducible cyclic codes, we present several families of $n$ such that $h_q(n) = 0$. 
\end{abstract}

%%Graphical abstract
%\begin{graphicalabstract}
%\includegraphics{grabs}
%\end{graphicalabstract}

%%Research highlights
%\begin{highlights}
%\item Research highlight 1
%\item Research highlight 2
%\end{highlights}

%% Keywords
\begin{keyword}
cyclically covering subspace \sep irreducible cyclic codes \sep constacyclic codes \sep support weight distribution \sep primitive root

%% keywords here, in the form: keyword \sep keyword

%% PACS codes here, in the form: \PACS code \sep code

%% MSC codes here, in the form: \MSC code \sep code
%% or \MSC[2008] code \sep code (2000 is the default)

\end{keyword}

\end{frontmatter}

%% Add \usepackage{lineno} before \begin{document} and uncomment
%% following line to enable line numbers
%% \linenumbers

%% main text
%%

%% Use \section commands to start a section

\section{Introduction}

Let $\mathbb{F}_q$ be the finite field with $q$ elements where $q$ is a prime power. 
Let $\{ \vec{e}_0, \vec{e}_1, \dots, \vec{e}_{n-1} \}$ be the standard basis of the $n$-dimensional vector space $\mathbb{F}_q^n$, where $n$ is a positive integer. 
Throughout the paper, the indices of vectors in $\mathbb{F}_q^n$ will be taken modulo $n$. 
Define the \textit{cyclic shift} operator $\sigma: \mathbb{F}_q^n \to \mathbb{F}_q^n$ as 
\[ \sigma\left( \sum_{i=0}^{n-1} u_i \vec{e}_i \right) = \sum_{i=0}^{n-1} u_{i} \vec{e}_{i+1}, \]
where $\vec{u} = (u_0, u_1, \dots, u_{n-1}) \in \mathbb{F}_q^n$. 

A subspace $U$ of $\mathbb{F}_q^n$ is \textit{cyclically covering} if 
\[ \bigcup_{r=0}^{n-1} \sigma^r(U) = \mathbb{F}_q^n,  \]
where $\sigma^r$ is the composition of the map $\sigma$ with $r$ times. 
Note that $U' \subseteq \mathbb{F}_q^n$ is also cyclically covering if $U \subseteq U'$ is cyclically covering. 
A natural question is what is the smallest dimension of such a cyclically covering subspace? 
Or equivalently, to determine the value on 
\[ h_q(n) = n - \min \{ \dim U : U ~\text{is a cyclically covering subspace of}~ \mathbb{F}_q^n  \}. \]
Despite its simple definition, determining $h_q(n)$ is highly nontrivial.
Indeed, even for moderate values of $n$, many cases remain unknown (see \cite{JCTA-AGJ}). 
The main difficulty comes from the interplay between additive structure and cyclic symmetry, which makes standard linear-algebraic or combinatorial methods insufficient.
In 2019, Cameron, Ellis, and Raynaud \cite{EJC-CER} firstly proposed the function $h_q(n)$ and found that $h_q(n)$ is closely related to a well-known conjecture of Isbell \cite{MS-Isbell-1,PAMS-Isbell-2} and is a natural generalization of a problem posed by Cameron \cite{PBC-DM-C} in 1994.  

In fact, similar problems have been investigated before. 
For example, Luh \cite{AMASH-Luh} shows that every vector space (finite or infinite) over $\mathbb{F}_q$ can be expressed as a union of $q + 1$
proper subspaces in 1972. 
And in 1977, Jamison \cite{JCTA-J} determined, for each $0 < k < n$, the minimum number of $k$-flats (a $k$-flat is a translate of a $k$-dimensional subspace) that are required to cover $\mathbb{F}_q^n \setminus \{ \vec{0} \}$. 

An important open direction, posed by Cameron \cite[Problem 190]{PBC-DM-C}, is whether $h_2(n)$ remains bounded or grows unbounded along odd integers.
A breakthrough result by Aaronson, Groenland, and Johnston \cite{JCTA-AGJ} shows that if $p > 3$ is a prime such that $2$ is a primitive root modulo $p$, then $h_2(p) = 2$.
This connects the problem to Artin’s conjecture on primitive roots.
So, if there are infinitely many primes with $2$ as a primitive root, then $h_2(n) = 2$ for infinitely many odd $n$. 
Artin's conjecture is widely believed; in particular, it follows as a consequence of the generalised Riemann hypothesis, as shown by Hooley \cite{JRAM-H}. 

Usually, it is difficult to determine the exactly values on $h_q(n)$. 
Even for $n \leqslant 29$, there are several values on $h_2(n)$ are not determined \cite[Table 1]{JCTA-AGJ} and some values are obtained by brute force and computer search. 
In 2025, Sun, Ma, and Zeng \cite{FFTA-SMZ} developed new methods to determine $h_q(n)$ and found $h_2(n) = m-2$ when $3n = 2^m - 1 > 3$ and $m = \ord_n(2)$. 
In particular, they showed $h_2(21) = 4$, which is new according to \cite[Table 1]{JCTA-AGJ}.  

There are also several problems proposed in \cite[Section 7]{JCTA-AGJ}. 
One of them is: for which $n$, we have $h_q(n) = 0$ \cite{JCTA-AGJ,EJC-CER}? 
In 2024, Huang \cite{FFTA-H} firstly gave a necessary and sufficient condition by trace function under which $h_q(n) = 0$ when $\gcd(n,q) = 1$. 
Using the language of coding theory, this condition is equivalent to say ``$h_q(n) = 0$ if and only if every \textit{cyclic code} of length $n$ over $\mathbb{F}_q$ contains a codeword of Hamming weight $n$'',  
which has been proved via Discrete Fourier Transformers in \cite{arxiv-LY-2}. 
More details can also be seen in \cite{arxiv-LY-3}. 
This leads to deep connections between $h_q(n)$ and classical problems in coding theory, such as Hamming weight distribution and support weight distribution. 

An $[n,k]$ linear code over $\mathbb{F}_q$ is a $k$-dimensional subspace of $\mathbb{F}_q^n$. 
For each vector $\vec{v}$ in $\mathbb{F}_q^n$, its Hamming weight is defined as the number of nonzero coordinates, denoted by $\wt(\vec{v})$. 
Let $A_i$ be the number of codewords with Hamming weight $i$ in $\mathcal{C}$, where $0 \leqslant i \leqslant n$. The sequence $(A_0, A_1, \dots, A_n)$ is called the \textit{weight distribution} of $\mathcal{C}$ and the polynomial $A(z) = A_0 + A_1 z + \cdots A_{n-1} z^{n-1}$ is called the \textit{weight enumerator} of $\mathcal{C}$. 
A linear code $\mathcal{C}$ is called \textit{cyclic} if $\sigma(\mathcal{C}) = \mathcal{C}$. 
A fundamental and long-standing problem in coding theory is to determine its (Hamming) weight distribution. 
There are many related works, for example, irreducible cyclic codes \cite{TIT-D,DM-HKM,FFTA-SB,FFTA-SBR}, repeated-root cyclic codes \cite{TIT-L}, constacyclic codes \cite{FFTA-CFLL}, support weight distribution \cite{TIT-HKY,TIT-SLH}. 

The distribution of support weights, originally introduced by Helleseth and Kl\o ve \cite{DM-HKM,DM-Klove}, plays an important role in coding-theoretic applications such as the wire-tap channel of type II \cite{TIT-Wei}. 
However, in general, determining the support weight distribution of a code is a highly nontrivial problem, and only few explicit results are known.

Closely related is the study of lifted codes. Given an $[N,K]$ linear code $\mathcal{C}_1$ over $\mathbb{F}_q$, one may consider its lifted code $\mathcal{C}_\ell$ over $\mathbb{F}_{q^\ell}$ with the same generator matrix. 
A fundamental observation due to Helleseth et al. \cite{DM-HKM} (also independently by Greene \cite{SAM-G}) shows that the weight distribution of $\mathcal{C}_\ell$ can be expressed in terms of counting certain subspaces associated with the generator matrix.
Later, it was shown in \cite{TIT-HKY} that these subspace enumerations are equivalent to counting subcodes with prescribed support weights. 
Therefore, determining support weight distributions is equivalent to determine the Hamming weight distribution of a lifted code. 

In this paper, we focus on several classes of cyclic and constacyclic codes to study $h_q(n)$. 
Our main contributions are summarized as follows:

\begin{itemize}
	\item We determine $h_2(2p)=2$ when $2$ is a primitive root modulo an odd prime $p>3$.
	\item We show that $h_q(n(q-1))=0$ under natural coprimality conditions via constacyclic codes.
	\item We provide new method yielding $h_q((q^k-1)/(q-1)) = k-1$ under suitable arithmetic assumptions.
	\item We also characterize several families of $n$ for which $h_q(n)=0$.
\end{itemize}

This paper is organized as follows. 
\Cref{sec-pre} introduces preliminaries on cyclic codes and cyclically covering subspaces. 
In \Cref{sec-binary}, we prove the main result $h_2(2p) = 2$ when $2$ is a primitive root modulo the odd prime $p > 3$. 
Further applications by constacyclic codes and support weight distribution are given in \Cref{sec-two-app}. 
In \Cref{sec-ICC} presented several families of $n$ such that $h_q(n) = 0$ by irreducible cyclic codes. 
We conclude the paper in \Cref{sec-conclusion}.

\section{Preliminaries}\label{sec-pre}

\subsection{Cyclic codes}

Recall some basic definitions and notations in coding theory \cite{Huffman-Pless}. 

Let $\mathbb{F}_q^n$ be an $n$-dimensional vector space over the finite field $\mathbb{F}_q$, where $q$ is a prime power and $n$ is a positive integer. 
A $q$-ary $[n,k]$ \textit{linear code} $\mathcal{C}$ is a $k$-dimensional subspace of $\mathbb{F}_q^n$. 
The elements (vectors) in $\mathcal{C}$ are called \textit{codewords}. 
A linear code $\mathcal{C}$ is called \textit{cyclic} if for each codeword $\vec{u} = (u_0, u_1, \dots, u_{n-1}) \in \mathcal{C}$, we have $\sigma(\vec{u}) = (u_{n-1}, u_0, \dots, u_{n-2}) \in \mathcal{C}$. 
Consider the following map $\pi: \mathbb{F}_q^n \to \mathbb{F}_q[x] /  \langle (x^n - 1) \rangle$ defined as 
\[
(u_0, u_1, \dots, u_{n-1}) \mapsto u_0 + u_1 x + \dots + u_{n-1} x^{n-1}. 
\]
We can identify the codeword $\vec{u} \in \mathcal{C}$ with the polynomial $u_0 + u_1 x + \dots + u_{n-1} x^{n-1} \in \mathbb{F}_q[x] / \langle (x^n - 1) \rangle$.  
Then the linear code $\mathcal{C}$ is cyclic if and only if $\mathcal{C}$ is an ideal of $\mathbb{F}_q[x] / \langle (x^n - 1) \rangle$. 
The polynomial $g(x) \in \mathcal{C}$ is a \textit{generator polynomial} if $\langle g(x) \rangle = \mathcal{C}$. 
The cyclic code $\mathcal{C}$ is called \textit{irreducible} if $(x^n - 1)/g(x)$ is irreducible over $\mathbb{F}_q$. 

The \textit{support} of the vector $\vec{u} \in \mathbb{F}_q^n$ is defined as 
\[ \supp(\vec{u}) = \{ i: u_i \neq 0 \}. \]
The \textit{Hamming weight} of $\vec{u}$ is the size of its support, denoted by $\wt(\vec{u}) = |\supp(\vec{u})|$. 
The sequence $(A_0, A_1, \dots, A_n)$ of the code $\mathcal{C}$ is called the \textit{weight distribution} of $\mathcal{C}$, where 
\[ A_i = |\{ \vec{c} \in \mathcal{C}: \wt(\vec{c}) = i \}|. \]
The \textit{Hamming distance} of the vectors $\vec{u}$ and $\vec{v}$ is $\d(\vec{u}, \vec{v}) = \wt(\vec{u} - \vec{v})$. 
And the \textit{minimum distance} of the code $\mathcal{C}$ is defined as 
\[ \d(\mathcal{C}) = \min \{ \d(\vec{u}, \vec{v}): \vec{u}, \vec{v} \in \mathcal{C}, \vec{u} \neq \vec{v} \}. \]

\subsection{Cyclically covering subspaces}

Let $\sigma$ be the cyclic shift from $\mathbb{F}_q^n$ to $\mathbb{F}_q^n$ defined as above, and $\sigma^i$ be the map of $\sigma$ repeated $i$ times, where $0 \leqslant i \leqslant n-1$. 
A subspace $U$ of $\mathbb{F}_q^n$ is called \textit{cyclically covering} if 
\[ \bigcup_{i=0}^{n-1} \sigma^i(U) = \mathbb{F}_q^n. \]
For each positive integer $n$, let $h_q(n)$ be the largest possible co-dimension of a cyclically covering subspace of $\mathbb{F}_q^n$. 

In general, it is difficult to determine the values of $h_q(n)$. 
According to \cite{JCTA-AGJ,EJC-CER}, we have the following results. 
\begin{thm} \label{thm-known-1}
	Let $q$ be a power of the prime $p$, and $m,n$ be positive integers. 
	\begin{enumerate}[label=$\mathrm{(\arabic*)}$]
		\item $h_q(mn) \geqslant h_q(m) + h_q(n)$, 
		\item $h_q(pn) \leqslant p h_q(n)$, 
		\item $h_q(n) \leqslant \lfloor \log_q(n) \rfloor$, 
		\item $h_2(\ell) = 2$ if $\ell > 3$ is a prime for which $2$ as a primitive root. 
	\end{enumerate}
\end{thm}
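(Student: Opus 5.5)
Theorem~\ref{thm-known-1} collects four known facts, so my plan is to give a short argument for each of (1)--(3) and to reduce (4) to the characterization of $h_q(n)=0$ by cyclic codes quoted in the introduction. For (4) I do not expect a self-contained proof here; I would defer to \cite{JCTA-AGJ}.

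\emph{Part (1).} I would use a tensor-product construction. Identify $\mathbb{F}_q^{mn}$ with $\mathbb{F}_q^m\otimes\mathbb{F}_q^n$ through the Chinese-remainder-type index map $i\mapsto (i \bmod m,\, i \bmod n)$. This works cleanly only when $\gcd(m,n)=1$, so for the general case I would use the other decomposition $i = a + m b$. Then a vector of length $mn$ is the same as an $m\times n$ array $(v_{a,b})$, and $\sigma$ acts on it by shifting $a$ and carrying into $b$.

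Take cyclically covering subspaces $U_1\subseteq\mathbb{F}_q^m$ and $U_2\subseteq\mathbb{F}_q^n$ of co-dimensions $h_q(m)$ and $h_q(n)$. Define $U$ as the set of arrays satisfying two conditions:
\begin{itemize}[label={}]
\item (i) the row sums $\bigl(\sum_a v_{a,b}\bigr)_b$ lie in $U_2$, which imposes $h_q(n)$ linear conditions;
\item (ii) a single ``first column'' vector lies in $U_1$, which imposes $h_q(m)$ linear conditions.
\end{itemize}
To check covering for a given $\vec v$, first choose the shift by a multiple of $m$ that puts the row-sum vector into $U_2$. Then use a shift by $r<m$ to fix the $\mathbb{F}_q^m$ part. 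The hard point is that the small shift must not destroy the first condition.

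\emph{Part (2).} Let $q=p^e$ be a power of the prime $p$. I would exploit $x^{pn}-1=(x^n-1)^p$ together with the filtration of $\mathbb{F}_q^{pn}$ by $\ker(\sigma^n-1)^j$. If $U$ is cyclically covering with co-dimension $c$, then the image of $U$ under the map $\vec v\mapsto$ ``fold mod $n$'' should yield a covering subspace in each of the $p$ layers. This would give $c\le p\,h_q(n)$.

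\emph{Part (3).} This is a counting argument. The union of the $n$ subspaces $\sigma^i(U)$ has at most $n\,q^{\dim U}$ elements, and it must contain all $q^n$ vectors. Hence $q^{n-\dim U}\le n$, that is, the co-dimension is at most $\lfloor\log_q n\rfloor$. This part is immediate.

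\emph{Part (4).} The lower bound would come from an explicit co-dimension-$2$ subspace built from the structure $x^\ell-1=(x-1)\Phi_\ell(x)$ with $\Phi_\ell$ irreducible over $\mathbb{F}_2$. The upper bound $h_2(\ell)\le 2$ needs the theorem of Aaronson--Groenland--Johnston.

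\emph{Main obstacle.} I expect the hard steps to be the precise construction in (1) and all of (4). For these I would simply cite \cite{EJC-CER,JCTA-AGJ}, since they are stated here as known results.
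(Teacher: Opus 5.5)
Part (3) is complete. Deferring (4) to \cite{JCTA-AGJ} is what the paper does; in fact the paper proves none of (1)--(4) and only cites \cite{JCTA-AGJ,EJC-CER}. The real problem is the construction you sketch for (1). It fails as specified, and the difficulty you flag is not a technicality: it shows that condition (ii) is the wrong condition.

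Neither your row sums $\psi(\vec v)_b=\sum_a v_{a+mb}$ nor your first column $(v_{a,0})_a=(v_0,\dots,v_{m-1})$ is compatible with a shift by $r$ when $m\nmid r$. Such a shift moves the block boundaries. So once a multiple of $m$ has put $\psi(\vec v)$ into $U_2$, no small shift is guaranteed to keep it there. Likewise, the first column of $\sigma^s\vec v$ is an arbitrary length-$m$ window of $\vec v$, which need not lie in $U_1$ for any $s$. Concretely, take $q=2$, $m=3$, $n=2$, $U_1=\{\vec x\in\mathbb{F}_2^3: x_0=x_1\}$ (cyclically covering, co-dimension $1=h_2(3)$) and $U_2=\mathbb{F}_2^2$ (since $h_2(2)=0$). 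Your subspace is then $\{\vec v\in\mathbb{F}_2^6: v_0=v_1\}$, and $(1,0,1,0,1,0)$ has no cyclic shift in it.

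The missing idea has two parts. First, replace the first column by the fold $\varphi(\vec v)_a=\sum_{b=0}^{n-1}v_{a+mb}$, $a\in\mathbb{Z}_m$. It satisfies $\varphi\circ\sigma=\sigma_m\circ\varphi$, where $\sigma_m$ is the shift on $\mathbb{F}_q^m$, while your row sums satisfy $\psi\circ\sigma^m=\sigma_n\circ\psi$. Second, reverse the order of the two shifts. Set $U=\{\vec v:\varphi(\vec v)\in U_1,\ \psi(\vec v)\in U_2\}$.
\begin{itemize}
\item First pick $0\le r<m$ with $\varphi(\sigma^r\vec v)=\sigma_m^r\varphi(\vec v)\in U_1$.
\item Since $\varphi(\sigma^{r+mk}\vec v)=\sigma_m^r\varphi(\vec v)$ for every $k$, this condition survives every further shift by a multiple of $m$.
\item Then pick $0\le k<n$ with $\psi(\sigma^{r+mk}\vec v)=\sigma_n^k\psi(\sigma^r\vec v)\in U_2$.
\end{itemize}

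You also have not checked the co-dimension, and the conditions are not independent. The map $(\varphi,\psi)$ sends $\mathbb{F}_q^{mn}$ onto the hyperplane $H=\{(\vec x,\vec y):\sum_a x_a=\sum_b y_b\}$. Its kernel, the $m\times n$ arrays with all row and column sums zero, has dimension $(m-1)(n-1)$. Moreover $U_1\times U_2\not\subseteq H$, because a cyclically covering $U_1$ contains some $\vec e_i$. Hence $U$ has co-dimension exactly $h_q(m)+h_q(n)$.

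For (2) your filtration idea is right, but ``the image of $U$ under fold mod $n$'' only handles the top layer. Let $W_j=\ker(\sigma^n-1)^j$. Because $x^{pn}-1=(x^n-1)^p$, each $W_j/W_{j-1}\cong\mathbb{F}_q[x]/\langle x^n-1\rangle$, with $\sigma$ acting as the length-$n$ shift. The image of $U\cap W_j$ in this layer is cyclically covering, because $W_j$ is $\sigma$-invariant. The $p$ layer co-dimensions, each at most $h_q(n)$, add up to the co-dimension of $U$.
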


As for $h_q(n) = 0$, we have 
\begin{thm}[\texorpdfstring{\cite{arxiv-LY-1}}{PDF string}]
	Let $q$ be a power of the prime $p$, and $n$ be a positive integer, $\gcd(n,q) = 1$.  
	Then $h_q(n) = 0$ if and only if $h_q(pn) = 0$. 
\end{thm}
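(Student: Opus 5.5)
The statement follows directly from parts (1) and (2) of \Cref{thm-known-1}, together with the trivial observation that $h_q(m) \geqslant 0$ for every positive integer $m$. That observation holds because $U = \mathbb{F}_q^m$ is itself cyclically covering, so the minimum in the definition of $h_q(m)$ is at most $m$. I will prove each implication separately.

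For the implication $h_q(n) = 0 \Rightarrow h_q(pn) = 0$, I apply part (2) of \Cref{thm-known-1} with the prime $p$ (the characteristic of $\mathbb{F}_q$). This gives
\[ 0 \leqslant h_q(pn) \leqslant p\, h_q(n) = 0, \]
hence $h_q(pn) = 0$.

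For the converse $h_q(pn) = 0 \Rightarrow h_q(n) = 0$, I apply the super-additivity in part (1) of \Cref{thm-known-1} with $m = p$:
\[ 0 = h_q(pn) \geqslant h_q(p) + h_q(n) \geqslant h_q(n) \geqslant 0. \]
The middle inequality uses $h_q(p) \geqslant 0$. This forces $h_q(n) = 0$.

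There is no genuine obstacle here, since all the content sits in parts (1) and (2) of \Cref{thm-known-1}. The only point to check is that those parts are stated for arbitrary positive integers $m,n$, and with $p$ the characteristic, which they are. The hypothesis $\gcd(n,q) = 1$ is therefore not actually needed for the equivalence. It matters only in the intended application, where $n$ is the $p$-free part of a length and one reduces questions about $h_q$ to lengths coprime to $q$ (iterating the equivalence gives $h_q(p^a n) = 0 \iff h_q(n) = 0$). If instead one wanted a self-contained argument not relying on part (2), the hard direction would be $h_q(n)=0 \Rightarrow h_q(pn)=0$. For that I would use the coding-theoretic criterion: every cyclic code of length $n$ has a full-weight codeword, and I would lift this to repeated-root cyclic codes of length $pn$ through the decomposition $x^{pn}-1 = (x^n-1)^p$.
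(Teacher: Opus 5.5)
Your proof is correct and essentially the same as the paper's. The remark immediately following the statement derives the equivalence from the sandwich $h_q(n) \leqslant h_q(pn) \leqslant p\,h_q(n)$, obtained from parts (1) and (2) of \Cref{thm-known-1}, and likewise notes that $\gcd(n,q)=1$ is not needed. Your explicit observation that $h_q(p) \geqslant 0$ (because the whole space is cyclically covering) is the only step the paper leaves implicit.
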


\begin{rem}
	In fact, by \Cref{thm-known-1}, we have 
	\[ h_q(n) \leqslant h_q(pn) \leqslant p h_q(n). \]
	Thus, we can also obtain $h_q(n) = 0$ if and only if $h_q(pn) = 0$. 
	Here, we do not require $\gcd(n,q) = 1$. 
\end{rem}

From now on, we assume that $n$ and $q$ are coprime when determine $n$ such that $h_q(n) = 0$. 

\begin{thm}
	Let $\gcd(n,p) = 1$ and $q$ be a power of the prime $p$. 
	Then $h_q(n) = 0$ if 
	\begin{enumerate}[label=$\mathrm{(\arabic*)}$]
		\item $n \leqslant q$ (\cite[Corollary 6.4]{JCTA-AGJ})
		\item $n \mid (q-1)$  (\cite[Theorem 14]{EJC-CER})
		\item $n = q+1$ when $q$ is odd  (\cite[Corollary 3.8]{FFTA-H})
		\item $n = 2(q-1)$ when $4 \mid (q+1)$  (\cite[Corollary 3.9]{FFTA-H})
		\item $n > q$ is a prime with the prime $q$ as a primitive root  (\cite[Theorem 6.5]{JCTA-AGJ})
		\item $n = 2 \ell$ when $\gcd(q,\ell-1) = 1$ and $\ell$ is a prime such that $q$ is a primitive root modulo $2\ell$  (\cite[Corollary 3.11]{FFTA-H})
	\end{enumerate}
\end{thm}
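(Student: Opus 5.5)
Since the introduction notes that (for $\gcd(n,q)=1$) $h_q(n)=0$ holds if and only if every nonzero cyclic code of length $n$ over $\mathbb{F}_q$ contains a codeword of weight $n$, I will verify each item through this criterion. Every nonzero cyclic code contains a minimal ideal of $\mathbb{F}_q[x]/\langle x^n-1\rangle$, that is, an irreducible cyclic code. It therefore suffices to exhibit a full-weight codeword in each irreducible cyclic code of length $n$, or, where possible, in an arbitrary nonzero cyclic code.

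For (1), let $\mathcal{C}\neq\{\vec 0\}$ be cyclic. For each $i$, the set $H_i=\{\vec c\in\mathcal{C}: c_i=0\}$ is a proper subspace of $\mathcal{C}$. Indeed, if every codeword vanished at coordinate $i$, then by cyclicity every codeword would vanish at all coordinates. By Luh's result, a vector space over $\mathbb{F}_q$ is not a union of at most $q$ proper subspaces. So if $n\leqslant q$, then $\bigcup_i H_i\neq\mathcal{C}$, and any codeword outside this union has weight $n$. Item (2) follows at once, since $n\mid q-1$ gives $n<q$.

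For (3), $q\equiv -1\pmod{q+1}$, so the $q$-cyclotomic cosets modulo $q+1$ are $\{j,-j\}$. The irreducible codes therefore have dimension $1$ (the cosets $\{0\}$ and $\{(q+1)/2\}$, which give the constant and alternating vectors, both of full weight) or dimension $2$. A $2$-dimensional irreducible code consists of the vectors $c_i=\mathrm{Tr}_{\mathbb{F}_{q^2}/\mathbb{F}_q}(a\gamma^{i})$ with $a\in\mathbb{F}_{q^2}$, where $\gamma$ has order $d\mid q+1$. For $a\neq 0$ we have $c_i=0$ if and only if $(a\gamma^i)^{q-1}=-1$, that is, $a^{q-1}=-\gamma^{2i}$. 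As $a$ ranges over $\mathbb{F}_{q^2}^*$, the value $a^{q-1}$ ranges over all $q+1$ roots of unity of order dividing $q+1$. The forbidden set $\{-\gamma^{2i}\}$ has at most $d/\gcd(2,d)\leqslant (q+1)/2$ elements, because $q+1$ is even. Hence some $a$ avoids it, and the corresponding codeword has full weight.

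For (5), $x^n-1=(x-1)\Phi_n(x)$ with $\Phi_n$ irreducible, so the only irreducible codes are the repetition code and the $[n,n-1]$ zero-sum code. The zero-sum code contains a full-weight vector as soon as $q\geqslant 3$: choose $n-1$ nonzero entries whose sum is nonzero, and set the last entry to minus that sum. For $q=2$ this fails, consistently with $h_2(\ell)=2$ in \Cref{thm-known-1}, so the item must be read with $q\geqslant 3$. For (4) and (6) I would repeat the argument of (3) on the cyclotomic structure.
\begin{itemize}
\item[(4)] Write $x^{2(q-1)}-1=(x^{q-1}-1)(x^{q-1}+1)$. The first factor gives one-dimensional codes $(1,\alpha,\alpha^2,\dots)$ with $\alpha\in\mathbb{F}_q^*$, which have full weight. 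The roots of the second factor lie in $\mathbb{F}_{q^2}\setminus\mathbb{F}_q$ and form cosets $\{j,qj\}$ of size $2$. I would then count the solutions of $\mathrm{Tr}(a\gamma^i)=0$ as above, using $4\mid q+1$ to bound the size of the forbidden set of values of $a^{q-1}$.
\item[(6)] Write $x^{2\ell}-1=(x-1)(x+1)\Phi_\ell(x)\Phi_\ell(-x)$, with both $\Phi_\ell(\pm x)$ irreducible because $q$ is primitive modulo $2\ell$. The $(\ell-1)$-dimensional codes are described by zero-sum conditions on the even and odd positions, twisted by signs. I would construct full-weight vectors directly, using $\gcd(q,\ell-1)=1$ to make a suitable sum of $\ell-1$ nonzero entries achievable.
\end{itemize}
I expect the main obstacle to be (4) and (6): there the counting of zeros in the trace representations, or the explicit constructions, must genuinely use the congruence hypotheses, and I have not yet checked that the resulting bounds are strict.
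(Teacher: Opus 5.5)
Your arguments for (1), (2), (3) and (5) are correct. They take a different route from the paper, which gives no proof of this theorem and only cites the six items from Aaronson--Groenland--Johnston, Cameron--Ellis--Raynaud and Huang. You instead derive every item from the full-weight criterion applied to minimal ideals. That pays off in (5). You are right that the item as transcribed is false for $q=2$: for example $h_2(3)=1$ and $h_2(5)=2$, because the even-weight code of odd length has no word of full weight. So $q\geqslant 3$ must be added. A minor point in (1): what you need is the elementary count $|\bigcup_{i<n}H_i|\leqslant n(q^{k-1}-1)+1\leqslant q^k-q+1<q^k$ for $n\leqslant q$, with $k=\dim\mathcal{C}$. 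The statement the paper quotes from Luh is the opposite direction.

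The gap is that (4) and (6) are not proved; you only outline a plan and say you have not checked the bounds. Both close at once with your own tools, and neither congruence hypothesis is needed.

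In (4), a root $\gamma$ of $x^{q-1}+1$ satisfies $\gamma^{q-1}=-1$. So for $a\neq 0$, $\mathrm{Tr}(a\gamma^i)=0$ if and only if $a^{q-1}=(-1)^{i+1}$. The forbidden set is just $\{1,-1\}$, while $a^{q-1}$ takes all $q+1\geqslant 4$ values. Only the oddness of $q$, forced by $\gcd(n,p)=1$, is used; $4\mid q+1$ plays no role. Equivalently, each such code is $\{(c_0,c_1,\mu c_0,\mu c_1,\dots)\}$ with $\mu=\gamma^{\pm 2}$.

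In (6), take $\ell$ odd. The two $(\ell-1)$-dimensional codes are generated by $(x-1)(x^\ell+1)$ and $(x+1)(x^\ell-1)$. In block form they are $\{(\vec d,\vec d):\sum_i d_i=0\}$ and $\{(-\vec d,\vec d):\sum_i(-1)^i d_i=0\}$. This block structure is the precise description; sign-twisted zero-sum conditions on the even and odd positions alone would only cut out a space of dimension $2\ell-2$. Since $q$ is a unit modulo $2\ell$, it is odd, so $q\geqslant 3$. The perturbation you used in (5) then gives a full-weight $\vec d$ in both cases; for the second, take $d_i=(-1)^i e_i$ with $\vec e$ a full-weight zero-sum vector. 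The hypothesis $\gcd(q,\ell-1)=1$ is only what makes the naive choice $(1,\dots,1,-(\ell-1))$ work, and can be dropped. The case $\ell=2$ is separate but trivial: the $2$-dimensional code of length $4$ is $\{(c_0,c_1,-c_0,-c_1)\}$.

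With these additions your proposal proves the theorem, with $q\geqslant 3$ in (5), and proves (4) and (6) under weaker hypotheses than stated.
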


\subsection{Relations between cyclic codes and cyclically covering subspaces}

Recall the concept ``work together'' defined in \cite{JCTA-AGJ}. 

\begin{defi} \label{def-ccs-vector}
	The vector $\vec{v}$ \textit{works} if for every vector $\vec{u} \in \mathbb{F}_{q}^{n}$, there exists a $k \in \{ 0, 1, \dots, n-1 \}$ such that 
	\[ \vec{v} \cdot \sigma^k \vec{u} = 0. \]
	Let $\vec{v}^{(1)}$, $\vec{v}^{(2)}$, $\cdots$, $\vec{v}^{(m)}$ be $m$ linearly independent vectors in $\mathbb{F}_{q}^{n}$. 
	Then these vectors \textit{work together} if for every vector $\vec{u} \in \mathbb{F}_{q}^{n}$, there exists a $k \in \{ 0, 1, \dots, n-1 \}$ such that 
	\[ \vec{v}^{(1)} \cdot \sigma^k \vec{u} = \vec{v}^{(2)} \cdot \sigma^k \vec{u} = \cdots = \vec{v}^{(m)} \cdot \sigma^k \vec{u} = 0. \]
\end{defi}

\begin{defi}\label{def-ccs-poly}
	The polynomial $\vec{f}$ \textit{works} if for every polynomial $\vec{g} \in \mathbb{F}_{q}[x] / \langle x^n - 1 \rangle$, there exists a $k \in \{ 0, 1, \dots, n-1 \}$ such that the coefficient of $x^k$ in the product $\vec{f} \cdot \vec{g}$ is $0$. 
	The polynomials $\vec{f}^{(1)}$, $\vec{f}^{(2)}$, $\cdots$, $\vec{f}^{(m)}$ in $\mathbb{F}_{q}[x] / \langle x^n - 1 \rangle$ \textit{work together} if for every polynomial $\vec{g} \in \mathbb{F}_{q}[x] / \langle x^n - 1 \rangle$, there exists a $k \in \{ 0, 1, \dots, n-1 \}$ such that the coefficients of $x^k$ in the products
	\[ \vec{f}^{(1)} \cdot \vec{g}, \vec{f}^{(2)} \cdot \vec{g}, \cdots, \vec{f}^{(m)} \cdot \vec{g} \]
	are all $0$. 
\end{defi}

If $(\vec{f}^{(i)} \cdot \vec{g})$ is regarded as a vector of length $n$, then we will use $(\vec{f}^{(i)} \cdot \vec{g})_j$ to denote the $j$-th coordinate, $0 \leqslant j \leqslant n-1$. 

\begin{rem}
	\Cref{def-ccs-vector} and \Cref{def-ccs-poly} are equivalent. 
	Moreover, for each $n$, we have $h_q(n) = m$ if and only if there exist at most $m$ linearly independent vectors $\vec{v}^{(1)}$, $\vec{v}^{(2)}$, $\cdots$, $\vec{v}^{(m)}$ in $\mathbb{F}_{q}^{n}$ that work together. 
\end{rem}

By the proof of \cite[Theorem 6.5]{JCTA-AGJ}, the vector $\vec{v}$ over $\mathbb{F}_q$ fails to work if and only if the corresponding polynomial $\vec{v}(x)$ is a factor (when modulo $x^n - 1$) of a polynomial with no zero coefficients. 
Then the ideal $I = \langle \vec{v}(x) \rangle$ (in fact, a cyclic code) of $\mathbb{F}_{q}^{n} / \langle x^n - 1 \rangle$ contains a vector of Hamming weight $n$. 
Therefore, we have the following necessary and sufficient condition such that $h_q(n) = 0$, which has also been proved via Discrete Fourier Transforms in \cite{arxiv-LY-2}. 

\begin{thm}
	For each $n$, we have $h_q(n) = 0$ if and only if every cyclic code of length $n$ over $\mathbb{F}_q$ contains a codeword of Hamming weight $n$. 
\end{thm}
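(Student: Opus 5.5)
We reduce the statement to the single-vector notion of ``working'' from \Cref{def-ccs-vector}. The quantity $h_q(n)$ is the largest $m$ such that some $m$ linearly independent vectors work together. So $h_q(n)=0$ means that no single nonzero vector works: if $m\ge 1$ vectors work together, then in particular $\vec{v}^{(1)}$ works. Conversely, one working vector $\vec{v}$ gives the cyclically covering subspace $\vec{v}^{\perp}$ of co-dimension $1$. Indeed, $\vec{u}\in\sigma^{-k}(\vec{v}^\perp)$ exactly when $\vec{v}\cdot\sigma^k\vec{u}=0$, and $\{\sigma^{-k}\}$ runs over the same set of shifts as $\{\sigma^k\}$. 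Thus it suffices to prove:
\[
\vec{v}\neq \vec{0} \text{ works} \iff \langle \vec{v}(x)\rangle \text{ contains no codeword of weight } n,
\]
and to note that every cyclic code is generated by a single polynomial.

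We work in the polynomial language of \Cref{def-ccs-poly}. Put $\vec{f}=\vec{v}(x)$. Then $\vec{f}$ works iff for every $\vec{g}\in\mathbb{F}_q[x]/\langle x^n-1\rangle$ the product $\vec{f}\vec{g}$ has some zero coefficient, i.e.\ $\wt(\vec{f}\vec{g})<n$. I will check that the inner products $\vec{v}\cdot\sigma^k\vec{u}$, $k=0,\dots,n-1$, are exactly the coefficients of $\vec{f}\cdot\vec{g}$, where $\vec{g}$ is the reversal $\vec{g}(x)=\sum_i u_{i}x^{-i}$ (indices mod $n$). Up to a fixed reindexing $k\mapsto -k$, the $k$-th coefficient of $\vec{f}\vec{g}$ is $\sum_i v_i u_{i-k}$, which equals $\vec{v}\cdot\sigma^{k}\vec{u}$. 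Since $\vec{u}\mapsto\vec{g}$ is a bijection on $\mathbb{F}_q[x]/\langle x^n-1\rangle$, this gives the equivalence of the two definitions asserted in the remark. This index bookkeeping is the only fiddly step, and the direction of the shift is irrelevant because we quantify over all $k$.

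Now the set $\{\vec{f}\vec{g}:\vec{g}\}$ is precisely the ideal $\langle \vec{f}\rangle$, which is a cyclic code. Hence $\vec{f}$ fails to work iff $\langle\vec{f}\rangle$ contains a word of full weight $n$. Combining with the first paragraph, $h_q(n)=0$ iff every nonzero $\vec{f}$ fails to work, iff every principal ideal $\langle\vec{f}\rangle$ with $\vec{f}\neq 0$ contains a weight-$n$ word. Every ideal of $\mathbb{F}_q[x]/\langle x^n-1\rangle$ is principal, since $\mathbb{F}_q[x]$ is a PID. Therefore this is the same as saying that every nonzero cyclic code of length $n$ has a codeword of weight $n$. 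The zero code is tacitly excluded, as it corresponds to $\vec{v}=\vec{0}$, which is not a legitimate working vector. No coprimality assumption on $n$ and $q$ is needed anywhere.
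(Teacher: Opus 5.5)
Your proof is correct and follows essentially the same route as the paper. Both reduce $h_q(n)=0$ to the non-existence of a single nonzero working vector, then note that $\vec{v}$ fails to work exactly when the principal ideal $\langle \vec{v}(x)\rangle$ contains a word of weight $n$; the paper takes this from the proof of \cite[Theorem 6.5]{JCTA-AGJ}, while you verify it directly via the reversal $\vec{u}\mapsto\sum_i u_i x^{-i}$, and you also correctly note that the zero code must be excluded and that no coprimality of $n$ and $q$ is needed.
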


\section{\texorpdfstring{$h_2(2p) = 2$}{}}\label{sec-binary}

Let $p \geqslant 5$ be a prime with $2$ as a primitive root. 
It is known that $h_2(p) = 2$ by \cite[Theorem 1.2]{JCTA-AGJ}, and then by \cite[Theorem 2.1 \& Theorem 2.3]{JCTA-AGJ}, we have 
\[ 2 = h_2(2) + h_2(p) \leqslant h_2(2p) \leqslant 2 h_2(p) = 4. \]
In this section, we will show 

\begin{thm}\label{thm-2p}
	If $p \geqslant 5$ is an odd prime with $2$ as a primitive root, then $h_2(2p) = 2$. 
\end{thm}

Assume that $x^{p} - 1 = (x-1) Q_p(x)$, where $Q_p(x) = 1 + x + x^2 + \dots + x^{p-1}$ is irreducible over $\mathbb{F}_2$ (see \cite{Book-Finite_Fields} for more details). 
There are only $2$ non-trivial cyclic codes of length $p$ over $\mathbb{F}_2$, i.e., 
\[ \mathcal{C}_1 = \langle (x-1) \rangle, \quad \mathcal{C}_p = \langle Q_p(x) \rangle. \]
Moreover, the vector $\vec{v} \in \mathbb{F}_{2}^{p}$ works if and only if $\vec{v} \in \mathcal{C}_1$. 
Next, consider the following factorization: 
\[ x^{2p} - 1 = (x-1)^2 (Q_p(x))^2. \]
The binary cyclic codes of length $n = 2p$ are precisely the ideals  
\[ C = \langle g(x) \rangle \subseteq \mathbb{F}_2[x] / (x^{2p} - 1), \]
where  
$g(x) \mid (x-1)^2 Q_p(x)^2$.  
Write \(g(x) = (x-1)^{e_0} Q_p(x)^{e_1}\) with  
\(0 \le e_0, e_1 \le 2\).  
Let $\mathcal{C}_{e_0, e_1}$ be the cyclic code generated by $g(x)$. 
Then by \cite[Remark, pp.10]{JCTA-AGJ}, the polynomial $\vec{f} \in \mathbb{F}_2^{2p}[x]/(x^{2p}-1)$ works if and only if $(x^2 + 1) \mid \vec{f}$. 
That is to say, the vector $\vec{f} \in \mathbb{F}_{2}^{2p}$ works if and only if $\vec{f} \in \mathcal{C}_{2,0}$ or $\mathcal{C}_{2,1}$, i.e., the polynomial $\vec{f}$ is divisible by $(x-1)^2$ or $(x-1)^2 Q_p(x) = (x-1)(x^p - 1)$. 

%All nine codes, together with their basic parameters and
%whether they contain a codeword of full support (Hamming weight \(2p\)), are listed below.
%\begin{table}[htbp]
%	\centering
%	\caption{Nine binary cyclic codes of length \(2p\)}
%	\label{tab:9codes}
%	\begin{tabular}{cccc}
%		\toprule
%		Code & Generator \(g(x)\) & Dimension & Contains a word of weight \(2p\)? \\
%		\midrule
%		\(\mathcal{C}_{0,0}\) & \(1\) & \(2p\) & Yes \\
%		\(\mathcal{C}_{1,0}\) & \(x-1\) & \(2p-1\) & Yes \\
%		\(\mathcal{C}_{2,0}\) & \((x-1)^2\) & \(2p-2\) & No \\
%		\(\mathcal{C}_{0,1}\) & \(Q_p(x)\) & \(p+1\) & Yes \\
%		\(\mathcal{C}_{1,1}\) & \((x-1)Q_p(x) = x^p - 1\) & \(p\) & Yes \\
%		\(\mathcal{C}_{2,1}\) & \((x-1)^2 Q_p(x)\) & \(p-1\) & No \\
%		\(\mathcal{C}_{0,2}\) & \(Q_p(x)^2\) & \(2\) & Yes \\
%		\(\mathcal{C}_{1,2}\) & \((x-1)Q_p(x)^2\) & \(1\) & Yes \\
%		\(\mathcal{C}_{2,2}\) & \(x^{2p} - 1\) (zero ideal) & \(0\) & No \\
%		\bottomrule
%	\end{tabular}
%\end{table}
%
%By \cite[Theorem 1]{TIT-L}, we present the codes of length $2p$ in Table \ref{tab:9codes}. 
%So, the vector $\vec{v} \in \mathbb{F}_{2}^{2p}$ works if and only if $\vec{v} \in \mathcal{C}_{2,0}$ or $\mathcal{C}_{2,1}$, i.e., the polynomial $\vec{v}(x)$ is divisible by $(x-1)^2$ or $(x-1)^2 Q_p(x) = (x-1)(x^p - 1)$. 

One can easily check that $\mathcal{C}_{2,1} = \{ (\vec{c}, \vec{c}) : \vec{c} \in \mathcal{C}_1 \}$, and $\mathcal{C}_{2,1} \subseteq \mathcal{C}_{2,0}$. 

To prove \Cref{thm-2p}, we need several lemmas. 

\begin{lem}
	For each $\vec{f} \in \mathcal{C}_{2,1} \setminus \{ \vec{0} \}$ and each $\vec{g} \in \mathcal{C}_{2,0} \setminus \mathcal{C}_{2,1}$, we have $\langle \vec{f} \rangle = \mathcal{C}_{2,1}$ and $\langle \vec{g} \rangle = \mathcal{C}_{2,0}$. 
\end{lem}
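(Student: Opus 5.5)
The plan is to use the structure of ideals in $R = \mathbb{F}_2[x]/\langle x^{2p}-1\rangle$. By the factorization $x^{2p}-1=(x-1)^2Q_p(x)^2$, every ideal of $R$ has the form $\mathcal{C}_{e_0,e_1}=\langle (x-1)^{e_0}Q_p(x)^{e_1}\rangle$ with $0\leqslant e_0,e_1\leqslant 2$. Moreover, for any $\vec{h}\in R$ the principal ideal $\langle \vec{h}\rangle$ equals $\langle \gcd(\vec{h}(x),x^{2p}-1)\rangle$. This is the standard fact that the generator polynomial of $\langle \vec{h}\rangle$ is this gcd, and it follows from B\'ezout over $\mathbb{F}_2[x]$. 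So it suffices, in each case, to compute the gcd of the given element with $x^{2p}-1$.

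For $\vec{f}\in\mathcal{C}_{2,1}\setminus\{\vec{0}\}$, write $\vec{f}=(x-1)^2Q_p(x)\,a(x)$ with $\deg \vec{f}<2p$. Since $\deg\big((x-1)^2Q_p\big)=p+1$, we get $\deg a\leqslant p-2<\deg Q_p$. Also $a\neq 0$. Because $Q_p$ is irreducible of degree $p-1$ (this uses that $2$ is a primitive root modulo $p$), $Q_p\nmid a$. We also need to know how often $x-1$ divides $a$. Dividing further by $(x-1)$ would put $\vec{f}$ in $\mathcal{C}_{3,1}$, but in the ring only the exponent up to $2$ matters. Hence $\gcd(\vec{f},x^{2p}-1)=(x-1)^{\min(2,2+\nu)}Q_p^{1}=(x-1)^2Q_p$, where $\nu$ is the multiplicity of $x-1$ in $a$. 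So $\langle\vec{f}\rangle=\mathcal{C}_{2,1}$.

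For $\vec{g}\in\mathcal{C}_{2,0}\setminus\mathcal{C}_{2,1}$, take the reduced representative with $\deg\vec{g}<2p$. Then $(x-1)^2\mid\vec{g}$, so the gcd is $(x-1)^2Q_p^{e}$ with $e\in\{0,1,2\}$. If $e\geqslant1$, then $(x-1)^2Q_p\mid \vec{g}$, which means $\vec{g}\in\mathcal{C}_{2,1}$, a contradiction. Therefore $e=0$, the gcd is $(x-1)^2$, and $\langle\vec{g}\rangle=\mathcal{C}_{2,0}$.

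The only delicate point is the gcd/generator fact in the non-semisimple ring $R$, since here $2\mid 2p$ and the roots are repeated. I will argue it directly. Let $d=\gcd(\vec{h},x^{2p}-1)$. B\'ezout gives $d=u\vec{h}+v(x^{2p}-1)$, so $d\in\langle\vec{h}\rangle$. Conversely $d\mid\vec{h}$, so $\vec{h}\in\langle d\rangle$. Together these give $\langle\vec{h}\rangle=\langle d\rangle$. The step where irreducibility of $Q_p$ is essential is showing $Q_p\nmid a$ via the degree bound. This is where the hypothesis on $p$ enters, though in fact $\gcd$-multiplicity reasoning with the irreducible factor $Q_p$ is all that is needed.
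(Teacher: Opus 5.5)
Your proof is correct, but it takes a different route from the paper's. The paper argues purely from the lattice of ideals. The ideals of $\mathbb{F}_2[x]/\langle x^{2p}-1\rangle$ are exactly the $\mathcal{C}_{e_0,e_1}$, and inclusion reverses divisibility of the generators. Hence the only ideals inside $\mathcal{C}_{2,1}$ are $\{\vec{0}\}$ and $\mathcal{C}_{2,1}$, and the only ideals inside $\mathcal{C}_{2,0}$ are $\{\vec{0}\}$, $\mathcal{C}_{2,1}$ and $\mathcal{C}_{2,0}$. So the nonzero ideal $\langle\vec{f}\rangle$ must be $\mathcal{C}_{2,1}$. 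Likewise $\langle\vec{g}\rangle$, being nonzero and not contained in $\mathcal{C}_{2,1}$, must be $\mathcal{C}_{2,0}$.

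You instead compute the generator polynomial of each principal ideal explicitly as $\gcd(\cdot,x^{2p}-1)$, justified by B\'ezout. You then pin down the $Q_p$-exponent of that gcd in two ways: for $\vec{f}$ by the degree bound $\deg a\leqslant p-2$, and for $\vec{g}$ by the hypothesis $\vec{g}\notin\mathcal{C}_{2,1}$. The paper's route is shorter but takes the classification of ideals and the inclusion/divisibility correspondence for granted. Yours is self-contained and in effect proves the \emph{no intermediate ideal} assertion that the paper only states.

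One small correction to your closing remark. $Q_p\nmid a$ already follows from $a\neq 0$ and $\deg a<\deg Q_p$, with no need for irreducibility. Irreducibility of $Q_p$ is what upgrades this to $\gcd(a,Q_p)=1$. In the $\vec{g}$ case it is also what guarantees that the $Q_p$-part of the gcd has the form $Q_p^{e}$. That is exactly where your gcd formula uses it, and it is genuinely needed: if $Q_p$ split, $\langle\vec{f}\rangle$ could be a proper subideal of $\mathcal{C}_{2,1}$.
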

\begin{proof}
	Since $\mathcal{C}_{2,1}$ has no proper ideals, then $\langle \vec{f} \rangle = \mathcal{C}_{2,1}$. 
	As for $\mathcal{C}_{2,0}$, it only has one proper ideal $\mathcal{C}_{2,1}$. 
	So $\langle \vec{g} \rangle = \mathcal{C}_{2,0}$ since $\vec{g} \in \mathcal{C}_{2,0} \setminus \mathcal{C}_{2,1}$. 
\end{proof}

\begin{coro}
	For each $\vec{f}_1, \vec{f}_2 \in \mathcal{C}_{2,1} \setminus \{ \vec{0} \}$ and each $\vec{g}_1, \vec{g}_2 \in \mathcal{C}_{2,0} \setminus \mathcal{C}_{2,1}$, there exist polynomials $\vec{\lambda} \in \mathcal{C}_{2,1} \setminus \{ \vec{0} \}$ and $\vec{\mu} \in \mathcal{C}_{2,0} \setminus \mathcal{C}_{2,1}$ such that $\vec{f}_2 = \vec{f}_1 \vec{\lambda}$ and $\vec{g}_2 = \vec{g}_1 \vec{\mu}$. 
\end{coro}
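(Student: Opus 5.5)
The plan is to deduce both factorizations from the preceding Lemma and then move the multiplier into the required code. Adding an element of the annihilator of $\vec{f}_1$ (resp. $\vec{g}_1$) does not change the product, so the multiplier can be adjusted freely within that annihilator. Throughout, write $R=\mathbb{F}_2[x]/\langle x^{2p}-1\rangle$ and use $x^{2p}-1=(x-1)^2Q_p(x)^2$, where $(x-1)^2$ and $Q_p(x)^2$ are coprime because $Q_p(1)=p\equiv 1 \pmod 2$. By the Lemma, $\langle \vec{f}_1\rangle=\mathcal{C}_{2,1}\ni \vec{f}_2$ and $\langle \vec{g}_1\rangle=\mathcal{C}_{2,0}\ni\vec{g}_2$. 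Hence there are $\vec{a},\vec{b}\in R$ with $\vec{f}_2=\vec{f}_1\vec{a}$ and $\vec{g}_2=\vec{g}_1\vec{b}$.

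For $\vec{\mu}$ I would proceed as follows.

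1. Compute the annihilator. Write $\vec{g}_1=(x-1)^2\vec{w}$. Here $\gcd(\vec{w},Q_p)=1$, since otherwise $\vec{g}_1\in\mathcal{C}_{2,1}$. It follows that $\vec{g}_1\vec{c}=0$ if and only if $Q_p^2\mid \vec{c}$, so $\mathrm{Ann}(\vec{g}_1)=\langle Q_p(x)^2\rangle$.
2. Adjust the multiplier. Put $\vec{\mu}=\vec{b}+Q_p^2\vec{c}$, where $\vec{c}$ is chosen by the Chinese remainder theorem so that $\vec{c}\equiv -\vec{b}\,Q_p^{-2}\pmod{(x-1)^2}$. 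This makes $(x-1)^2\mid\vec{\mu}$, so $\vec{\mu}\in\mathcal{C}_{2,0}$, and it still satisfies $\vec{g}_1\vec{\mu}=\vec{g}_2$.
3. Check that $\vec{\mu}\notin\mathcal{C}_{2,1}$. If it were, then $\vec{g}_2=\vec{g}_1\vec{\mu}\in\mathcal{C}_{2,0}\mathcal{C}_{2,1}$, which is contained in the ideal generated by $(x-1)^4Q_p$. That ideal is $\mathcal{C}_{2,1}$, because $\gcd((x-1)^4Q_p,\,x^{2p}-1)=(x-1)^2Q_p$. This contradicts the choice $\vec{g}_2\notin\mathcal{C}_{2,1}$.

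For $\vec{\lambda}$ I would run the same scheme. Write $\vec{f}_1=(x-1)^2Q_p\vec{w}'$ with $\vec{w}'$ coprime to $x^{2p}-1$ modulo the relevant factors. Then $\mathrm{Ann}(\vec{f}_1)=\langle Q_p\rangle$, and the aim is to choose $\vec{\lambda}=\vec{a}+Q_p\vec{c}$ lying in $\mathcal{C}_{2,1}$. This is the step I expect to be the real obstacle, and I would check it first. Membership in $\mathcal{C}_{2,1}$ requires $(x-1)^2Q_p\mid \vec{a}+Q_p\vec{c}$, which forces $Q_p\mid\vec{a}$ and hence $\vec{f}_2=\vec{f}_1\vec{a}=0$. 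Equivalently, $\mathcal{C}_{2,1}\cdot\mathcal{C}_{2,1}\subseteq\langle (x-1)^4Q_p^2\rangle=\{\vec{0}\}$, so no $\vec{\lambda}\in\mathcal{C}_{2,1}$ can satisfy $\vec{f}_2=\vec{f}_1\vec{\lambda}\neq\vec{0}$.

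The $\vec{\lambda}$-part as literally worded therefore cannot be established. What the argument does give is $\vec{\lambda}\in R$, unique modulo $\langle Q_p\rangle$, and it can be normalized in the same way as $\vec{\mu}$: taking $\vec{\lambda}\equiv 0\pmod{(x-1)^2}$ places it in $\mathcal{C}_{2,0}\setminus\mathcal{C}_{2,1}$. In the write-up I would state explicitly that the $\vec{\lambda}$ clause needs this correction, and that the later proof of \Cref{thm-2p} only needs $\vec{\lambda}\in R$.
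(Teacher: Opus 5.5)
Your proposal is correct, and it catches a real error in the statement. The paper gives no proof of this corollary. It is meant to follow at once from the preceding Lemma, i.e.\ from $\vec{f}_2 \in \langle \vec{f}_1 \rangle = \mathcal{C}_{2,1}$ and $\vec{g}_2 \in \langle \vec{g}_1 \rangle = \mathcal{C}_{2,0}$. That is exactly your first paragraph, and it only produces multipliers $\vec{\lambda}, \vec{\mu}$ somewhere in $R$. The membership conditions in the statement are asserted without justification.

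For $\vec{\mu}$ the assertion is true, and your argument proves it: $\mathrm{Ann}(\vec{g}_1) = \langle Q_p^2 \rangle$, followed by the adjustment modulo $(x-1)^2$. Step 3 can be shortened: $\mathcal{C}_{2,1}$ is an ideal, so $\vec{\mu} \in \mathcal{C}_{2,1}$ would give $\vec{g}_2 = \vec{g}_1\vec{\mu} \in \mathcal{C}_{2,1}$ directly.

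For $\vec{\lambda}$ the assertion is false for every choice of $\vec{f}_1, \vec{f}_2$. Since $\mathcal{C}_{2,1} = \langle (x-1)(x^p-1) \rangle$ and $(x^p-1)^2 = x^{2p}-1$ over $\mathbb{F}_2$, we have $\mathcal{C}_{2,1}\cdot\mathcal{C}_{2,1} = \{\vec{0}\}$. Hence $\vec{f}_1\vec{\lambda} = \vec{0} \neq \vec{f}_2$ for all $\vec{\lambda} \in \mathcal{C}_{2,1}$; even the case $\vec{f}_1 = \vec{f}_2$ fails. Your corrected clause, $\vec{\lambda} \in \mathcal{C}_{2,0} \setminus \mathcal{C}_{2,1}$, is the right one. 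One phrase in your sketch should be tightened: for $\vec{w}'$, all you need is $Q_p \nmid \vec{w}'$, which follows from $\vec{f}_1 \neq \vec{0}$.

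A more structural route to the corrected statement comes from the Remark right after the corollary. There $Q_p^2+1$ is an identity for $\mathcal{C}_{2,0}$, and by the Chinese remainder theorem $\mathcal{C}_{2,0} \cong \mathbb{F}_2[x]/\langle Q_p^2 \rangle$. This is a local ring with maximal ideal $\mathcal{C}_{2,1}$, which squares to zero, and its units are exactly $\mathcal{C}_{2,0} \setminus \mathcal{C}_{2,1}$.

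In this picture $\vec{\mu}$ is $\vec{g}_2$ times the inverse of $\vec{g}_1$ in the ring $\mathcal{C}_{2,0}$, a product of units. Also, $\mathcal{C}_{2,1}$ is a one-dimensional vector space over the residue field $\mathbb{F}_2[x]/\langle Q_p \rangle \cong \mathbb{F}_{2^{p-1}}$. So any two nonzero elements of $\mathcal{C}_{2,1}$ differ by a unit factor, which is your $\vec{\lambda}$. Your CRT normalization is the coordinate version of this picture.

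Your closing remark is also accurate. The corollary is never cited later; the subsequent propositions only use that a nonzero element of $\mathcal{C}_{2,1}$ generates $\mathcal{C}_{2,1}$. So this particular error does not propagate.
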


\begin{rem}
	Note that for each nonzero $\vec{g} \in \mathcal{C}_{2,0}$, we have 
	\[ \vec{g} \cdot (Q_p^2(x) + 1) = \vec{g}. \]
	Since 
	\[ (x^2 + 1) \mid (Q_p^2(x) + 1) = x^2 + x^4 + \cdots + x^{2p-2}, \]
	then $\vec{1}_{\mathcal{C}_{2,0}} = Q_p^2(x) + 1 \in \mathcal{C}_{2,0} \setminus \mathcal{C}_{2,1}$ can be regarded as the ``identity element'' of $\mathcal{C}_{2,0}$. 
	Moreover, for each $\vec{g} \in \mathcal{C}_{2,0} \setminus \mathcal{C}_{2,1}$, there exists a $\vec{g}'$ such that $\vec{g} \vec{g}' = \vec{1}_{\mathcal{C}_{2,0}}$. 
%	So we denote $\vec{g}'$ by the inverse $\vec{g}^{-1}$. 
\end{rem}

%\begin{rem}
%	Similarly, for each nonzero $\vec{f} \in \mathcal{C}_{2,1}$, we have 
%	\[ \vec{f} \cdot (Q_p + 1) = \vec{f}. \]
%	Then $\vec{1}_{\mathcal{C}_{2,1}} = Q_p(x) + 1 \notin \mathcal{C}_{2,1}$. 
%	Moreover, for each nonzero $\vec{f} \in \mathcal{C}_{2,1}$, there exists a $\vec{f}' \in \mathcal{C}_{2,1}$ such that $\vec{f} \vec{f}' = \vec{1}_{\mathcal{C}_{2,1}}$. 
%	So we denote $\vec{f}'$ by the inverse $\vec{f}^{-1}$. 
%\end{rem}

\begin{lem}
	Let $\vec{f}, \vec{u} \in \mathbb{F}_2[x]/\langle (x^n - 1) \rangle$, where $\vec{f}, \vec{u} \neq \vec{0}$. 
	Assume that $\vec{f}$ is the generator of the cyclic code $\mathcal{C}$. 
	Then the set \[ S = \{ \vec{s}\in \mathbb{F}_{2}[x]/\langle (x^n - 1) \rangle: \vec{f} \vec{s} = \vec{u} \} \]
	is an affine space with dimension $\dim(S) = \dim(V)$, where
	\[ V = \{ \vec{s}\in \mathbb{F}_{2}[x]/\langle (x^n - 1) \rangle: \vec{f} \vec{s} = \vec{0} \} \] 
	is a vector space over $\mathbb{F}_2$. 
\end{lem}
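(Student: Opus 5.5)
The lemma is the standard coset description of the solution set of a linear equation, applied to the multiplication map by $\vec{f}$. Let $R = \mathbb{F}_2[x]/\langle x^n - 1 \rangle$, viewed as an $n$-dimensional vector space over $\mathbb{F}_2$. Define
\[ \varphi_{\vec{f}} : R \to R, \qquad \vec{s} \mapsto \vec{f}\vec{s}. \]
Since multiplication in $R$ is bilinear, $\varphi_{\vec{f}}$ is $\mathbb{F}_2$-linear. Its kernel is exactly $V$, so $V$ is a subspace of $R$. Its image is the ideal generated by $\vec{f}$, which by hypothesis is the cyclic code $\mathcal{C}$. Then $S = \varphi_{\vec{f}}^{-1}(\vec{u})$.

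The argument then proceeds in three steps.

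\textbf{Step 1 (nonemptiness).} We need some $\vec{s}_0$ with $\vec{f}\vec{s}_0 = \vec{u}$. This holds exactly when $\vec{u} \in \mathcal{C} = \langle \vec{f} \rangle$. The statement does not list this hypothesis explicitly, so I would note that it is implicitly required: if $\vec{u} \notin \mathcal{C}$ then $S = \emptyset$ and there is nothing to prove beyond the convention for the empty set. In the intended application, $\vec{u}$ always lies in the relevant code ($\mathcal{C}_{2,0}$ or $\mathcal{C}_{2,1}$), so I would state the assumption $\vec{u} \in \mathcal{C}$ at the start and use it to fix such an $\vec{s}_0$.

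\textbf{Step 2 (coset identity).} I show $S = \vec{s}_0 + V$.
\begin{itemize}
	\item If $\vec{s} \in S$, then $\vec{f}(\vec{s} - \vec{s}_0) = \vec{u} - \vec{u} = \vec{0}$, so $\vec{s} - \vec{s}_0 \in V$.
	\item If $\vec{v} \in V$, then $\vec{f}(\vec{s}_0 + \vec{v}) = \vec{u} + \vec{0} = \vec{u}$, so $\vec{s}_0 + \vec{v} \in S$.
\end{itemize}
Hence $S$ is a translate of the subspace $V$, i.e.\ an affine space, and by definition $\dim(S) = \dim(V)$.

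\textbf{Step 3 (explicit dimension).} By rank--nullity, $\dim V = n - \dim \mathcal{C}$. To make this concrete, write $\langle \vec{f} \rangle = \langle g(x) \rangle$ with $g(x) \mid x^n - 1$. Then $\dim\mathcal{C} = n - \deg g$, so $\dim V = \deg g$. Equivalently, $V$ is the annihilator ideal $\langle (x^n-1)/g(x) \rangle$. This gives, for example, $|S| = 2^{\deg g}$ solutions.

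The only genuine obstacle is Step 1, the implicit requirement $\vec{u} \in \mathcal{C}$. Everything else is routine linear algebra.
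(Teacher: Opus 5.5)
Your proof is correct and is the paper's argument made explicit: the paper simply observes that $V$ is a subspace and that $S$ is a coset of $V$, which is your Step 2, and your Step 3 formula $\dim V = \deg g$ gives the value $p+1$ used later for $\langle \vec{f} \rangle = \mathcal{C}_{2,1}$. Your Step 1 caveat is genuinely needed and the paper leaves it implicit ($S \neq \emptyset$ if and only if $\vec{u} \in \langle \vec{f} \rangle$): the word the paper literally writes in its applications, $(x^p-1)Q_p(x)$, is the all-ones word and lies outside $\mathcal{C}_{2,1}$, so the intended $\vec{u}$ must be $(x^p-1)(Q_p(x)+1)$, which has support $\mathbb{Z}_{2p}\setminus\{0,p\}$ and does lie in $\mathcal{C}_{2,1}$.
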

\begin{proof}
	It is easy to check that $V$ is a vector space over $\mathbb{F}_2$. 
	If $S$ is regarded as a coset of $\mathbb{F}_2^n / V$, then it is an affine space. 
\end{proof}

\begin{rem}
	The set $V_i = \{ \vec{s}\in \mathbb{F}_{2}[x]/\langle (x^n - 1) \rangle: (\vec{f} \vec{s})_i = \vec{0} \}$ is also a vector space over $\mathbb{F}_2$, where $(\vec{f} \vec{s})_i$ is the $i$-th coordinate of $\vec{f} \vec{s}$. 
	Moreover, the dimension of $V_i$ is exactly $n-1$ since 
	\[ (\vec{f} \vec{s})_i = \sum_{j=0}^{n-1} f_j s_{i-j} = 0 \]
	is a hyperplane, where $\vec{f} = \sum_{j=0}^{n-1} f_j x^j$ and $\vec{s} = \sum_{j=0}^{n-1} s_j x^j$. 
\end{rem}

By linear algebra, we have the following result and we omit the proof here. 

\begin{lem}
	Let $U$ and $V$ be two linear subspaces of $\mathbb{F}_2^n$, respectively. 
	Let $A = \vec{v} + V$ for some $\vec{v} \in \mathbb{F}_2^n \setminus V$. 
	If $A \cap U$ is non-empty, then $A \cap U$ is an affine space with dimension
	\[ \dim(U) + \dim(V) - \dim(U+V), \]
	where $\dim(U+V)$ is the dimension of the vector space spanned by $U$ and $V$. 
\end{lem}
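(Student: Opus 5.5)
The plan is to identify $A \cap U$ as a translate of $U \cap V$ and then compute $\dim(U\cap V)$ with the dimension formula for subspaces.

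First, since $A \cap U \neq \emptyset$, I would fix a point $\vec{w} \in A \cap U$. Because $\vec{w} \in A = \vec{v} + V$, we get $A = \vec{w} + V$: the cosets of $V$ partition $\mathbb{F}_2^n$, and $\vec{w}$ lies in the coset $\vec{v}+V$.

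Next, I would prove the set identity
\[ A \cap U = \vec{w} + (U \cap V) \]
by double inclusion.
\begin{itemize}
\item For $\supseteq$: if $\vec{x} \in U \cap V$, then $\vec{w}+\vec{x} \in \vec{w}+V = A$. Also $\vec{w}+\vec{x} \in U$, since $\vec{w} \in U$ and $U$ is closed under addition.
\item For $\subseteq$: if $\vec{y} \in A \cap U$, then $\vec{y}-\vec{w} \in V$ because both points lie in the coset $\vec{w}+V$. Moreover $\vec{y}-\vec{w} \in U$ because both points lie in $U$. Hence $\vec{y} \in \vec{w} + (U\cap V)$.
\end{itemize}

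So $A \cap U$ is a translate of the linear subspace $U\cap V$. By definition it is an affine space of dimension $\dim(U\cap V)$.

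Finally, I would invoke the standard Grassmann formula $\dim(U+V) = \dim U + \dim V - \dim(U\cap V)$. It follows from extending a basis of $U\cap V$ to bases of $U$ and of $V$ and checking that the union spans $U+V$ and is linearly independent. Rearranging gives $\dim(A\cap U) = \dim U + \dim V - \dim(U+V)$, as claimed.

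The hypothesis $\vec{v} \notin V$ is not needed for the argument; it only makes $A$ a proper coset. The step needing the most care is replacing the representative $\vec{v}$ by $\vec{w}$, which is where non-emptiness is essential. Otherwise the argument is routine linear algebra, consistent with the paper omitting the proof.
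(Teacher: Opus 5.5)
Your proof is correct. Writing $A\cap U=\vec{w}+(U\cap V)$ for any $\vec{w}\in A\cap U$ and then applying the Grassmann formula is exactly the routine linear-algebra argument the paper has in mind when it omits the proof, and you are right that the hypothesis $\vec{v}\notin V$ plays no role.
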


\begin{prop}\label{prop-type-21}
	Given two different nonzero polynomials $\vec{f}_1, \vec{f}_2 \in \mathcal{C}_{2,1}$, $\vec{g} \in \mathcal{C}_{2,0} \setminus \mathcal{C}_{2,1}$, then there exists a polynomial $\vec{h}$ such that 
	\begin{equation} \label{eq-binary-1}
		\supp(\vec{f}_1 \vec{h}) \cup \supp(\vec{f}_2 \vec{h}) \cup \supp(\vec{g} \vec{h}) = \mathbb{Z}_{2p}. 
	\end{equation}
\end{prop}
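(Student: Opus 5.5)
The plan is to take $\vec{h}$ of the form $\vec{h} = 1 + Q_p(x)\,\vec{t}$ and to choose $\vec{t}$. This choice leaves $\vec{f}_1\vec{h}$ and $\vec{f}_2\vec{h}$ unchanged, while $\vec{g}\vec{h}$ moves through an entire coset of $\mathcal{C}_{2,1}$. The coset freedom will then be used to fill in whatever positions $\vec{f}_1$ and $\vec{f}_2$ miss.

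\emph{Step 1: the $\vec{f}_i$ are unaffected.} Each $\vec{f}_i \in \mathcal{C}_{2,1}$ is divisible by $(x-1)^2Q_p(x)$. Hence $\vec{f}_iQ_p(x)$ is divisible by $(x-1)^2Q_p(x)^2 = x^{2p}-1$, so it is $0$ in the quotient ring. It follows that $\vec{f}_i\vec{h} = \vec{f}_i$ for every $\vec{t}$. Write $\vec{f}_i = (\vec{c}_i,\vec{c}_i)$ with $\vec{c}_i \in \mathcal{C}_1\setminus\{\vec{0}\}$, using $\mathcal{C}_{2,1} = \{(\vec{c},\vec{c}) : \vec{c}\in\mathcal{C}_1\}$. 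Let $Z_0 = \{ j \in \mathbb{Z}_p : (c_1)_j = (c_2)_j = 0\}$. Then $\supp(\vec{f}_1)\cup\supp(\vec{f}_2)$ is exactly the complement of $Z_0 \cup (Z_0+p)$. Since $\vec{c}_1 \neq \vec{0}$, the set $Z_0$ is a \emph{proper} subset of $\mathbb{Z}_p$. Only this consequence of the hypotheses on $\vec{f}_1$ and $\vec{f}_2$ will be needed.

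\emph{Step 2: $\vec{g}\vec{h}$ ranges over a full coset of $\mathcal{C}_{2,1}$.} Write $\vec{g} = (x-1)^2\vec{g}'$. Since $\vec{g}\notin\mathcal{C}_{2,1}$, we have $Q_p \nmid \vec{g}'$, and $Q_p$ is irreducible, so $\gcd(\vec{g}',Q_p)=1$. Therefore
\[ \gcd\bigl((x-1)^2Q_p\vec{g}',\, x^{2p}-1\bigr) = (x-1)^2Q_p, \]
and so $\langle \vec{g}Q_p\rangle = \mathcal{C}_{2,1}$, in the spirit of the first lemma of this section. Consequently
\[ \{\vec{g}\vec{h}\} = \vec{w} + \mathcal{C}_{2,1}, \qquad \vec{w} = \vec{g}, \]
that is, $\vec{g}\vec{h} = \vec{w} + (\vec{d},\vec{d})$ with $\vec{d}$ ranging over all of $\mathcal{C}_1$.

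\emph{Step 3: covering $Z_0$.} For $j\in Z_0$, the positions $j$ and $j+p$ must be covered by $\vec{g}\vec{h}$. If $w_j \neq w_{j+p}$, one of $w_j+d_j$ and $w_{j+p}+d_j$ is nonzero, so that pair is covered automatically. Let $Z = \{ j\in Z_0 : w_j = w_{j+p}\}$; on $Z$ we need $d_j = w_j + 1$. The code $\mathcal{C}_1$ is the binary even-weight code of length $p$, so its restriction to any $p-1$ coordinates is all of $\mathbb{F}_2^{p-1}$. Since $Z \subseteq Z_0 \subsetneq \mathbb{Z}_p$, we can prescribe $\vec{d}$ arbitrarily on $Z$ and use a coordinate outside $Z$ to fix the parity. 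This gives the required $\vec{d}$, hence $\vec{t}$ and $\vec{h}$, and \eqref{eq-binary-1} holds.

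The only step with real content is Step 2. There one must check that multiplying $\vec{g}$ by the multiples of $Q_p$ sweeps out all of $\mathcal{C}_{2,1}$, and this is exactly where $\vec{g}\notin\mathcal{C}_{2,1}$ enters. I expect it to be the main point to verify carefully, including that the identification $\mathcal{C}_{2,1}=\{(\vec{c},\vec{c})\}$ is compatible with adding $(\vec{d},\vec{d})$ coordinatewise in both halves. Everything else is bookkeeping of supports.
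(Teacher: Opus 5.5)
\textbf{There is a genuine gap, in Step 3.} Your Steps 1 and 2 are correct. Every $\vec h\in 1+\langle Q_p\rangle$ fixes $\vec f_1,\vec f_2$, and $\vec g\vec h$ then sweeps out the whole coset $\vec g+\mathcal{C}_{2,1}$.

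For $j\in Z_0$, \emph{both} positions $j$ and $j+p$ must lie in $\supp(\vec g\vec h)$, so you need $w_j+d_j=w_{j+p}+d_j=1$. If $w_j\neq w_{j+p}$, then for every choice of $d_j$ exactly one of these two entries is $0$. Such a pair $\{j,j+p\}$ can \emph{never} be covered, which is the opposite of ``covered automatically''.

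This cannot be repaired inside your family. Since $(\vec d,\vec d)$ has equal halves, the quantity $w_j+w_{j+p}=g_j+g_{j+p}$ (the $j$-th coefficient of $\vec g \bmod (x^p-1)$) is the same for every $\vec h\in 1+\langle Q_p\rangle$. So your construction succeeds exactly when $\vec g \bmod (x^p-1)$ vanishes on $Z_0$, and the hypotheses do not give that.

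A concrete failure: take $p=5$, $\vec f_1=(1+x^5)(x+x^2+x^3+x^4)$, $\vec f_2=(1+x^5)(x+x^2)$ and $\vec g=x+x^5=x(1+x)^4$. Then $Z_0=\{0\}$ and $g_0=0\neq 1=g_5$, so every $\vec h\in 1+\langle Q_5\rangle$ leaves position $0$ or $5$ uncovered.

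Another warning sign is that you never use $\vec f_1\neq\vec f_2$ or $p>3$. Your argument therefore applies verbatim to $p=3$, $\vec f_1=\vec f_2=x+x^2+x^4+x^5$, $\vec g=x^3+x^5$. That is exactly the pair the paper's Remark exhibits as working together. There $Z_0=\{0\}$ and $g_0\neq g_3$, yet your Step 3 declares the pair $\{0,3\}$ covered.

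\textbf{What is missing.} You must also let $\vec h$ vary modulo $Q_p$. Write $\vec f_i=(\vec c_i,\vec c_i)$ and use bars for reduction modulo $x^p-1$. Then $\vec f_i\vec h=(\vec c_i\bar h,\vec c_i\bar h)$, and the fold of $\vec g\vec h$ is $\bar g\bar h$. Combined with your Steps 2--3, the proposition amounts to finding $\bar h$ with $\vec c_1\bar h\neq 0$ and $\supp(\bar g\bar h)\subseteq\supp(\vec c_1\bar h)\cup\supp(\vec c_2\bar h)$. That is where the real content, and the hypotheses, have to enter.

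\textbf{The paper's route.} The paper moves in this direction: it chooses $\vec h$ with $\supp(\vec f_1\vec h)=\mathbb{Z}_{2p}\setminus\{0,p\}$, so only one pair remains, and then counts over the affine set $S$. Be aware that $S$ is again a single coset of $\langle Q_p\rangle$, so the same invariant reappears. On $S$ the values $(\vec f_2\vec s)_0=(\vec f_2\vec s)_p$ and $(\vec g\vec s)_0+(\vec g\vec s)_p$ are constant.

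In the example above, $S=1+\langle Q_5\rangle$, and $B_0$ and $B_5$ partition $S$. The dimension count in step (4) of the paper's proof tacitly assumes $B_0\cap B_p\neq\emptyset$. So that step cannot be imported as it stands either; you need genuinely different residues of $\vec h$ modulo $Q_p$. In the example, $\vec h=1+x^2$ already gives $\supp(\vec f_1\vec h)\cup\supp(\vec f_2\vec h)=\mathbb{Z}_{10}$.
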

\begin{proof}
	It is known that there exists a polynomial $\vec{s}_0$ such that $\vec{f}_1 \vec{s}_0 = (x^p - 1) Q_p(x)$, whose support is exactly $\mathbb{Z}_{2p} \setminus \{ 0, p\}$. 
	So we only need to consider the existence of $\vec{s}_0$ such that 
	\begin{equation} \label{eq-binary-1-1}
		\{ 0, p \} \subseteq \supp(\vec{f}_2 \vec{s}_0) \cup \supp(\vec{g} \vec{s}_0). 
	\end{equation} 
	\begin{enumerate}[label=$\mathrm{(\arabic*)}$]
		\item 
		Let $\vec{u}$ be the vector corresponding to $(x^p - 1) Q_p(x)$. 
		Consider the following solution space 
		\[ S = \{ \vec{s} \in \mathbb{F}_{2}^{2p}: \vec{f}_1 \vec{s} = \vec{u} \}. \]
		Note that $\vec{f}_1 \vec{s}' = \vec{0}$ if and only if $\vec{s}'$ is divisible by $Q_p$, i.e., $\vec{s}' \in \mathcal{C}_{0,1}$. 
		Thus, $S$ is a affine space with dimension $2p - \deg(Q_p) = p+1$. 
		Moreover, Equation \eqref{eq-binary-1-1} is true if and only if 
		\begin{equation} \label{eq-binary-1-2}
			\left( (\vec{f}_2 \vec{s})_0, (\vec{g} \vec{s})_0 \right) \neq (0, 0), 
			\quad \text{and} \quad 
			\left( (\vec{f}_2 \vec{s})_p, (\vec{g} \vec{s})_p \right) \neq (0, 0),  
		\end{equation}
		where $(\vec{f}_2 \vec{s})_i$ and $(\vec{g} \vec{s})_i$ denote the $i$-th coordinate of $\vec{f}_2 \vec{s}$ and $\vec{g} \vec{s}$, respectively, $i \in \{ 0, p\}$. 
		
		\item 
		Define the following two sets with $i \in \{ 0, p\}$: 
		\[
		B_i = \{ \vec{s} \in S: (\vec{f}_2 \vec{s})_i = (\vec{g} \vec{s})_i = 0 \}. 
		\]
		So $B_i \subseteq S$. 
		Let 
		\[ V_i =  \{ \vec{s} \in \mathbb{F}_{2}^{2p}: (\vec{f}_2 \vec{s})_i = (\vec{g} \vec{s})_i = 0 \}, \]
		which is a $(2p-2)$-dimensional linear subspace of $\mathbb{F}_{2}^{2p}$, since $\vec{f}_2$ and $\vec{g}$ belong to different cyclic codes. 
		Moreover, $B_i = V_i \cap S$ is an affine space. 
		
		\item 
		We claim $B_i \neq S$, i.e., $B_i$ is a proper subset of $S$. 
		For otherwise, if $B_0 = S$, then for all $\vec{s} \in S$, we have $(\vec{g} \vec{s})_0 = 0$. 
		Take $\vec{s}_0 \in S$ such that $\vec{f}_1 \vec{s}_0 = \vec{u}$. 
		It is known that for all $\vec{s}' \in \mathcal{C}_{0,1}$ which is generated by $Q_p$, we have $\vec{f}_1 \vec{s}' = \vec{0}$. 
		Since $\vec{s} = \vec{s}_0 + Q_p \vec{t} \in S$ for every polynomial $\vec{t}$, then $(\vec{g} \vec{s})_0 = (\vec{g} Q_p \vec{t})_0 = 0$ due to $(\vec{g} \vec{s}_0)_0 = 0$, which implies that $\vec{g} Q_p = 0$, i.e., $\vec{g} \in \mathcal{C}_{2,1}$, a contraction. 
		Similarly, we have $B_p \neq S$.  
		Note that $B_0$ and $B_p$ are affine spaces, then 
		\[ |B_0| \leqslant 2^p, \quad |B_p| \leqslant 2^p. \]
		
		\item 
		We claim $B_0 \cup B_p \neq S$. 
		It is clear that $B_0 \cap B_p = (V_0 \cap V_p) \cap S$ is an affine space. 
		Since $\dim(V_0 \cap V_p) \geqslant 2p-4$ and $\dim(\langle V_0 \cap V_p, S \rangle) \leqslant 2p$, 
		then 
		\begin{align*}
			\dim(B_0 \cap B_p) 
			& = \dim(V_0 \cap V_p) + \dim(S) - \dim(\langle V_0 \cap V_p, S \rangle) \\
			& \geqslant (2 p - 4) + (p + 1) - 2p = p-3, 
		\end{align*}
		since $\dim(V_0 \cap V_p) = 2p-4$, $\dim(S) = p$, and $\dim(\langle V_0 \cap V_p, S \rangle) = \dim(\langle V_0 \cap V_p, \mathcal{C}_{0,1} \rangle) \leqslant 2p$. 
		Hence, when $p > 3$, we have 
		\[ |B_0 \cup B_p| = |B_0| + |B_p| - |B_0 \cap B_p| \leqslant 2^{p+1} - 2^{p-3} < 2^{p+1} = |S|, \]
		which implies that there exists a polynomial $\vec{s} \in S \setminus (B_0 \cup B_p)$ such that Equation \eqref{eq-binary-2-2} holds. 
	\end{enumerate}
	In a word, we gives the existence of $\vec{s}$ such that Equation \eqref{eq-binary-1-1} holds. 
\end{proof}

\begin{rem}
	Here, we require that $\vec{f}_1 \neq \vec{f}_2$. 
	For otherwise, Equation \eqref{eq-binary-1} will not hold. 
	For example, when $p=3$, we have $h_2(6) = 2$ and there exist two polynomials $\vec{f}$ and $\vec{g}$ that work together, where 
	\[ \vec{f} = x + x^2 + x^4 + x^5 \in \mathcal{C}_{2,1}, \quad \vec{g} = x^3 + x^5 \in \mathcal{C}_{2,0} \setminus \mathcal{C}_{2,1}.  \]
\end{rem}

\begin{prop}\label{prop-type-12}
	Given two different polynomials $\vec{g}_1, \vec{g}_2 \in \mathcal{C}_{2,0} \setminus \mathcal{C}_{2,1}$, and one nonzero polynomial $\vec{f} \in \mathcal{C}_{2,1}$, there exists a nonzero polynomial $\vec{h}$ such that 
	\begin{equation} \label{eq-binary-2}
		\supp(\vec{f} \vec{h}) \cup \supp(\vec{g}_1 \vec{h}) \cup \supp(\vec{g}_2 \vec{h}) = \mathbb{Z}_{2p}. 
	\end{equation}
\end{prop}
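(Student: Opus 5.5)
The plan is to mirror the proof of \Cref{prop-type-21}, with the roles of the two code types interchanged. The large support will come from $\vec{f}$, and $\vec{g}_1,\vec{g}_2$ only need to cover the two remaining coordinates $0$ and $p$.

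First, since $\vec{f}\in\mathcal{C}_{2,1}\setminus\{\vec{0}\}$, the Lemma gives $\langle \vec{f}\rangle=\mathcal{C}_{2,1}$. The vector $\vec{u}=(x^p-1)Q_p(x)$ lies in $\mathcal{C}_{2,1}$ and has support $\mathbb{Z}_{2p}\setminus\{0,p\}$. Hence the solution set
\[ S=\{\vec{s}\in\mathbb{F}_2^{2p}: \vec{f}\vec{s}=\vec{u}\} \]
is nonempty. As in \Cref{prop-type-21}, the homogeneous solutions are exactly $\mathcal{C}_{0,1}=\langle Q_p\rangle$, so $S=\vec{s}_0+\mathcal{C}_{0,1}$ is an affine space of dimension $p+1$. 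Every $\vec{h}\in S$ is automatically nonzero, because $\vec{f}\vec{h}=\vec{u}\neq\vec{0}$.

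It therefore suffices to find $\vec{s}\in S$ with
\[ \bigl((\vec{g}_1\vec{s})_i,(\vec{g}_2\vec{s})_i\bigr)\neq(0,0)\quad\text{for } i\in\{0,p\}. \]
For each $i\in\{0,p\}$ I set
\[ B_i=\{\vec{s}\in S:(\vec{g}_1\vec{s})_i=(\vec{g}_2\vec{s})_i=0\}, \]
and show that $B_0\cup B_p\neq S$.

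The key step is to show that each condition $(\vec{g}_j\vec{s})_i=0$ cuts $S$ in a proper affine hyperplane, or in the empty set. On $S$ we have
\[ (\vec{g}_j\vec{s})_i=(\vec{g}_j\vec{s}_0)_i+(\vec{g}_jQ_p\vec{t})_i . \]
If this were constant in $\vec{t}$, then $\vec{g}_jQ_p=\vec{0}$, that is $\vec{g}_j\in\mathcal{C}_{2,1}$, a contradiction. So $|B_i|\leqslant 2^{p}$. Moreover, $B_i$ is the intersection of two such affine hyperplanes of $S$, so $|B_i|=2^p$ forces the two linear functionals $\vec{t}\mapsto(\vec{g}_1Q_p\vec{t})_i$ and $\vec{t}\mapsto(\vec{g}_2Q_p\vec{t})_i$ to coincide on $\mathbb{F}_2[x]/\langle x^{2p}-1\rangle$. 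Coinciding at one coordinate forces coinciding at all coordinates, by cyclicity (replace $\vec{t}$ by $x^k\vec{t}$). This would give $(\vec{g}_1+\vec{g}_2)Q_p=\vec{0}$.

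Now $\vec{g}_1+\vec{g}_2\in\mathcal{C}_{2,0}$ is nonzero since $\vec{g}_1\neq\vec{g}_2$. So this case is exactly $\vec{g}_1+\vec{g}_2\in\mathcal{C}_{2,1}$, and I split into two cases.

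\textbf{Generic case} ($\vec{g}_1+\vec{g}_2\notin\mathcal{C}_{2,1}$). The two functionals are independent, so each $B_i$ has codimension $2$ in $S$ (or is empty), and $|B_i|\leqslant 2^{p-1}$. Then
\[ |B_0\cup B_p|\leqslant 2^{p}<2^{p+1}=|S| , \]
and we are done without needing the intersection estimate.

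\textbf{Degenerate case} ($\vec{g}_2=\vec{g}_1+\vec{f}'$ with $\vec{f}'\in\mathcal{C}_{2,1}\setminus\{\vec{0}\}$). Here $(\vec{g}_2\vec{s})_i=(\vec{g}_1\vec{s})_i+(\vec{f}'\vec{s})_i$, and $\vec{f}'\vec{s}$ is constant on $S$ because $\vec{f}'Q_p=\vec{0}$. If $(\vec{f}'\vec{s}_0)_i=1$, then $B_i=\emptyset$ and that coordinate is automatically covered. Otherwise $B_i$ is the hyperplane $\{(\vec{g}_1\vec{s})_i=0\}$.

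The main obstacle is the subcase where both $B_0$ and $B_p$ are hyperplanes and they are complementary, which would give $B_0\cup B_p=S$. Complementary means that the functionals $\vec{t}\mapsto(\vec{g}_1Q_p\vec{t})_0$ and $\vec{t}\mapsto(\vec{g}_1Q_p\vec{t})_p$ agree while the constants $(\vec{g}_1\vec{s}_0)_0$ and $(\vec{g}_1\vec{s}_0)_p$ differ. Agreement of the functionals means $\vec{g}_1Q_p$ is invariant under $x^p$. Since $\vec{g}_1Q_p\in\langle(x-1)^2Q_p\rangle=\mathcal{C}_{2,1}$, it has the form $(\vec{c},\vec{c})$, so this always holds. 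I therefore have to rule out the constants differing, and I would do it with the freedom in choosing $\vec{u}$.

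Rather than fixing $\vec{u}=(x^p-1)Q_p$, I would use any $\vec{u}'\in\mathcal{C}_{2,1}$ of the form $(\vec{c},\vec{c})$ with $\vec{c}$ of weight $p-1$ in $\mathcal{C}_1$, missing some $j$ and $j+p$. Equivalently, replace $\vec{s}_0$ by $x^k\vec{s}_0$ and test the coordinates $k$ and $k+p$. The quantity $(\vec{g}_1\vec{s}_0)_k+(\vec{g}_1\vec{s}_0)_{k+p}$ is the $k$-th coordinate of $\vec{g}_1\vec{s}_0(1+x^p)$. I would then show this is zero for some $k$, using that $\vec{g}_1\vec{s}_0(1+x^p)$ lies in a code with even-weight constraints; in fact it has even weight in $\mathbb{Z}_p$ after folding. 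Since $p$ is odd, such a vector cannot be the all-ones vector of length $p$, so some $k$ works. This subcase is where I expect to spend most of the effort.
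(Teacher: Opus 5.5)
\textbf{There is a genuine gap in your degenerate case, and it cannot be closed, because the statement is false there (already when $\vec{f}'=\vec{f}$).}

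First, the shift freedom you rely on does not exist. The elements of $\mathcal{C}_{2,1}$ of weight $2(p-1)$ are exactly the shifts $x^k\vec{u}$. Replacing $\vec{u}$ by $x^k\vec{u}$ replaces $\vec{s}_0$ by $x^k\vec{s}_0$. The constant you need to vanish then becomes $\bigl((1+x^p)\vec{g}_1x^k\vec{s}_0\bigr)_k=\bigl((1+x^p)\vec{g}_1\vec{s}_0\bigr)_0$, which is independent of $k$. Your parity argument produces a zero coordinate $k$ of the \emph{unshifted} vector $(1+x^p)\vec{g}_1\vec{s}_0$. That is useless, because at $k$ and $k+p$ the product $\vec{f}\vec{s}_0=\vec{u}$ already equals $1$.

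Second, here is a counterexample for every odd $p$. Let $\vec{g}_1=Q_p(x)^2+1=\vec{1}_{\mathcal{C}_{2,0}}$, let $\vec{f}=(x^p+1)\,x\,(Q_p(x)+1)\in\mathcal{C}_{2,1}\setminus\{\vec{0}\}$, and let $\vec{g}_2=\vec{g}_1+\vec{f}$.
\begin{itemize}
\item Since $\supp(\vec{g}_2\vec{h})\subseteq\supp(\vec{g}_1\vec{h})\cup\supp(\vec{f}\vec{h})$, it suffices to show that $\vec{f}$ and $\vec{g}_1$ work together.
\item Put $\vec{y}=\vec{g}_1\vec{h}\in\mathcal{C}_{2,0}$ and $s_j=y_j+y_{j+p}$. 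Then $\vec{s}\in\mathbb{F}_2^p$ has even weight.
\item Since $\vec{f}\vec{g}_1=\vec{f}$, we get $\vec{f}\vec{h}=\vec{f}\vec{y}=(x\vec{s},x\vec{s})$.
\item If the union were $\mathbb{Z}_{2p}$, then at every $j$ with $s_{j-1}=0$ we would need $y_j=y_{j+p}=1$, hence $s_j=0$.
\item So $\supp(\vec{s})$ is invariant under $j\mapsto j-1$. Since $p$ is odd and $\vec{s}$ has even weight, this forces $\vec{s}=\vec{0}$.
\item Then $\vec{y}$ would have to be the all-one vector $(x+1)Q_p(x)^2$, which is not in $\mathcal{C}_{2,0}$.
\end{itemize}
Hence no $\vec{h}$ satisfies \eqref{eq-binary-2}. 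This is exactly your complementary-hyperplane configuration: $B_0\cap B_p=\emptyset$ and $B_0\cup B_p=S$.

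The paper's proof has the same hole. Its bound $\dim(B_0\cap B_p)\geqslant p-3$ uses the dimension formula for an intersection, which requires the intersection to be nonempty, and that fails in precisely this configuration.

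\textbf{What is true, and all that \Cref{thm-2p} uses, is your generic case.} When $\ell=2$, the $\vec{u}$-parts of $\vec{v}^{(1)},\vec{v}^{(2)}$ are independent, i.e.\ $\vec{v}^{(1)}+\vec{v}^{(2)}\notin\mathcal{C}_{2,1}$. If instead $\vec{g}_2=\vec{g}_1+\vec{f}'$ with $\vec{0}\neq\vec{f}'\neq\vec{f}$, the span of the three vectors is of the type in \Cref{prop-type-21}. So you should add the hypothesis $\vec{g}_1+\vec{g}_2\notin\mathcal{C}_{2,1}$ and drop the degenerate case.

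Under that hypothesis your argument is complete and cleaner than the paper's. Restricting both functionals to the direction space $\mathcal{C}_{0,1}$ of $S$ gives rank $2$. Hence $|B_i|\leqslant 2^{p-1}$ and $|B_0\cup B_p|\leqslant 2^p<|S|$, with no intersection estimate needed.

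One repair is still required. The polynomial $(x^p-1)Q_p(x)$ is the all-one vector and does not lie in $\mathcal{C}_{2,1}$, so your $S$ as written is empty. Take instead $\vec{u}=(x^p+1)(Q_p(x)+1)$, which lies in $\mathcal{C}_{2,1}$ and has support $\mathbb{Z}_{2p}\setminus\{0,p\}$.
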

\begin{proof}
	It is known that there exists a polynomial $\vec{s}_0$ such that $\vec{f} \vec{s}_0 = (x^p - 1) Q_p(x)$, whose support is exactly $\mathbb{Z}_{2p} \setminus \{ 0, p\}$. 
	So we only need to consider the existence of $\vec{s}_0$ such that 
	\begin{equation} \label{eq-binary-2-1}
		\{ 0, p \} \subseteq \supp(\vec{g}_1 \vec{s}_0) \cup \supp(\vec{g}_2 \vec{s}_0). 
	\end{equation} 
	\begin{enumerate}[label=$\mathrm{(\arabic*)}$]
		\item 
		Let $\vec{u} = (\vec{a}, \vec{a})$ be the vector corresponding to $(x^p - 1) Q_p(x)$. 
		Consider the following solution space 
		\[ S = \{ \vec{s} \in \mathbb{F}_{2}^{2p}: \vec{f} \vec{s} = \vec{u} \}. \]
		Note that $\vec{f} \vec{s}' = \vec{0}$ if and only if $\vec{s}'$ is divisible by $Q_p$, i.e., $\vec{s}' \in \mathcal{C}_{0,1}$. 
		Thus, $S$ is a affine space with dimension $2p - \deg(Q_p) = p+1$. 
		Moreover, Equation \eqref{eq-binary-2-1} is true if and only if 
		\begin{equation} \label{eq-binary-2-2}
			\left( (\vec{g}_1 \vec{s})_0, (\vec{g}_2 \vec{s})_0 \right) \neq (0, 0), 
			\quad \text{and} \quad 
			\left( (\vec{g}_1 \vec{s})_p, (\vec{g}_2 \vec{s})_p \right) \neq (0, 0),  
		\end{equation}
		where $(\vec{g}_1 \vec{s})_i$ and $(\vec{g}_2 \vec{s})_i$ denote the $i$-th coordinate of $\vec{g}_1 \vec{s}$ and $\vec{g}_2 \vec{s}$, respectively, $i \in \{ 0, p\}$. 
		
		\item 
		Define the following two sets with $i \in \{ 0, p\}$: 
		\[
		B_i = \{ \vec{s} \in S: (\vec{g}_1 \vec{s})_i = (\vec{g}_2 \vec{s})_i = 0 \}. 
		\]
		So $B_i \subseteq S$. 
		Let 
		\[ V_i =  \{ \vec{s} \in \mathbb{F}_{2}^{2p}: (\vec{g}_1 \vec{s})_i = (\vec{g}_2 \vec{s})_i = 0 \}, \]
		which is a $(2p-2)$-dimensional linear subspace of $\mathbb{F}_{2}^{2p}$. 
		Moreover, $B_i = V_i \cap S$ is an affine space. 
		
		\item 
		We claim $B_i \neq S$, i.e., $B_i$ is a proper subset of $S$. 
		For otherwise, if $B_0 = S$, then for all $\vec{s} \in S$, we have $(\vec{g}_1 \vec{s})_0 = 0$. 
		Take $\vec{s}_0 \in S$ such that $\vec{f} \vec{s}_0 = \vec{u}$. 
		It is known that for all $\vec{s}' \in \mathcal{C}_{0,1}$ which is generated by $Q_p$, we have $\vec{f} \vec{s}' = \vec{0}$. 
		Since $\vec{s} = \vec{s}_0 + Q_p \vec{t} \in S$ for every polynomial $\vec{t}$, then $(\vec{g}_1 \vec{s})_0 = (\vec{g}_1 Q_p \vec{t})_0 = 0$ due to $(\vec{g}_1 \vec{s}_0)_0 = 0$, which implies that $\vec{g}_1 Q_p = 0$, i.e., $\vec{g}_1 \in \mathcal{C}_{2,1}$, a contraction. 
		Similarly, we have $B_p \neq S$.  
		Note that $B_0$ and $B_p$ are affine spaces, then 
		\[ |B_0| \leqslant 2^p, \quad |B_p| \leqslant 2^p. \]
		
		\item 
		We claim $B_0 \cup B_p \neq S$. 
		It is clear that $B_0 \cap B_p = (V_0 \cap V_p) \cap S$ is an affine space. 
		Since $\dim(V_0 \cap V_p) \geqslant 2p-4$ and $\dim(\langle V_0 \cap V_p, S \rangle) \leqslant 2p$, 
		then 
		\begin{align*}
			\dim(B_0 \cap B_p) & = \dim(V_0 \cap V_p) + \dim(S) - \dim(\langle V_0 \cap V_p, S \rangle) \\
			& \geqslant (2 p - 4) + (p + 1) - 2p = p-3. 
		\end{align*}
		Hence, when $p > 3$, we have 
		\[ |B_0 \cup B_p| = |B_0| + |B_p| - |B_0 \cap B_p| \leqslant 2^{p+1} - 2^{p-3} < 2^{p+1} = |S|, \]
		which implies that there exists a polynomial $\vec{s} \in S \setminus (B_0 \cup B_p)$ such that Equation \eqref{eq-binary-2-2} holds. 
	\end{enumerate}
	In a word, we gives the existence of $\vec{s}$ such that Equation \eqref{eq-binary-2-1} holds. 
\end{proof}

\begin{proof}[\textbf{Proof of Theorem \ref{thm-2p}}]
	If $h_2(2p) = m \geqslant 3$, then there exist $m$ linearly independent vectors $\vec{v}^{(1)}$, $\vec{v}^{(2)}$, $\cdots$, $\vec{v}^{(m)} \in \mathcal{C}_{2,0}$ that work together since $\mathcal{C}_{2,1} \subseteq \mathcal{C}_{2,0}$. 
	
	Let $\vec{v}^{(i)} = (\vec{u}^{(i)} + \vec{w}^{(i)}, \vec{w}^{(i)})$, where $\vec{u}^{(i)}, \vec{w}^{(i)} \in \mathbb{F}_{2}^{p}$. 
	Without loss of generality, let $\{ \vec{u}^{(1)}, \vec{u}^{(2)}, \dots, \vec{u}^{(\ell)} \}$ be the maximal linearly independent subset of $\{ \vec{u}^{(1)}, \vec{u}^{(2)}, \dots, \vec{u}^{(m)} \}$. 
	Then by the proof of \cite[Theorem 2.3]{JCTA-AGJ}, we have $h_2(p) = 2$ and the following facts: 
	\begin{itemize}
		\item $\ell \leqslant h_2(p) = 2$, $m - \ell \leqslant h_2(p) = 2$, $2 \leqslant m \leqslant 4$, 
		
		\item $\vec{v}^{(i)} = (\vec{w}^{(i)}, \vec{w}^{(i)})$ for $i > \ell$, 
		
		\item the vectors $\vec{w}^{(i)}$ work together for length $n$ and $i > \ell$. 
	\end{itemize}
	Thus, $\vec{v}^{(i)} \in \mathcal{C}_{2,1}$ for $i > \ell$ and $\vec{v}^{(i)} \notin \mathcal{C}_{2,1}$ for $i \leqslant \ell$ since $\vec{u}^{(i)} \neq \vec{0}$. 
	We have the following cases. 
	\begin{enumerate}[label=$\mathrm{(\arabic*)}$]
		\item 
		If $\ell = 1$, then $\vec{v}^{(1)} \in \mathcal{C}_{2,0} \setminus \mathcal{C}_{2,1}$ and $\vec{v}^{(2)}, \vec{v}^{(3)} \in \mathcal{C}_{2,1}$. 
		By \Cref{prop-type-21}, these $3$ vectors can not work together. 
		
		\item 
		If $\ell = 2$, then $\vec{v}^{(1)}, \vec{v}^{(2)} \in \mathcal{C}_{2,0} \setminus \mathcal{C}_{2,1}$ and $\vec{v}^{(3)} \in \mathcal{C}_{2,1}$. 
		By \Cref{prop-type-12}, these $3$ vectors can not work together. 
	\end{enumerate}
	Therefore, $h_2(2p) = 2$ when $2$ is a primitive root modulo $p$. 
\end{proof}

\section{Other applications of coding theory} \label{sec-two-app}

In this section, we will use constacyclic codes and support weight distribution to obtain more values on $h_q(n)$. 

\subsection{\texorpdfstring{$\lambda$}{}-constacyclic codes}

In this part, we will show $h_q(kn) = 0$ for several families of $k$ if $h_q(n) = 0$. 
\begin{thm} \label{thm-q-ary-q-1}
	Let $q$ be a prime power and $n$ be a positive integer such that $h_q(n) = 0$. 
	If $\gcd(n, q-1) = 1$, then $h_q((q-1)n) = 0$. 
\end{thm}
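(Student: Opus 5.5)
The plan is to use the criterion stated at the end of \Cref{sec-pre}: $h_q(N)=0$ if and only if every cyclic code of length $N$ over $\mathbb{F}_q$ contains a codeword of Hamming weight $N$. Put $N=(q-1)n$. Since $\gcd(n,q-1)=1$, the Chinese Remainder Theorem gives $\mathbb{Z}_N\cong \mathbb{Z}_{q-1}\times\mathbb{Z}_n$. Accordingly, the map $x\mapsto yz$ induces a ring isomorphism
\[ \mathbb{F}_q[x]/\langle x^N-1\rangle \cong \mathbb{F}_q[y,z]/\langle y^{q-1}-1,\, z^n-1\rangle. \]
On coefficient vectors this isomorphism only permutes coordinates, so it preserves Hamming weight. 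Cyclic codes of length $N$ then correspond exactly to the ideals of the right-hand ring. It therefore suffices to show that every nonzero ideal $I$ of $R=\mathbb{F}_q[y,z]/\langle y^{q-1}-1, z^n-1\rangle$ contains an element all of whose $(q-1)n$ coefficients $y^iz^j$ are nonzero.

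Next, decompose along $y$. Because $y^{q-1}-1=\prod_{\lambda\in\mathbb{F}_q^*}(y-\lambda)$ has distinct linear factors, $\mathbb{F}_q[y]/\langle y^{q-1}-1\rangle$ has the primitive idempotents
\[ e_\lambda(y)=\frac{1}{q-1}\sum_{i=0}^{q-2}\lambda^{-i}y^i,\qquad \lambda\in\mathbb{F}_q^*. \]
Hence $R=\bigoplus_{\lambda}e_\lambda R$, and each summand $e_\lambda R\cong \mathbb{F}_q[z]/\langle z^n-1\rangle$ via $y\mapsto\lambda$. This is the $\lambda$-constacyclic viewpoint. Every ideal splits as $I=\bigoplus_\lambda e_\lambda I$, where each $e_\lambda I$ corresponds to a cyclic code $I_\lambda$ of length $n$. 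This step needs no hypothesis relating $n$ and $q$: only the semisimplicity of the $y$-part is used.

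If $I\neq 0$, fix some $\lambda$ with $I_\lambda\neq 0$. Since $h_q(n)=0$, the criterion applied at length $n$ gives a codeword $w(z)=\sum_j w_jz^j\in I_\lambda$ with every $w_j\neq 0$. The element $e_\lambda(y)w(z)$ lies in $e_\lambda I\subseteq I$. Its coefficient at $y^iz^j$ equals $\lambda^{-i}w_j/(q-1)$, which is nonzero because $q-1$ is invertible in $\mathbb{F}_q$ and $\lambda\neq0$. So $I$ contains a full-weight word, and hence $h_q((q-1)n)=0$.

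The step needing the most care is the bookkeeping of the first paragraph: checking that the CRT reindexing really sends cyclic codes to ideals and preserves weight. One must also check that $e_\lambda(y)w(z)$ in $R$ has exactly the stated coefficients, with no collisions, which holds because the monomials $y^iz^j$ form a basis of $R$. The remaining steps are routine.
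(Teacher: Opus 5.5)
Your proof is correct and is essentially the paper's argument written in CRT coordinates. The paper splits $x^{(q-1)n}-1=\prod_{a\in\mathbb{F}_q^{*}}(x^n-a)$ and uses $\gcd(n,q-1)=1$ to identify each $a$-constacyclic component with the cyclic codes of length $n$ via a substitution $x\mapsto\lambda x$ with $\lambda^n=a$, which is exactly what happens on your summand $e_\lambda R$, where $x\mapsto\lambda z$. It then lifts a full-weight word by multiplying by $(x^{(q-1)n}-1)/(x^n-a)=\sum_{k=0}^{q-2}a^{q-2-k}x^{kn}$, which under your identification is a nonzero scalar multiple of your idempotent $e_\lambda(y)$. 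The only differences are that your version is self-contained, avoiding the cited constacyclic isometry result, and that it treats arbitrary nonzero ideals directly rather than passing through the minimal codes $\langle (x^{(q-1)n}-1)/g(x)\rangle$.
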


To prove the above result, we need the concept of $\lambda$-constacyclic codes, which is an ideal of $\mathbb{F}_q[x]/\langle (x^n - \lambda) \rangle$, where $\lambda \in \mathbb{F}_{q}^{\ast}$. 
See \cite{FFTA-CFLL} for more details. 

For $\lambda, \mu \in \mathbb{F}_{q}^{\ast}$, the $\mathbb{F}_q$-algebra isomorphism
\[ \Psi: \mathbb{F}_{q}[x]/\langle (x^n - \mu) \rangle \longrightarrow \mathbb{F}_{q}[x]/\langle (x^n - \lambda) \rangle \]
is called \textbf{isometry} if it preserves the Hamming distances on the algebras, i.e., 
\[ \d_H(\Psi(\vec{a}), \Psi(\vec{b})) = \d_H(\vec{a}, \vec{b}), \quad \forall \vec{a}, \vec{b} \in \mathbb{F}_{q}[x]/\langle (x^n - \mu) \rangle.  \]
Hence, all the $\lambda$-constacyclic codes of length $n$ are one-to-one corresponding to all the $\mu$-constacyclic codes of length $n$ such that the corresponding constacyclic codes have the same weight distribution. 

\begin{prop}[\texorpdfstring{\cite[Corollary 3.4 \& Corollary 3.5]{FFTA-CFLL}}{}]
	If there exists an element $a \in \mathbb{F}_{q}^{\ast}$ such that $a^n \lambda = 1$, then the $\lambda$-constacyclic codes are isometric to the cyclic codes of length $n$, where the isometry map $\Psi_a$ from $\mathbb{F}_{q}[x]/\langle (x^n - 1) \rangle$ to $\mathbb{F}_{q}[x]/\langle (x^n - \lambda) \rangle$ is defined as 
	\[ \Psi_a (f(x)) = f(ax) \]
	In particular, if $\gcd(n,q-1) = 1$, then for every $\lambda \in \mathbb{F}_{q}^{\ast}$, the $\lambda$-constacyclic codes are isometric to the cyclic codes of length $n$. 
\end{prop}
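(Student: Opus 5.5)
The approach is to verify directly that the substitution $x \mapsto ax$ gives a well-defined $\mathbb{F}_q$-algebra isomorphism between the two quotient rings, and that it preserves Hamming weight because it only rescales each coordinate by a nonzero constant. The ``in particular'' part then reduces to a statement about $n$-th powers in the cyclic group $\mathbb{F}_q^{\ast}$.

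\textbf{Step 1: $\Psi_a$ is a well-defined algebra homomorphism.} Let $\widetilde{\Psi}_a : \mathbb{F}_q[x] \to \mathbb{F}_q[x]$ be the substitution $f(x) \mapsto f(ax)$. This is an $\mathbb{F}_q$-algebra endomorphism of $\mathbb{F}_q[x]$. It is invertible, with inverse $f(x) \mapsto f(a^{-1}x)$. The key computation uses the hypothesis $a^n\lambda = 1$, that is $a^{-n} = \lambda$:
\[ \widetilde{\Psi}_a(x^n - 1) = a^n x^n - 1 = a^n\left(x^n - a^{-n}\right) = a^n\left(x^n - \lambda\right). \]
Since $a^n$ is a unit, $\widetilde{\Psi}_a$ maps the ideal $\langle x^n - 1 \rangle$ onto the ideal $\langle x^n - \lambda \rangle$. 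Hence it descends to an $\mathbb{F}_q$-algebra isomorphism
\[ \Psi_a : \mathbb{F}_q[x]/\langle x^n - 1 \rangle \longrightarrow \mathbb{F}_q[x]/\langle x^n - \lambda \rangle. \]
In particular, $\Psi_a$ sends ideals (cyclic codes) bijectively onto ideals ($\lambda$-constacyclic codes).

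\textbf{Step 2: $\Psi_a$ is an isometry.} Represent each class by its unique representative of degree less than $n$. If $f(x) = \sum_{i=0}^{n-1} u_i x^i$, then
\[ f(ax) = \sum_{i=0}^{n-1} a^i u_i x^i \]
again has degree less than $n$, so no reduction modulo $x^n - \lambda$ is needed. In vector form, $\Psi_a$ is the coordinatewise scaling $(u_0, \dots, u_{n-1}) \mapsto (u_0, a u_1, \dots, a^{n-1}u_{n-1})$. Since every $a^i$ is nonzero, supports are preserved, so $\wt(\Psi_a(\vec{u})) = \wt(\vec{u})$. Because $\Psi_a$ is linear, it follows that $\d_H(\Psi_a(\vec{a}), \Psi_a(\vec{b})) = \d_H(\vec{a}, \vec{b})$. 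Consequently, corresponding codes have identical weight distributions.

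\textbf{Step 3: the ``in particular'' claim.} The group $\mathbb{F}_q^{\ast}$ is cyclic of order $q-1$. The power map $t \mapsto t^n$ on a cyclic group of order $q-1$ is a bijection exactly when $\gcd(n, q-1) = 1$. So under this hypothesis, for every $\lambda \in \mathbb{F}_q^{\ast}$ there is an $a$ with $a^n = \lambda^{-1}$, that is $a^n\lambda = 1$. Steps 1 and 2 then apply to this $a$.

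The only delicate point is the bookkeeping in Step 2. One must check that the coefficient description of $\Psi_a$ is valid on the canonical representatives of degree less than $n$. This holds precisely because substitution does not increase degree, so the reduction modulo $x^n - \lambda$ never has to be performed. Everything else is routine.
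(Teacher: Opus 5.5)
Your proof is correct and complete. Each step holds: the substitution $x\mapsto ax$ is an automorphism of $\mathbb{F}_q[x]$ sending $x^n-1$ to the unit multiple $a^n(x^n-\lambda)$; on canonical representatives it acts as the diagonal scaling by $1,a,\dots,a^{n-1}$, so Hamming weight is preserved; and $t\mapsto t^n$ is a bijection of $\mathbb{F}_q^{\ast}$ when $\gcd(n,q-1)=1$. The paper gives no proof of its own and only cites \cite{FFTA-CFLL}, whose argument is this same substitution, so there is nothing to add.
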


\begin{proof}[\textbf{Proof of Theorem \ref{thm-q-ary-q-1}}]
	Since $\gcd(n,q-1) = 1$, then for every $\lambda \in \mathbb{F}_{q}^{\ast}$, the $\lambda$-constacyclic codes are isometric to the cyclic codes of length $n$. 
	Considering that $h_q(n) = 0$, all the $\lambda$-constacyclic codes corresponding to the irreducible cyclic codes contain a codeword of Hamming weight $n$. 
	
	Let $g(x)$ be a irreducible factor $x^{(q-1)n} - 1$. 
	Since 
	\[ x^{(q-1)n} - 1 = (x^n)^{q-1} - 1 = \prod_{\alpha \in \mathbb{F}_q^{\ast}} (x^n - \alpha), \]
	then $g(x)$ divides $x^n - a$ for some $a \in \mathbb{F}_{q}^{\ast}$. 
	Thus the constacyclic code of length $n$ over $\mathbb{F}_q$ generated by $g(x)$ contains a codeword of Hamming weight $n$ due to $h_q(n) = 0$ and $\gcd(n,q-1) = 1$. 
	That is to say, there exists a polynomial $h$ such that $c(x) = h(x) \cdot \dfrac{x^n-a}{g(x)}$ has no zero coefficients, where $\deg(h) < n - \deg(g)$. 
	Hence, the irreducible cyclic code of length $(q-1)n$ over $\mathbb{F}_q$ generated by $(x^{(q-1)n} - 1) / g(x)$ contains a codeword
	\begin{align*}
		h \cdot \frac{x^{(q-1)n} - 1}{g(x)} & = h \cdot \frac{x^n-a}{g(x)} \cdot (y^{q-2} + a y^{q-3} + a^2 y^{q-4} + \cdots + a^{q-1} x + a^{q-2}) \\
		& = a^{q-2} c(x) + a^{q-1} x^n c(x) + \cdots + a x^{(q-3) n} c(x) + x^{(q-2) n} c(x)
	\end{align*}
	which has no zero coefficients, where $y = x^n$. 
	Therefore, $h_q((q-1)n) = 0$. 
\end{proof}

%Let $r = q^s$, $1 < N \mid (r-1)$ and $n = (r-1)/N$, where $q$ is a prime power. 
%Let $\alpha$ be a primitive element of $\mathbb{F}_r$ and let $\theta = \alpha^N$. 
%Then each irreducible cyclic code of length $n$ over $\mathbb{F}_q$ can be represented as follows
%\[ \mathcal{C}(r, N) = \{ \tr_{r/q}(\beta), \tr_{r/q}(\beta \theta), \dots, \tr_{r/q}(\beta \theta^{n-1}) : \beta \in \mathbb{F}_r \}  \]
%is an irreducible cyclic code \cite{TIT-D}, where $\tr_{r/q}$ is the trace function from $\mathbb{F}_r$ onto $\mathbb{F}_q$. 
%Its size is 
%Note that $\theta$ is the root of some irreducible polynomial $g(x) \mid (x^n - 1)$. 
%It is clear that $\mathcal{C}(r, N)$ contains a codeword of Hamming weight $n$ if and only if there exists a $\beta \in \mathbb{F}_r$ such that $\tr_{r/q}(\beta \theta^i) \neq 0$ for each $0 \leqslant i \leqslant n - 1$.

\subsection{support weight}

Let $\mathcal{C}$ be an $[n,k]$ linear code over $\mathbb{F}_q$, and $\mathcal{D}$ be a subcode of $\mathcal{C}$ with $\dim(\mathcal{D}) = r$. 
Let $\chi(\mathcal{D})$ be the \textbf{support weight} of $\mathcal{D}$ defined as follows
\[ \chi(\mathcal{D}) = |\{ i: \text{there exists } (c_1, c_2, \dots, c_n) \in \mathcal{D} \text{ such that } c_i \neq 0 \}|. \]
For $0 \leqslant r \leqslant k$ and $0 \leqslant i \leqslant n$, let $A_{i}^{r}$ be the number of $r$-dimensional subcodes of $\mathcal{C}$ of support weight $i$. 
The $r$-th support weight distribution is the sequence
\[ A_{0}^{r}, A_{1}^{r}, \dots, A_{n}^{r}. \]
%Denote the maximum (minimum) support weight by $D_r$ ($d_r$). 

\begin{prop}
	Let $\mathcal{C}$ be a cyclic code of length $n$ over $\mathbb{F}_q$. 
	If $A_{n}^{r} = 0$ for some positive integer $r$, then $h_q(n) \geqslant r$. 
\end{prop}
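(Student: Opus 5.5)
The plan is to produce $r$ linearly independent vectors that work together in the sense of \Cref{def-ccs-vector}, and then invoke the Remark after \Cref{def-ccs-poly}: $h_q(n)$ is at least the maximum number of linearly independent vectors that work together. The natural choice comes from duality. Let $\mathcal{C}$ be the cyclic code with $A_n^r = 0$, so in particular $\dim \mathcal{C} \geqslant r$. Pick any $r$-dimensional subcode $\mathcal{D} \subseteq \mathcal{C}$ with basis $\vec{v}^{(1)}, \dots, \vec{v}^{(r)}$. I claim these vectors work together, which gives $h_q(n) \geqslant r$.

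To verify the claim, fix an arbitrary $\vec{u} \in \mathbb{F}_q^n$ and consider the $r$ rows $\vec{v}^{(j)}$. The quantity $\vec{v}^{(j)} \cdot \sigma^k \vec{u}$ should be rewritten as the $k$-th coordinate (up to a fixed reindexing) of a codeword. Concretely, $\vec{v}^{(j)} \cdot \sigma^k\vec{u} = \sum_i v^{(j)}_i u_{i-k}$. As $k$ varies this is a cyclic correlation. Setting $\tilde{\vec{u}}(x) = \sum_i u_{-i} x^i$ (the reciprocal of $\vec{u}$), this becomes the coefficient of $x^{k}$ in the product $\vec{v}^{(j)}(x)\,\tilde{\vec{u}}(x)$, after a sign/index relabelling that is independent of $j$. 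Since $\mathcal{C}$ is an ideal, the map $\vec{c} \mapsto \vec{c}\,\tilde{\vec{u}}$ is a linear map from $\mathcal{C}$ into $\mathcal{C}$. Hence the vectors $\vec{w}^{(j)} = \vec{v}^{(j)}\tilde{\vec{u}}$ all lie in $\mathcal{C}$ and span a subcode $\mathcal{D}' = \mathcal{D}\tilde{\vec{u}}$ of dimension at most $r$.

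Working together now means exactly that some coordinate $k$ is zero in every $\vec{w}^{(j)}$, that is, $\chi(\mathcal{D}') < n$. If $\dim \mathcal{D}' = r$, this is immediate from $A_n^r = 0$. If $\dim \mathcal{D}' < r$, I extend $\mathcal{D}'$ to an $r$-dimensional subcode $\mathcal{D}''$ of $\mathcal{C}$, which is possible since $\dim\mathcal{C} \geqslant r$. Then $\chi(\mathcal{D}') \leqslant \chi(\mathcal{D}'') < n$, again by $A_n^r=0$, because support weight is monotone under inclusion. The degenerate case $\mathcal{D}' = \{\vec{0}\}$ is trivial.

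The main obstacle is the bookkeeping in the second step: one has to check that the dot product with shifts really equals a coordinate of a product inside $\mathbb{F}_q[x]/\langle x^n-1\rangle$ with the same reindexing for all $j$, so that a single $k$ serves simultaneously. This is just the equivalence of \Cref{def-ccs-vector} and \Cref{def-ccs-poly} stated in the Remark. If the reversal map causes trouble, I would instead use that the reciprocal of a cyclic code is cyclic with the same support weight distribution, and apply the argument to that code.
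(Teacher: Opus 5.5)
Your proof is correct and is essentially the paper's argument: take a basis of an $r$-dimensional subcode of $\mathcal{C}$, note that multiplying by any $\vec{g}$ keeps the span inside the ideal $\mathcal{C}$, and use $A_n^r=0$ to get a common zero coordinate; you also supply details the paper skips, namely choosing the vectors inside $\mathcal{C}$ rather than arbitrarily in $\mathbb{F}_q^n$, handling the case where the products span fewer than $r$ dimensions by extension and monotonicity of $\chi$, and the identity $\vec{v}\cdot\sigma^k\vec{u}=(\vec{v}\,\tilde{\vec{u}})_k$, which holds exactly with no extra relabelling. One small correction: $\dim\mathcal{C}\geqslant r$ does not follow from $A_n^r=0$, which holds vacuously when $r>\dim\mathcal{C}$ (where the claim would be false); it is the standing assumption $r\leqslant k$ built into the paper's definition of $A_i^r$.
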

\begin{proof}
	Let $\vec{f}^{(1)}$, $\vec{f}^{(2)}$, $\cdots$, $\vec{f}^{(r)} \in \mathbb{F}_q^n$ be $m$ linearly independent vectors (polynomials).
	Since $A_{n}^{r} = 0$, then for each polynomial $\vec{g}$, there exists a $k \in \mathbb{Z}_n$ such that 
	\[ (\vec{f}^{(1)} \cdot \vec{g})_k = (\vec{f}^{(2)} \cdot \vec{g})_k = \cdots = (\vec{f}^{(r)} \cdot \vec{g})_k = 0, \]
	which implies that these vectors work together. 
	Hence, $h_q(n) \geqslant r$. 
\end{proof}

\begin{ex}
	For the $[n = \frac{q^k-1}{q-1}, k]$ Simplex code $\mathcal{S}_k$, according to \cite[Theorem 3]{TIT-SLH}, its support weight distribution $A_i^{j} \neq 0$ if and only if 
	\[ i = \frac{q^k - q^{k-j}}{q-1}. \]
	For all $j < k$, we have $A_i^j > 0$. 
	Thus, if $\mathcal{S}_k$ is a cyclic code, i.e., $\gcd(k, q-1) = 1$ (see \cite{Huffman-Pless} for more details), then $h_q(n) \geqslant k-1$. 
	Hence, $h_q(n) = k-1$ since $h_q(n) \leqslant \lfloor \log_q(n) \rfloor = k - 1$ by \cite[Lemma 4]{EJC-CER}. 
\end{ex}

\section{irreducible cyclic codes and \texorpdfstring{$h_q(n) = 0$}{}}\label{sec-ICC}

Let $\mathbb{F}_q$ be the finite field with $q$ elements and $n$ be a positive integer co-prime to $q$. 
Let $\ord_n(q)$ be the \textbf{multiplicative order} of $q$ modulo $n$, which is the smallest positive integer $k$ such that $q^k \equiv 1 \pmod{n}$. 
Let $\theta_n$ be a primitive $n$-th root of unity in some extension field of $\mathbb{F}_q$. 
The $q$-cyclotomic coset modulo $n$ containing $i$ is defined by 
\[ \mathcal{C}_i = \{ i \cdot q^j \pmod{n}: j = 0,1,\dots,n-1 \}. \]
Corrseponding to the $q$-cyclotomic coset $\mathcal{C}_i$, let 
\[ M_s^{(n)}(x) = \prod_{j \in \mathcal{C}_i} (x - \theta_n^j)   \]
be the minimal polynomial of $\theta_n^s$ over $\mathbb{F}_q$. 
Let $\mathcal{M}_s^{(n)}$ be the cyclic code of length $n$ over $\mathbb{F}_q$ generated by $(x^n - 1) / M_s^{(n)}(x)$. 
Here, $\mathcal{M}_s^{(n)}$ is called the $q$-ary irreducible cyclic code. 
Furthermore, if $\mathcal{C}_{s_1}, \mathcal{C}_{s_2}, \dots, \mathcal{C}_{s_k}$ are all the distinct $q$-cyclotomic cosets, then $\mathcal{M}_{s_1}^{(n)}, \mathcal{M}_{s_2}^{(n)}, \dots, \mathcal{M}_{s_k}^{(n)}$ are all the distinct $q$-ary irreducible cyclic codes of length $n$ over $\mathbb{F}_q$. 
We have the following result which is from \cite[Theorem 1]{FFTA-SB}. 
\begin{thm}
	Let $q$ be a prime power and $p$ be an odd prime co-prime to $q$. 
	Let $\theta_{n}$ be a primitive root modulo $n$ with $n = p^m$. 
	\begin{enumerate}[label=$\mathrm{(\arabic*)}$]
		\item The codes $\mathcal{M}_{0}^{(n)}$, $\mathcal{M}_{\theta_n^k p^j}^{(n)}$, $0 \leqslant j \leqslant m-1$, $0 \leqslant k \leqslant \frac{\varphi(p^{m-j})}{\ord_{p^{m-j}}(q)}$ are all the distinct $q$-ary irreducible cyclic codes of length $n = p^m$ over $\mathbb{F}_q$, where $\varphi$ is the Euler's Phi function. 
		
		\item The code $\mathcal{M}_{\theta_n^k p^j}^{(n)}$ is equivalent to the code $\mathcal{M}_{p^j}^{(n)}$ and therefore they have the same weight distribution. 
		
		\item All the non-zerocodewords in the code $\mathcal{M}_{0}^{(n)}$ have weight $p^m$. 
		
		\item The code $\mathcal{M}_{p^j}^{(n)}$ is the repetition code of the irreducible cyclic code $\mathcal{M}_{1}^{(p^{m-j})}$ of length $p^{m-j}$ coresponding to the $q$-cyclotomic coset $\mathcal{C}_1$, repeated $p^j$ times. 
		Furthermore, for each $w \geqslant 0$, we have 
		\[ A_{w}^{(p^m)} = 
		\begin{cases}
			0, & p^j \nmid w, \\
			A_{w'}^{(p^{m-j})}, & w = p^j w', 0 \leqslant w' \leqslant p^{m-j}, 
		\end{cases}
		\]
	\end{enumerate}
\end{thm}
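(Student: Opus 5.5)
The statement is a structural description of the $q$-ary irreducible cyclic codes of length $n=p^m$, and I would prove it through the correspondence between irreducible codes and $q$-cyclotomic cosets.

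\textbf{Part (1): classifying the cosets.} Since $\gcd(p,q)=1$, the distinct irreducible cyclic codes $\mathcal{M}_s^{(n)}$ correspond bijectively to the distinct $q$-cyclotomic cosets modulo $p^m$. I would partition $\mathbb{Z}_{p^m}$ according to the exact $p$-adic valuation of $s$. The element $0$ gives $\mathcal{M}_0^{(n)}$. For $0\le j\le m-1$, the elements of valuation $j$ are $p^j u$ with $u\in(\mathbb{Z}/p^{m-j})^\ast$. The unit group $(\mathbb{Z}/p^{m-j})^\ast$ is cyclic of order $\varphi(p^{m-j})$, since $p$ is odd. Multiplication by $q$ acts on it as the subgroup $\langle q\rangle$, which has order $\ord_{p^{m-j}}(q)$. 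Reducing the primitive root $\theta_n$ modulo $p^{m-j}$ gives a generator of this cyclic group. Hence the cosets of $\langle q\rangle$ are represented by $\theta_n^k$ for $0\le k<\varphi(p^{m-j})/\ord_{p^{m-j}}(q)$. This yields exactly the listed representatives $\theta_n^k p^j$.

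\textbf{Part (2): equivalence.} Set $t=\theta_n^k$, which is a unit modulo $n$. The coordinate permutation $i\mapsto t i \pmod n$ sends the code with nonzeros $\{\theta_n^{-s}\}$ to the code with nonzeros $\{\theta_n^{-ts}\}$. Equivalently, I would use the trace representation $c_i=\mathrm{Tr}(\beta\gamma^{i})$ and replace $\gamma$ by $\gamma^t$, which reindexes the coordinates via $i\mapsto ti$. Therefore $\mathcal{M}_{tp^j}^{(n)}$ and $\mathcal{M}_{p^j}^{(n)}$ are permutation equivalent and have the same weight distribution.

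\textbf{Part (3).} $M_0^{(n)}=x-1$, so the code is generated by $(x^n-1)/(x-1)=1+x+\dots+x^{n-1}$. Its nonzero codewords are the constant vectors $(c,\dots,c)$ with $c\neq0$, all of weight $p^m$.

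\textbf{Part (4): repetition structure.} The roots of $M_{p^j}^{(n)}$ are $\theta_n^{p^j q^\ell}$. These are $p^{m-j}$-th roots of unity, so $M_{p^j}^{(n)}=M_1^{(p^{m-j})}$. Write $N=p^{m-j}$. The code is the set of vectors $c_i=\mathrm{Tr}(\beta\xi^{i})$ with $\xi=\theta_n^{p^j}$ of order $N$, possibly up to the inverse convention, which does not matter. Then $c_{i+N}=c_i$, so every codeword is a concatenation of $p^j$ copies of a codeword of $\mathcal{M}_1^{(N)}$. Dimensions agree, because both codes have dimension $\deg M_1^{(N)}$. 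The map $c'\mapsto(c',\dots,c')$ is therefore a bijection from $\mathcal{M}_1^{(N)}$ onto $\mathcal{M}_{p^j}^{(n)}$ that multiplies weights by $p^j$. The weight-distribution formula follows immediately.

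The only step needing care is the coset counting in part (1). There one must check that a primitive root modulo $p^m$ reduces to a generator modulo every $p^{m-j}$, which holds since reduction is a surjective homomorphism of cyclic groups. One must also fix the upper limit of $k$, which should be exclusive, to avoid a double count.
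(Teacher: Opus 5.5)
The paper gives no proof of this statement; it quotes it from Sharma--Bakshi \cite[Theorem 1]{FFTA-SB}, so your argument supplies what the paper omits. It is correct. Your proof uses the standard ingredients: cosets of $\langle q\rangle$ in the cyclic group $(\mathbb{Z}/p^{m-j}\mathbb{Z})^{\ast}$ for (1), the multiplier $i\mapsto ti$ for (2), the repetition code for (3), and periodicity of the trace description together with a dimension count for (4). You also correctly justify that a primitive root modulo $p^m$ reduces to a generator modulo $p^{m-j}$.

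Your remark on the range of $k$ is a real correction to the statement as printed. Put $e=\varphi(p^{m-j})/\ord_{p^{m-j}}(q)$. Modulo $p^{m-j}$ one has $\langle q\rangle=\langle\theta_n^{e}\rangle$, so $k=e$ gives the same cyclotomic coset as $k=0$. With the inclusive bound, $\mathcal{M}^{(n)}_{p^j}$ would be listed twice, and the "all distinct" claim would fail. The correct range is $0\leqslant k\leqslant e-1$.

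Two small bookkeeping points.
\begin{itemize}
\item The paper overloads $\theta_n$: it is an integer primitive root modulo $n$ in (1)--(2), but a primitive $n$-th root of unity in the definition of $M_s^{(n)}$. You use it in both senses, which is harmless but worth stating explicitly.
\item The equality $M^{(n)}_{p^j}=M^{(p^{m-j})}_1$ in (4) requires the normalization $\theta_{p^{m-j}}:=\theta_n^{p^j}$. With a different choice the two codes differ by a multiplier, which by (2) does not affect the weight formula.
\end{itemize}

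If you want to avoid the trace representation in (4), set $N=p^{m-j}$ and $M=M_1^{(N)}$. The factorization
\[
\frac{x^n-1}{M(x)}=\frac{x^N-1}{M(x)}\,\bigl(1+x^N+\cdots+x^{(p^j-1)N}\bigr)
\]
gives the repetition structure directly.
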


\begin{rem}
	By \cite[Lemma 2]{FFTA-SBR}, the above theorem also holds when $p=2$. 
	Moreover, we have $A_{p^m}^{(p^m)} \neq 0$ if and only if $A_{p^{m-j}}^{(p^{m-j})} \neq 0$. 
	So we only need to consider the weight distribution of the code $\mathcal{M}_{1}^{(p^{r})}$ for some $r = m-j$. 
\end{rem}

In this section, we will show 

\begin{thm}
	Let $q$ be a prime power and $n = p^m$ be a prime power co-prime to $q$. 
	Then we have the following statements. 
	\begin{enumerate}[label=$\mathrm{(\arabic*)}$]
		\item 
		If $q \equiv 1 \pmod{4}$, then $h_q(2^m) = 0$. 
		
		\item 
		Let $q = -1 + 2^b c$, $b \geqslant 2$, $c$ odd, i.e., $q \equiv 3 \pmod{4}$. 
		If $r \geqslant 3$ and $t = \min\{ r, b+1 \}$, then $h_q(2^m) = 0$ when $q - 2^{t-1} + 1 \neq 0$ and $h_q(2^m) \geqslant 1$ when $q - 2^{t-1} + 1 = 0$. 
		
		\item 
		If $\ord_{p^m}(q) = \varphi(p^m)$, then $h_q(p^m) = 0$. 
		
		\item  
		If $\ord_{p^m}(q) = p^d$, then $h_q(p^m) = 0$. 
		
		\item  
		If $\ord_{p^m}(q) = 2p^d$ and $q - p^u + 1 \neq 0$, then $h_q(p^m) = 0$ where $u = \min\{ r, m-d \}$.
	\end{enumerate}
\end{thm}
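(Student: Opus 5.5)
The plan is to use the criterion stated earlier: $h_q(n)=0$ if and only if every cyclic code of length $n$ over $\mathbb{F}_q$ contains a codeword of weight $n$. A first reduction is to irreducible codes. Every nonzero cyclic code $\langle g\rangle$ contains a minimal ideal, namely an irreducible cyclic code $\mathcal{M}_s^{(n)}$. A full-weight word in a subcode is a full-weight word in the code, so it suffices to show that every irreducible cyclic code of length $n=p^m$ has a word of weight $n$.

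By the theorem from \cite{FFTA-SB} and the following remark, this reduces further. $\mathcal{M}_0^{(n)}$ (the repetition code) is trivially fine. Each $\mathcal{M}_{\theta^k p^j}^{(n)}$ is equivalent to $\mathcal{M}_{p^j}^{(n)}$, which is a $p^j$-fold repetition of $\mathcal{M}_1^{(p^{r})}$ with $r=m-j$. So everything reduces to deciding whether $A^{(p^r)}_{p^r}\neq 0$ for the code $\mathcal{M}_1^{(p^r)}$, for each $1\le r\le m$. In trace form this code is $\{(\mathrm{tr}(\beta\theta^i))_{i}:\beta\in\mathbb{F}_{q^k}\}$, where $k=\ord_{p^r}(q)$ and $\theta$ is a primitive $p^r$-th root of unity. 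Its number of full-weight words is given by the standard character-sum formula
\[
\frac{1}{q^{p^r}}\sum_{\beta}\prod_{i}\bigl(q-\textstyle\sum_{c\in\mathbb{F}_q}\psi(c\,\mathrm{tr}(\beta\theta^i))\bigr),
\]
but in the cases at hand I would rather use the known explicit weight distributions of these few-weight codes.

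Case by case:
\begin{itemize}
\item In (3), $\ord_{p^r}(q)=\varphi(p^r)$ for every $r\le m$, because a primitive root modulo $p^m$ stays primitive modulo $p^r$. Then $M_1^{(p^r)}=\Phi_{p^r}(x)$ and $\mathcal{M}_1^{(p^r)}$ is generated by $x^{p^{r-1}}-1$. Any word $h(x)(x^{p^{r-1}}-1)$ with $\deg h<p^{r-1}$ and $h$ having all coefficients nonzero has full support. So (3) holds, and the same argument gives (1) for the $2$-power cyclotomic polynomial when $q\equiv1\pmod 4$, after checking the relevant orders.
\item In (4) and (5), the small order $p^d$ or $2p^d$ makes $\mathcal{M}_1^{(p^r)}$ a code of dimension $k\in\{p^d,2p^d\}$ whose length is a multiple of a subfield-type structure. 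I would use the semiprimitive/two-weight results (Theorem of \cite{FFTA-SB} for $\ord=2p^d$) to write down $A_{p^r}$ explicitly. The full weight then fails to occur exactly when a coefficient like $q-p^u+1$ vanishes, which matches the stated hypothesis.
\item Case (2) is the analogous computation for $n=2^r$ with $q\equiv 3\pmod 4$, using the explicit weight distributions for $2$-power lengths (via \cite[Lemma 2]{FFTA-SBR}). When $q-2^{t-1}+1=0$, the count of full-weight words in some irreducible code vanishes, and the earlier criterion gives $h_q(2^m)\ge1$.
\end{itemize}

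The main obstacle is the explicit evaluation of $A_{p^r}$ in cases (2), (4) and (5): extracting the number of full-weight codewords from Gauss-period data and showing that it is nonzero precisely under the stated arithmetic condition. I would first try the closed-form weight distributions from \cite{FFTA-SB,FFTA-SBR} and read off the top coefficient, and only fall back on direct Gauss-sum estimates if that fails.
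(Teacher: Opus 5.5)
Your reductions match the ones the paper makes, partly tacitly: a nonzero cyclic code contains a minimal ideal, i.e.\ an irreducible code, and the Sharma--Bakshi repetition structure then reduces everything to $A_{p^r}^{(p^r)}$ for $\mathcal{M}_1^{(p^r)}$, $1\le r\le m$. For (2), (4), (5), reading $A_{p^r}^{(p^r)}$ off the explicit distributions is exactly the paper's proof. The step you leave out there is short. For $r>u$, each of the $p^{r-u}$ summands lies in $\{0,p^u-1,p^u\}$, hence is at most $p^u$, so the only tuple summing to $p^r$ is $(p^u,\dots,p^u)$. This gives $A_{p^r}^{(p^r)}=\bigl((q-1)(q-p^u+1)\bigr)^{p^{r-u}}$, and similarly with $2$ and $t$ in (2).

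The genuine problems are the direct arguments you give for (3) and (1). In (3), take $N=p^{r-1}$ and $\deg h<N$. Then $h(x)(x^N-1)=x^Nh(x)-h(x)$ is supported in $\{0,\dots,2N-1\}$, so it misses $p^r-2N>0$ coordinates because $p$ is odd. The correct description is that $\langle x^N-1\rangle$ consists of the words $(F_0\mid F_1\mid\cdots\mid F_{p-1})$, with blocks of length $N$, satisfying $F_0+\cdots+F_{p-1}=0$. Hence a full-weight word exists iff $p$ nonzero elements of $\mathbb{F}_q$ can sum to $0$. For odd $p$ this holds iff $q\geq 3$: take $1,a,-(1+a)$ with $a\notin\{0,-1\}$, padded by pairs $1,-1$. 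It fails for $q=2$: $\ord_3(2)=\varphi(3)$, yet the even-weight code of length $3$ has no word of weight $3$, so $h_2(3)\neq 0$. So (3) needs the extra hypothesis $q\geq 3$. The paper's proof also uses this silently, through the factor $(q-2)^{p-2t}$ in $A(\vec{v};\vec{\ell})$. With that hypothesis and the corrected codeword, your route gives a far shorter proof of (3) than the paper's appeal to the $N(v)$ formula.

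In (1), ``the same argument'' does not apply, and checking the orders refutes the analogy rather than confirming it. If $q=1+2^bc$ with $b\geq 2$, then $\ord_{2^r}(q)=2^{\max(r-b,0)}<2^{r-1}=\varphi(2^r)$ for every $r\geq 2$. So $\Phi_{2^r}(x)=x^{2^{r-1}}+1$ is reducible over $\mathbb{F}_q$, and $\mathcal{M}_1^{(2^r)}\neq\langle x^{2^{r-1}}-1\rangle$. You need one of two fixes:
\begin{enumerate}
\item Use the Sharma--Bakshi--Raka distribution, as the paper does: all $q-1$ nonzero words have weight $2^r$ when $r\le b$, and $A_{2^r}^{(2^r)}=(q-1)^{2^{r-b}}$ when $r>b$.
\item Argue directly. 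With $k=\ord_{2^r}(q)=2^{\max(r-b,0)}$ one has $\theta^{k}\in\mathbb{F}_q^{*}$, so the minimal polynomial of $\theta$ is $x^{k}-\theta^{k}$. Every codeword of $\mathcal{M}_1^{(2^r)}$ then satisfies $c_{i+k}=\lambda c_i$ for a fixed $\lambda\in\mathbb{F}_q^{*}$. The code has dimension $k$, and a codeword vanishing on its first $k$ coordinates vanishes identically. Hence those $k$ coordinates can be chosen all nonzero, which gives a word of weight $2^r$.
\end{enumerate}
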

\begin{proof}
%	\Cref{prop-q-2-1,prop-q-2-3,prop-q-p-1,prop-q-p-2,prop-q-p-3}
	See Propositions \ref{prop-q-2-1}--\ref{prop-q-p-3}. 
\end{proof}

\subsection{\texorpdfstring{$q \equiv 1 \pmod{4}$}{} and \texorpdfstring{$p = 2$}{}}

\begin{thm}[\texorpdfstring{\cite[Theorem 1]{FFTA-SBR}}{}]
	Let $q = 1 + 2^b c$, $b \geqslant 2$, $c$ odd, i.e., $q \equiv 1 \pmod{4}$. 
	\begin{enumerate}[label=$\mathrm{(\arabic*)}$]
		\item If $r \leqslant b$, then the only possible non-zero weight in $\mathcal{M}_{1}^{(2^r)}$ is $2^r$, which is attained by all its $q-1$ non-zero codewords. 
		
		\item If $r > b$, then the weight distribution $A_{w}^{(p^r)}$ of $\mathcal{M}_{1}^{(2^r)}$ is given by 
		\[ A_{w}^{(2^r)} = 
		\begin{cases}
			0, & 2^{b} \nmid w, \\
			\binom{2^{r-b}}{w'} (q-1)^{w'}, & w = 2^{b} w', 0 \leqslant w' \leqslant 2^{r-b}. 
		\end{cases}
		\]
	\end{enumerate}
\end{thm}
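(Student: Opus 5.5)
The plan is to make the code $\mathcal{M}_1^{(2^r)}$ completely explicit by computing the minimal polynomial $M_1^{(2^r)}(x)$. Since $\mathcal{M}_1^{(2^r)}$ is the ideal with generator $(x^{2^r}-1)/M_1^{(2^r)}(x)$, it is exactly the annihilator of $M_1^{(2^r)}(x)$ in $\mathbb{F}_q[x]/\langle x^{2^r}-1\rangle$. Equivalently, it is the set of $2^r$-periodic sequences satisfying the linear recurrence with characteristic polynomial $M_1^{(2^r)}$. Its dimension is $\deg M_1^{(2^r)} = \ord_{2^r}(q)$.

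\emph{Step 1: the order of $q$ modulo $2^r$.} Write $q = 1 + 2^b c$ with $b \geqslant 2$ and $c$ odd. Because $b \geqslant 2$, induction using $q^{2^j}+1 \equiv 2 \pmod 4$ gives the lifting-the-exponent identity $v_2(q^{2^j}-1) = b + j$. Hence $\ord_{2^r}(q) = 1$ if $r \leqslant b$, and $\ord_{2^r}(q) = 2^{r-b}$ if $r > b$. Moreover, $\langle q \rangle \subseteq (\mathbb{Z}/2^r)^\ast$ is contained in the subgroup $\{x \equiv 1 \pmod{2^b}\}$, which has order $2^{r-b}$. So $\langle q\rangle$ is exactly that subgroup. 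This is the step where the hypothesis $q \equiv 1 \pmod 4$ is essential, and it is the main point to get right. For $q \equiv 3 \pmod 4$ the $2$-adic valuation jumps at the first squaring, which is why that case is treated separately.

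\emph{Step 2: the minimal polynomial.} Let $\theta = \theta_{2^r}$ and $K = 2^{r-b}$ (for $r > b$). By Step 1, the conjugates $\theta^{q^j}$ are exactly $\theta^{1+2^b s}$ for $0 \leqslant s < K$. These are the elements $\theta\zeta$ with $\zeta$ running over the $K$-th roots of unity, so
\[ M_1^{(2^r)}(x) = \prod_{\zeta^K=1}(x-\theta\zeta) = x^K - \gamma, \qquad \gamma = \theta^K. \]
Here $\gamma$ has order $2^b \mid q-1$, so $\gamma \in \mathbb{F}_q^\ast$. For $r \leqslant b$ we instead get $M_1^{(2^r)}(x) = x - \theta$ with $\theta \in \mathbb{F}_q^\ast$, which is the same formula with $K = 1$.

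\emph{Step 3: counting weights.} A vector $\vec{c}$ lies in the code iff $c_{i+K} = \gamma c_i$ for all $i$, with indices taken mod $2^r$. This is consistent because $\gamma^{2^r/K} = \gamma^{2^b} = 1$. Thus $\vec{c}$ is determined freely by $(c_0,\dots,c_{K-1}) \in \mathbb{F}_q^K$, which matches the dimension count. Since $\gamma \neq 0$, each nonzero $c_i$ with $i < K$ produces exactly $2^b$ nonzero coordinates $c_{i+tK}$. Hence $\wt(\vec{c}) = 2^b \cdot |\{ i < K : c_i \neq 0\}|$.

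Counting choices then gives $A_{2^b w'}^{(2^r)} = \binom{2^{r-b}}{w'}(q-1)^{w'}$, and all other weights have frequency $0$; this proves (2). In case (1) we have $K = 1$: the code is $\{(\beta, \beta\theta, \beta\theta^{2}, \dots)\}$ up to indexing convention. Every nonzero codeword then has full weight $2^r$, and there are $q-1$ of them.
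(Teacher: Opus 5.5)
Your proof is correct. Note that the paper gives no proof of this statement: it quotes it from \cite[Theorem 1]{FFTA-SBR} and uses it only as a black box for Proposition~\ref{prop-q-2-1}. So there is no argument in the paper to compare against, and what you supply is a self-contained derivation.

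Your route is elementary and complete:
\begin{itemize}
\item $v_2(q^{2^j}-1)=b+j$ gives $\ord_{2^r}(q)$.
\item $\langle q\rangle$ is the subgroup $\{x\equiv 1 \pmod{2^b}\}$ of $(\mathbb{Z}/2^r)^\ast$.
\item Hence $M_1^{(2^r)}(x)=x^{K}-\gamma$ with $\gamma=\theta^{K}\in\mathbb{F}_q^\ast$ of order $2^b$.
\item The code is then a twisted $2^b$-fold repetition of $\mathbb{F}_q^{K}$.
\end{itemize}
This makes the weight distribution transparent. In particular, it shows that $\mathcal{M}_1^{(2^r)}$ is monomially equivalent to $\{(\vec{a},\vec{a},\dots,\vec{a}) : \vec{a}\in\mathbb{F}_q^{K}\}$, which is all that Proposition~\ref{prop-q-2-1} needs: a codeword of full weight exists.

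There is one harmless convention slip. The ideal generated by $(x^{2^r}-1)/M_1^{(2^r)}(x)$ is the annihilator of $M_1^{(2^r)}$. The coefficient of $x^i$ in $c(x)(x^{K}-\gamma)$ is $c_{i-K}-\gamma c_i$, so codewords actually satisfy $c_{i+K}=\gamma^{-1}c_i$. The recurrence satisfied by the codeword sequences has the reciprocal of $M_1^{(2^r)}$ as its characteristic polynomial, not $M_1^{(2^r)}$ itself. You can confirm this directly from the generator $\sum_{t=0}^{2^b-1}\gamma^{-1-t}x^{tK}$.

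Your relation $c_{i+K}=\gamma c_i$ therefore describes the reversed code, i.e.\ the irreducible code attached to $\theta^{-1}$. Reversal preserves Hamming weight, and your count uses only $\gamma\neq 0$ and $\gamma^{2^b}=1$, both of which hold for $\gamma^{-1}$. So the conclusion is unaffected, as your remark ``up to indexing convention'' anticipates.
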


\begin{prop}\label{prop-q-2-1}
	If $q \equiv 1 \pmod{4}$, then $h_q(2^m) = 0$. 
\end{prop}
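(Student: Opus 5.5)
We reduce to irreducible cyclic codes and read off the full-weight entry from the theorem of \cite{FFTA-SBR} just quoted. By the characterization of $h_q(n)=0$ in Section~\ref{sec-pre}, it suffices to show that every nonzero cyclic code of length $2^m$ over $\mathbb{F}_q$ contains a codeword of Hamming weight $2^m$. Since $q$ is odd, $x^{2^m}-1$ is separable, so every nonzero cyclic code is a direct sum of irreducible cyclic codes. We therefore proceed in two steps: first treat irreducible cyclic codes, then pass to arbitrary cyclic codes.

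\textbf{Step 1 (irreducible codes).} By the theorem of \cite{FFTA-SB} (valid for $p=2$ by the subsequent remark), every irreducible cyclic code of length $2^m$ is either $\mathcal{M}_0^{(2^m)}$ or equivalent to some $\mathcal{M}_{2^j}^{(2^m)}$.
\begin{itemize}
\item All nonzero words of $\mathcal{M}_0^{(2^m)}$ have full weight.
\item $\mathcal{M}_{2^j}^{(2^m)}$ is a $2^j$-fold repetition of $\mathcal{M}_1^{(2^r)}$ with $r=m-j$. By the remark, $A_{2^m}^{(2^m)}\neq 0$ if and only if $A_{2^r}^{(2^r)}\neq0$ for $\mathcal{M}_1^{(2^r)}$.
\end{itemize}
Now apply \cite[Theorem 1]{FFTA-SBR} with $q=1+2^bc$, $b\ge2$. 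If $r\le b$, all $q-1$ nonzero words have weight $2^r$. If $r>b$, take $w'=2^{r-b}$; this gives $A_{2^r}^{(2^r)}=(q-1)^{2^{r-b}}>0$. The case $r=0$ (length $1$) is trivial. Hence every irreducible cyclic code of length $2^m$ contains a full-weight codeword.

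\textbf{Step 2 (general codes).} A general nonzero cyclic code $\mathcal{C}$ contains some irreducible cyclic subcode, since it is a direct sum of minimal ideals. That subcode has a full-weight word by Step 1, and this word lies in $\mathcal{C}$. So $\mathcal{C}$ contains a full-weight codeword, which proves the statement.

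\textbf{Expected obstacle.} The main point to check is the applicability of the cited structure results: that the equivalence in part (2) preserves full weight (it is a coordinate permutation or monomial equivalence, so it does), and that the repetition description holds for $p=2$. Everything else is immediate from the weight formula.
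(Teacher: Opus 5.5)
Your proof is correct and follows the route the paper intends: you use the structure theorem of \cite{FFTA-SB} (extended to $p=2$) to reduce to checking $A_{2^r}^{(2^r)}>0$ for $\mathcal{M}_1^{(2^r)}$, and then read this off from \cite[Theorem 1]{FFTA-SBR}, getting either all nonzero words of full weight ($r\le b$) or $(q-1)^{2^{r-b}}$ full-weight words ($r>b$). The paper states this proposition without writing out a proof, and your Step 2 (every nonzero cyclic code contains a minimal ideal, hence an irreducible cyclic subcode) makes explicit the reduction from arbitrary cyclic codes to irreducible ones that the paper's ``we only need to determine $A_{p^r}^{(p^r)}$'' arguments leave implicit.
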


\subsection{\texorpdfstring{$q \equiv 3 \pmod{4}$}{} and \texorpdfstring{$p = 2$}{}}

\begin{thm}[\texorpdfstring{\cite[Theorem 2 \& Theorem 3]{FFTA-SBR}}{}]
	Let $q = -1 + 2^b c$, $b \geqslant 2$, $c$ odd, i.e., $q \equiv 3 \pmod{4}$. 
	\begin{enumerate}[label=$\mathrm{(\arabic*)}$]
		\item 
		If $r = 1$, then the weight distribution $A_{w}^{(2)}$ of $\mathcal{M}_{1}^{(2)}$ is given by 
		\[ A_{0}^{(2)} = 1, \quad A_{1}^{(2)} = 0, \quad A_{2}^{(2)} = q-1. \]
		
		\item 
		If $r = 2$, then the weight distribution $A_{w}^{(4)}$ of $\mathcal{M}_{1}^{(4)}$ is given by 
		\[ A_{0}^{(4)} = 1, \quad A_{1}^{(4)} = 0, \quad A_{2}^{(4)} = 2(q-1), \quad A_{3}^{(4)} = 0, \quad A_{4}^{(4)} = (q-1)^2. \]
		
		\item 
		If $r \geqslant 3$, then set $t = \min\{ r, b+1 \}$ and the weight distribution $A_{w}^{(2^r)}$ of $\mathcal{M}_{1}^{(2^r)}$ is given by 
		\[ A_{w}^{(2^r)} = \sum m(w_1) m(w_2) \dots m(w_{2^{r-t}}), \]
		where the summation runs over all tuples $(w_1, \dots, w_{2^{r-t}})$ of non-negative integers $w_i$'s such that $w_1 + w_2 + \dots + w_{2^{r-t}} = w$ and 
		\[
		m(v) = 
		\begin{cases}
			1, & v = 0, \\
			(q-1) 2^{t-1}, & v = 2^t - 2, \\
			(q-1)(q - 2^{t-1} + 1), & v = 2^t, \\
			0, & \mathrm{otherwise}.
		\end{cases}
		\]
	\end{enumerate}
\end{thm}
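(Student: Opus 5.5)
The plan is to realise $\mathcal{M}_1^{(2^r)}$ concretely by traces, split the length-$2^r$ code into $2^{r-t}$ independent interleaved blocks of length $2^t$, and count zeros inside one block. Parts (1) and (2) ($r=1,2$) come out of the same computation with $t=r$, so I concentrate on (3).

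\textbf{Step 1 (trace description and the tower).} Let $k=\ord_{2^r}(q)$ and let $\theta$ be a primitive $2^r$-th root of unity in $\mathbb{F}_{q^k}$. Then
\[ \mathcal{M}_1^{(2^r)}=\{ (\mathrm{Tr}_{q^k/q}(\beta\theta^{i}))_{i\in\mathbb{Z}_{2^r}} : \beta\in\mathbb{F}_{q^k}\}, \]
and $\beta\mapsto$ codeword is a bijection. Since $q+1=2^bc$ and $q\equiv 3\pmod 4$, the $2$-adic valuation of $q^2-1$ is $b+1$, and the lifting-the-exponent lemma gives:
\begin{itemize}
\item $k=2$ when $r\leqslant b+1$;
\item $k=2^{r-b}$ when $r>b+1$.
\end{itemize}
In both cases $2^t\mid q^2-1$ and $[\mathbb{F}_{q^k}:\mathbb{F}_{q^2}]=2^{r-t}$, where $t=\min\{r,b+1\}$.

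Put $\eta=\theta^{2^{r-t}}$, a primitive $2^t$-th root of unity in $\mathbb{F}_{q^2}$. Consider the polynomial $x^{2^{r-t}}-\eta$ over $\mathbb{F}_{q^2}$. It has $\theta$ as a root, and its degree $2^{r-t}$ equals $[\mathbb{F}_{q^2}(\theta):\mathbb{F}_{q^2}]$, so it is the minimal polynomial of $\theta$. Hence $1,\theta,\dots,\theta^{2^{r-t}-1}$ is an $\mathbb{F}_{q^2}$-basis of $\mathbb{F}_{q^k}$.

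\textbf{Step 2 (decoupling into blocks).} Write $i=a+2^{r-t}l$ with $0\leqslant a<2^{r-t}$ and $l\in\mathbb{Z}_{2^t}$. By transitivity of trace,
\[ \mathrm{Tr}_{q^k/q}(\beta\theta^i)=\mathrm{Tr}_{q^2/q}\big(\gamma_a\eta^l\big), \qquad \gamma_a=\mathrm{Tr}_{q^k/q^2}(\beta\theta^a). \]
The map $\beta\mapsto(\gamma_a)_{a}$ is $\mathbb{F}_{q^2}$-linear, and it is injective by Step 1 together with nondegeneracy of the trace form. It is therefore a bijection $\mathbb{F}_{q^k}\to\mathbb{F}_{q^2}^{2^{r-t}}$.

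So, up to a permutation of coordinates, the code is the direct product of $2^{r-t}$ copies of the length-$2^t$ code
\[ D=\{(\mathrm{Tr}_{q^2/q}(\gamma\eta^l))_{l}:\gamma\in\mathbb{F}_{q^2}\}. \]
Its weight enumerator is the $2^{r-t}$-th power of that of $D$. Reading off coefficients gives exactly the convolution formula, with $m(v)$ the weight distribution of $D$.

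\textbf{Step 3 (weight distribution of $D$; the main step).} For $\gamma\neq0$, $\mathrm{Tr}_{q^2/q}(x)=x+x^q$ vanishes exactly on $x\in\omega\mathbb{F}_q^*$ for a fixed $\omega$ with $\omega^{q-1}=-1$. So the number of zero coordinates equals the number of $l$ with $\eta^l\in(\omega/\gamma)\mathbb{F}_q^*$.

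Pass to the cyclic group $G=\mathbb{F}_{q^2}^*/\mathbb{F}_q^*$, which has order $q+1$. Since $\langle\eta\rangle\cap\mathbb{F}_q^*$ has order $\gcd(2^t,q-1)=2$, the image of $\langle\eta\rangle$ in $G$ is a subgroup $H$ of order $2^{t-1}$, and $2^{t-1}\mid q+1$ because $t-1\leqslant b$. Each element of $H$ is hit by exactly two values of $l$. Hence a codeword has $2$ zeros if the class of $\omega/\gamma$ lies in $H$, and no zeros otherwise.

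As $\gamma$ runs over $\mathbb{F}_{q^2}^*$, its class is uniformly distributed over $G$. This gives:
\begin{itemize}
\item $(q^2-1)\cdot 2^{t-1}/(q+1)=(q-1)2^{t-1}$ codewords of weight $2^t-2$;
\item $(q^2-1)-(q-1)2^{t-1}=(q-1)(q-2^{t-1}+1)$ codewords of weight $2^t$.
\end{itemize}
Together with the zero word, this is exactly $m(v)$.

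The cases $r=1,2$ are the same count with $t=r\leqslant b+1$ and a single block. For $r=1$ we get $H$ trivial, so the $q-1$ nonzero words are split between weights $0$ and $2$; weight $0$ is impossible for a nonzero word, so $A_2^{(2)}=q-1$. (At $t=1$ the code $D$ has dimension $1$, not $2$, and the pairing on $l$ must be handled by direct inspection.) For $r=2$ the formula gives $2(q-1)$ words of weight $2$ and $(q-1)^2$ of weight $4$.

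The step I expect to be delicate is the bookkeeping in Steps 1 and 3: getting $\ord_{2^r}(q)$ and $|\langle\eta\rangle\cap\mathbb{F}_q^*|=2$ right, and checking $2^{t-1}\mid q+1$. These are where the hypothesis $q\equiv3\pmod4$ and the choice $t=\min\{r,b+1\}$ enter.
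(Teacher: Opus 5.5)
The paper gives no proof of this statement. It is quoted from \cite{FFTA-SBR}, and only its consequence for $A_{2^r}^{(2^r)}$ is used afterwards. So your proposal does not compete with an in-paper argument; it is a self-contained replacement for the citation, and it is correct apart from the $r=1$ slip below.

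Your key moves are all sound:
\begin{enumerate}
\item realising $\mathcal{M}_1^{(2^r)}$ as a trace code;
\item noting that $x^{2^{r-t}}-\eta$ is the minimal polynomial of $\theta$ over $\mathbb{F}_{q^2}$, so $1,\theta,\dots,\theta^{2^{r-t}-1}$ is an $\mathbb{F}_{q^2}$-basis;
\item using transitivity of trace to identify the code, up to a coordinate permutation, with the direct product of $2^{r-t}$ copies of the length-$2^t$ code $D$;
\item counting zeros in $D$ through the image $H$ of $\langle\eta\rangle$ in $\mathbb{F}_{q^2}^*/\mathbb{F}_q^*$.
\end{enumerate}
The arithmetic inputs are also right: $v_2(q^2-1)=b+1$, $k=2^{r-b}$ for $r>b+1$, and $|\langle\eta\rangle\cap\mathbb{F}_q^*|=2$.

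Compared with the bare citation, your route explains where $t=\min\{r,b+1\}$ and the two weights $2^t-2$, $2^t$ come from. The product structure makes the full-weight count $A_{2^r}^{(2^r)}=m(2^t)^{2^{r-t}}$ immediate, so the paper's later case analysis of how $2^r$ splits into block weights becomes unnecessary. It also shows that $m(2^t)=0$ means exactly $H=\mathbb{F}_{q^2}^*/\mathbb{F}_q^*$, i.e.\ $q+1=2^{t-1}$. The same tower-and-coset count, with $|\langle\eta\rangle\cap\mathbb{F}_q^*|=1$ in place of $2$, reproduces the $\ord_{p^m}(q)=2p^d$ distribution quoted in the last subsection. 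So it is a more uniform tool than quoting each formula separately.

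One slip to fix: your claim $k=2$ for $r\leqslant b+1$ fails at $r=1$. There $\theta=-1\in\mathbb{F}_q$ and $k=\ord_2(q)=1$, so $\mathbb{F}_q(\theta)$ contains no copy of $\mathbb{F}_{q^2}$. If you still parametrise by $\gamma\in\mathbb{F}_{q^2}$, the map $\gamma\mapsto(\mathrm{Tr}_{q^2/q}(\gamma),-\mathrm{Tr}_{q^2/q}(\gamma))$ is $q$-to-$1$. That is why your sentence about the nonzero words being ``split between weights $0$ and $2$'' does not describe the actual count. Just do (1) by hand: $\mathcal{M}_1^{(2)}=\langle x-1\rangle=\{(a,-a):a\in\mathbb{F}_q\}$.

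Relatedly, for $t\geqslant 2$ you should state explicitly that $\gamma\mapsto$ codeword of $D$ is injective; it follows because a nonzero $\gamma$ yields at most $2<2^t$ zeros. That injectivity is what turns your count over $\gamma\in\mathbb{F}_{q^2}^*$ into the weight distribution $m(v)$, including $m(0)=1$.
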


\begin{prop}\label{prop-q-2-3}
	Let $q = -1 + 2^b c$, $b \geqslant 2$, $c$ odd, i.e., $q \equiv 3 \pmod{4}$. 
	If $r \geqslant 3$ and $t = \min\{ r, b+1 \}$, then $h_q(2^m) = 0$ when $q - 2^{t-1} + 1 \neq 0$ and $h_q(2^m) \geqslant 1$ when $q - 2^{t-1} + 1 = 0$. 
\end{prop}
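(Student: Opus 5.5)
The plan is to apply the criterion that $h_q(n)=0$ if and only if every cyclic code of length $n$ over $\mathbb{F}_q$ contains a codeword of full Hamming weight $n$. It suffices to check this for the irreducible (minimal) cyclic codes. Indeed, every nonzero cyclic code of length $2^m$ contains some $\mathcal{M}_s^{(2^m)}$, since $\gcd(2^m,q)=1$ makes $x^{2^m}-1$ squarefree and the ring semisimple. A full-weight codeword of that minimal ideal is then a full-weight codeword of the larger code. Conversely, if $h_q(2^m)=0$ fails, some irreducible code has no full-weight word, so the lower bound $h_q(2^m)\geqslant 1$ follows from exhibiting one such irreducible code.

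By the theorem from \cite{FFTA-SB}, together with the Remark extending it to $p=2$, every irreducible cyclic code of length $2^m$ is $\mathcal{M}_0^{(2^m)}$ or is equivalent to some $\mathcal{M}_{2^j}^{(2^m)}$. The code $\mathcal{M}_0^{(2^m)}$ is the repetition code, all of whose nonzero words have weight $2^m$. The code $\mathcal{M}_{2^j}^{(2^m)}$ is a $2^j$-fold repetition of $\mathcal{M}_1^{(2^r)}$ with $r=m-j$. So $A^{(2^m)}_{2^m}\neq 0$ exactly when $A^{(2^r)}_{2^r}\neq 0$ for the code $\mathcal{M}_1^{(2^r)}$, and we must decide when $\mathcal{M}_1^{(2^r)}$ has a full-weight word. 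I read the statement as quantifying over the relevant $r\geqslant 3$ occurring as $m-j$. For $r=1,2$, the cited weight distributions give $A_2^{(2)}=q-1>0$ and $A_4^{(4)}=(q-1)^2>0$, so these lengths never obstruct.

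For $r\geqslant 3$, set $t=\min\{r,b+1\}$ and use the convolution formula
\[ A_{2^r}^{(2^r)}=\sum m(w_1)\cdots m(w_{2^{r-t}}), \qquad w_1+\cdots+w_{2^{r-t}}=2^r. \]
Each $w_i$ lies in $\{0,2^t-2,2^t\}$, and there are $2^{r-t}$ blocks, so the sum of all $w_i$ is at most $2^{r-t}\cdot 2^t=2^r$. Equality forces every $w_i=2^t$. Hence
\[ A_{2^r}^{(2^r)}=\bigl((q-1)(q-2^{t-1}+1)\bigr)^{2^{r-t}}. \]
This is nonzero exactly when $q-2^{t-1}+1\neq 0$, interpreted in $\mathbb{Z}$ as the count demands, since $q>1$.

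Therefore, if $q-2^{t-1}+1\neq 0$, every irreducible code has a full-weight word and $h_q(2^m)=0$. If $q-2^{t-1}+1=0$, the code $\mathcal{M}_{2^{m-r}}^{(2^m)}$ has no full-weight codeword, so $h_q(2^m)\neq 0$, i.e.\ $h_q(2^m)\geqslant 1$. The main obstacle is the bookkeeping: $t$ depends on $r=m-j$, which varies over the irreducible codes, so the condition must be checked for each $j$. The hard step is the reduction from arbitrary cyclic codes to irreducible ones. That reduction is clean via semisimplicity, and the rest is the counting above.
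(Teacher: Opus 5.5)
Your proposal is correct and follows essentially the same route as the paper. Both reduce to the full-weight count $A_{2^r}^{(2^r)}$ of $\mathcal{M}_1^{(2^r)}$ via the repetition structure, then use the convolution formula to show that only the tuple with every $w_i = 2^t$ contributes, giving $\bigl((q-1)(q-2^{t-1}+1)\bigr)^{2^{r-t}}$. Your observation that every contributing $w_i \leqslant 2^t$, so a total of $2^r$ forces equality, is a tidier version of the paper's $(a,b)$-uniqueness argument. You also make explicit two points the paper leaves implicit: the reduction from arbitrary cyclic codes to irreducible ones, and the dependence of $t$ on $r = m-j$.
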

\begin{proof}
	We only need to determine $A_{2^r}^{(2^r)}$. 
	If $r \leqslant 2$, then $A_{2^r}^{(2^r)} > 0$.
	So we only need to consider the case $r > 2$. 
	
	If $t = r \leqslant b+1$, then $2^{r-t} = 1$. 
	Hence, 
	\[ A_{2^r}^{(2^r)} = (q-1)(q - 2^{t-1} + 1). \]
	
	If $t < r$, then there exists integers $a,b$ such that 
	\[ 2^{r} = a \cdot 2^t + b \cdot (2^t - 2). \]
	It is clear that $(a,b) = (2^{r-t}, 0)$ is one solution. 
	We claim that it is the unique solution such that $a,b \geqslant 0$ and $a + b \leqslant 2^{r-t}$.  
	Let $S = a+b \leqslant 2^{r-t}$. 
	Since $b = 2^{t-1} S - 2^{r-1} \geqslant 0$, then 
	\[ 2^{r-t} \leqslant S \leqslant 2^{r-t}. \]
	That is to say, $S$ is unique, which implies that $b$ is unique. 
	Hence, 
	\[ A_{p^r}^{(p^r)} = (n(2^t))^a = (q-1)^a (q - 2^{t-1} + 1)^a, \]
	which completes the proof. 
\end{proof}

\begin{coro}
	If $q = 3$, then $h_q(2^m) = 0$ if $m \leqslant 2$ and $h_q(2^m) > 0$ if $m \geqslant 3$. 
	Specially, $h_3(8) = 1$. 
\end{coro}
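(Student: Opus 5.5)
We prove the corollary by specializing Proposition~\ref{prop-q-2-3} and the preceding weight-distribution theorem to $q=3$. Writing $3=-1+2^b c$ with $c$ odd gives $b=2$, $c=1$. By the theorem of the previous section, $h_q(2^m)=0$ holds if and only if every irreducible cyclic code of length $2^m$ contains a full-weight codeword. By the remark after the prime-power theorem, it suffices to examine $A_{2^r}^{(2^r)}$ for the codes $\mathcal{M}_1^{(2^r)}$, $0\leqslant r\leqslant m$, together with the code $\mathcal{M}_0$, all of whose nonzero words have full weight.

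For $m\leqslant 2$, the cases $r=1,2$ of the cited theorem give $A_2^{(2)}=q-1=2>0$ and $A_4^{(4)}=(q-1)^2=4>0$. The case $r=0$ is trivial. Hence every irreducible cyclic code has a full-weight word, and $h_3(2^m)=0$. One could alternatively note that $2\mid q-1$ and $4\leqslant q+1$ and quote the earlier list of known cases.

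For $m\geqslant 3$, take $r=m$ and $t=\min\{m,3\}=3$. Then $q-2^{t-1}+1=3-4+1=0$. The uniqueness argument in the proof of Proposition~\ref{prop-q-2-3} shows that $A_{2^m}^{(2^m)}=\bigl((q-1)(q-2^{t-1}+1)\bigr)^{2^{m-3}}=0$. So the irreducible code $\mathcal{M}_1^{(2^m)}$ has no codeword of weight $2^m$, and therefore $h_3(2^m)\geqslant 1$. In the language of the definitions, any nonzero generator of the complementary ideal works.

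For the special value $h_3(8)=1$, we already have $h_3(8)\geqslant1$. For the upper bound we use Theorem~\ref{thm-known-1}(3), which gives $h_3(8)\leqslant\lfloor\log_3 8\rfloor=1$, so $h_3(8)=1$. The only delicate step is confirming that the uniqueness of the decomposition $2^m=a\cdot 2^t+b(2^t-2)$ in the proof of Proposition~\ref{prop-q-2-3} applies here. For $m=3$ this reduces to the single factor $m(8)=(q-1)(q-3)=0$, and for larger $m$ the unique decomposition forces every factor to be $m(2^t)=0$.
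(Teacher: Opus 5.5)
Your proof is correct and follows the paper's intended route. You specialize Proposition~\ref{prop-q-2-3} to $q=3=-1+2^2$ (so $b=2$, $c=1$, and $q-2^{t-1}+1=0$ exactly when $t=3$, i.e.\ $r\geqslant 3$), use the cases $r\leqslant 2$ to get $h_3(2^m)=0$ for $m\leqslant 2$, and close $h_3(8)=1$ with $h_q(n)\leqslant\lfloor\log_q n\rfloor$. One caveat on your aside, which your proof does not use: if ``complementary ideal'' means $\langle M_1^{(2^m)}(x)\rangle$, the claim is false, since that ideal contains the all-ones word; the vectors that work are the nonzero elements of $\mathcal{M}_1^{(2^m)}$ itself, each of which generates that minimal ideal.
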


\subsection{\texorpdfstring{$\ord_{p^m}(q) = \varphi(p^m)$}{} with odd prime \texorpdfstring{$p$}{}}

To determine the weight distribution $A_{w}^{(p^r)}$ of $\mathcal{M}_{1}^{(p^r)}$, we need the following notations. 
For each $t \geqslant 1$, $v \geqslant 2$, let 
\[ P_t(v) = \left\{ \vec{v} = (v_1, \dots, v_t) \in \mathbb{Z}^t \ | \ 2 \leqslant v_j \leqslant p, \sum_{j=1}^{t} v_j = v \right\}. \]
For given $\vec{v} = (v_1, \dots, v_t) \in P_t(v)$, set 
\[ L(\vec{v}) = \left\{ \vec{\ell} = (\ell_1, \dots, \ell_t) \in \mathbb{Z}^t \ | \ \ell_j \geqslant v_j - 2, \sum_{j=1}^{t} \ell_j \leqslant p-2t \right\}. \]
For given $\vec{\ell} = (\ell_1, \dots, \ell_t) \in L(\vec{v})$, put $A(\vec{v}; \vec{\ell})$ to be equal to 
\[ a(\vec{\ell}) \binom{\ell_1}{v_1 - 2} \binom{\ell_2}{v_2 - 2} \dots \binom{\ell_t}{v_t - 2} (q-1)^t (q-2)^{v-2t}, \]
where
\[ a(\vec{v}) = \sum_{k_1 = 1}^{p-2t+1-\sum_{i=1}^{t}\ell_i} \sum_{k_2 = k_1 + \ell_1 + 2}^{p-2(t-1)+1-\sum_{i=2}^{t}\ell_i} \dots \sum_{k_{t-1} = k_{t-2}+\ell_{t-2}+2}^{p-3-\sum_{i=t-1}^{t}\ell_i} \sum_{k_{t} = k_{t-1}+\ell_{t-1}+2}^{p-1-\ell_t} 1. \]

\begin{thm}[\texorpdfstring{\cite[Theorem 2]{FFTA-SB}}{}] 
	Let $q$ be a prime power and $p$ be an odd prime co-prime to $q$. 
	If $\ord_{p^m}(q) = \varphi(p^m)$, then the weight distribution $A_{w}^{(p^r)}$ of $\mathcal{M}_{1}^{(p^r)}$ is given by 
	\[ A_{w}^{(p^r)} = \sum N(w_1) N(w_2) \cdots N(w_{p^{r-1}}), \]
	where the summation runs over all tuples $(w_1, \dots, w_{p^{r-1}})$ of non-negative integers $w_i$'s such that $w_1 + w_2 + \dots + w_{p^{r-1}} = w$ and 
	\[
	N(v) = 
	\begin{cases}
		1, & v = 0, \\
		0, & v = 1~\mathrm{or}~v \geqslant p+1, \\
		\sum_{t \geqslant 1} \sum_{\vec{v} \in P_t(v)} \sum_{\vec{\ell} \in L(\vec{v})} A(\vec{v}; \vec{\ell}), & \mathrm{otherwise}. 
	\end{cases}
	\]
\end{thm}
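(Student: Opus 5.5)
The plan is to make the code $\mathcal{M}_{1}^{(p^r)}$ completely explicit and then reduce its weight enumerator to a product of weight enumerators of length-$p$ pieces. Since $\ord_{p^m}(q)=\varphi(p^m)$, $q$ is a primitive root modulo $p^m$, hence also modulo $p^r$ for every $r\leqslant m$. So the $q$-cyclotomic coset $\mathcal{C}_1$ modulo $p^r$ is the full set of units, and
\[ M_1^{(p^r)}(x)=\Phi_{p^r}(x)=Q_p\bigl(x^{p^{r-1}}\bigr), \]
which is irreducible of degree $\varphi(p^r)$. The generator of $\mathcal{M}_1^{(p^r)}$ is then
\[ \frac{x^{p^r}-1}{\Phi_{p^r}(x)}=x^{p^{r-1}}-1, \]
and the codewords are exactly $(x^{p^{r-1}}-1)h(x)$ with $\deg h<\varphi(p^r)$.

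Next, I will split coordinates by residue class modulo $p^{r-1}$: write $i=a+jp^{r-1}$ with $0\leqslant a<p^{r-1}$ and $0\leqslant j<p$. Put $y=x^{p^{r-1}}$ and $h(x)=\sum_{a} x^{a}h_a(y)$ with $\deg h_a\leqslant p-2$. Then
\[ (y-1)h=\sum_a x^a (y-1)h_a(y), \]
and no reduction modulo $y^p-1$ is needed. Thus the codeword decomposes into $p^{r-1}$ independent interleaved subwords. The $a$-th subword is the coefficient vector of $(y-1)h_a(y)$, which ranges exactly over the length-$p$ code $\langle y-1\rangle\subseteq\mathbb{F}_q[y]/\langle y^p-1\rangle$, i.e.\ all vectors of $\mathbb{F}_q^p$ with coordinate sum $0$. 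Independence of the $h_a$ gives the product formula
\[ A_w^{(p^r)}=\sum_{w_1+\cdots+w_{p^{r-1}}=w}N(w_1)\cdots N(w_{p^{r-1}}), \]
where $N(v)$ is the number of weight-$v$ vectors of length $p$ with zero coordinate sum. This immediately gives $N(0)=1$, $N(1)=0$ and $N(v)=0$ for $v\geqslant p+1$.

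It remains to identify $N(v)$ for $2\leqslant v\leqslant p$ with the stated triple sum. A standard count gives the closed form
\[ N(v)=\binom{p}{v}\frac{(q-1)^v+(-1)^v(q-1)}{q}, \]
but the theorem's expression is organised around the polynomial $h_a$ rather than the codeword. I would therefore count directly. With $c=(y-1)h_a$ we have $c_k=h_{k-1}-h_k$ (with $h_{-1}=h_{p-1}=0$), so the zero coordinates of $c$ are determined by where consecutive coefficients of $h_a$ agree. I would decompose by the $t$ maximal blocks of nonzero coefficients of $h_a$. For each block, the nonzero entries of $c$ in that stretch include the two endpoint jumps, which contribute the factor $(q-1)$ per block. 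The internal positions where consecutive values differ are chosen among $\ell_j$ interior slots, which gives the factor $\binom{\ell_j}{v_j-2}$. Each such internal change has $q-2$ choices of new nonzero value, which gives $(q-2)^{v-2t}$. The nested sum $a(\vec{\ell})$ counts placements of the $t$ blocks, with gaps of at least one zero, inside positions $1,\dots,p-1$. I will check that this bijection matches the sets $P_t(v)$ and $L(\vec{v})$ precisely.

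I expect the main obstacle to be this last bookkeeping step: making the block decomposition of $h_a$ match the exact index ranges in $a(\vec{\ell})$ and the constraint $\sum_j\ell_j\leqslant p-2t$, including the boundary conventions at positions $0$ and $p-1$. As a safeguard, I would verify agreement with the closed form for small $p$, e.g.\ $p=3$ and $p=5$. If the combinatorial matching becomes unwieldy, I would instead prove directly that the stated sum equals the closed form via a generating-function computation over the blocks.
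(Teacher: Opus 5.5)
The paper gives no proof of this statement. It is quoted from \cite{FFTA-SB} and used only as a black box in Proposition~\ref{prop-q-p-1}, so there is no in-paper argument to compare with. Judged on its own, your argument is correct and complete.

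The identification $M_1^{(p^r)}=\Phi_{p^r}(x)=Q_p(x^{p^{r-1}})$ for $r\leqslant m$ is right, as is the generator $x^{p^{r-1}}-1$. So is the interleaving into $p^{r-1}$ independent copies of the $[p,p-1]$ zero-sum code. Together these give the product formula, with $N(v)$ equal to the number of weight-$v$ zero-sum words of length $p$.

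The bookkeeping you flag also closes exactly. Pad $h_a$ to $(0,h_0,\dots,h_{p-2},0)$; then $\wt((y-1)h_a)$ is the number of value changes along this sequence. The statement's parameters then mean the following.
\begin{itemize}
\item $t$ is the number of maximal runs of nonzero coefficients.
\item $\ell_j+1$ is the length of the $j$-th run.
\item $v_j-2$ is the number of internal changes in run $j$.
\item $k_j\in\{1,\dots,p-1\}$ is the shifted starting position of run $j$.
\end{itemize}
Each run contributes $v_j$ to the weight and $(q-1)\binom{\ell_j}{v_j-2}(q-2)^{v_j-2}$ choices of values. The lower limits $k_{j+1}\geqslant k_j+\ell_j+2$ force a separating zero between runs. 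The upper limits are exactly the room needed for the later runs. The condition $\sum_j\ell_j\leqslant p-2t$ says that $t$ separated runs fit into $p-1$ slots. Hence the nested sum (written $a(\vec{v})$ in the paper, though it depends only on $\vec{\ell}$) equals $\binom{p-t-\sum_j\ell_j}{t}$. Summing over $t,\vec{v},\vec{\ell}$ recovers $N(v)$. For example, with $p=5$ it gives $N(4)=5(q-1)(q-2)^2+5(q-1)^2=5(q-1)(q^2-3q+3)$, matching your closed form.

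What your route adds over the citation is the closed form $N(v)=\binom{p}{v}\frac{(q-1)^v+(-1)^v(q-1)}{q}$. It makes transparent the one quantity the paper actually needs. Since $N(v)=0$ for $v>p$, we get $A_{p^r}^{(p^r)}=N(p)^{p^{r-1}}$, with $N(p)=\frac{(q-1)^p-(q-1)}{q}$. This is positive exactly when $q\geqslant 3$.

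For $q=2$ it vanishes, because odd-weight binary words cannot have zero sum. Correspondingly, the term $A(\vec{v};\vec{\ell})$ used in the proof of Proposition~\ref{prop-q-p-1} carries the factor $(q-2)^{p-2t}=0$. That proposition therefore needs $q\geqslant 3$, consistent with $h_2(p)=2$ for primes $p>3$ with $2$ as a primitive root.
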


\begin{prop}\label{prop-q-p-1}
	Let $q$ be a prime power and $p$ be an odd prime co-prime to $q$. 
	If $\ord_{p^m}(q) = \varphi(p^m)$, then $h_q(p^m) = 0$. 
\end{prop}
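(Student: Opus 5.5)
We will use the criterion from the preliminaries: $h_q(p^m) = 0$ if and only if every cyclic code of length $p^m$ over $\mathbb{F}_q$ contains a codeword of full Hamming weight. Every nonzero cyclic code contains an irreducible cyclic code: it is generated by a divisor of $x^n-1$, and since $\gcd(n,q)=1$ the ring is semisimple, so every nonzero ideal contains a minimal ideal. So it suffices to show that every irreducible cyclic code $\mathcal{M}_s^{(p^m)}$ contains a word of weight $p^m$. The zero code is excluded, since the covering criterion only concerns nonzero ``working'' vectors.

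By the cited structure theorem, the irreducible codes are $\mathcal{M}_0^{(n)}$ and codes equivalent to $\mathcal{M}_{p^j}^{(n)}$. All nonzero words of $\mathcal{M}_0^{(n)}$ have weight $p^m$. The code $\mathcal{M}_{p^j}^{(n)}$ is a $p^j$-fold repetition of $\mathcal{M}_1^{(p^r)}$ with $r=m-j$, so by the subsequent remark we only need $A_{p^r}^{(p^r)}\neq 0$ for $1\le r\le m$.

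Next we pass the hypothesis down. If $\ord_{p^m}(q)=\varphi(p^m)$, then $q$ is a primitive root modulo $p^m$, and hence modulo every $p^r$ with $r\le m$. Thus $\ord_{p^r}(q)=\varphi(p^r)$, and the cited weight distribution theorem applies at level $p^r$. Every $N(v)$ is a nonnegative integer, so
\[ A_{p^r}^{(p^r)} \geqslant N(p)^{p^{r-1}}, \]
using the single tuple $(w_1,\dots,w_{p^{r-1}})=(p,\dots,p)$. It therefore suffices to show $N(p)>0$.

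The main step is to exhibit one positive summand in $N(p)$. Take $t=1$ and $\vec v=(p)\in P_1(p)$. Then $L(\vec v)$ requires $\ell_1\ge p-2$ and $\ell_1\le p-2$, so $\ell_1=p-2$. The corresponding sum $a$ ranges over $k_1$ from $1$ to $p-2+1-(p-2)=1$, hence equals $1$. The binomial factor is $\binom{p-2}{p-2}=1$. The summand is therefore
\[ (q-1)(q-2)^{p-2}. \]
This is positive when $q\ge 3$, and all other summands are nonnegative, so $N(p)>0$.

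The obstacle is $q=2$, where this summand vanishes and every term with $v>2t$ carries the factor $(q-2)^{v-2t}=0$. Then we need $v=2t$, i.e.\ $p$ even, which is impossible. So the formula seems to give $N(p)=0$ for $q=2$. Yet the known result ``$n$ prime with $q$ as primitive root gives $h_q(n)=0$'' requires $n>q$, and $p>2$ holds here.

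For $q=2$ we would instead argue directly. The irreducible code $\mathcal{M}_1^{(p^r)}$ generated by $(x^{p^r}-1)/\Phi_{p^r}(x)=x^{p^{r-1}}-1$ is $\{(\vec c,\dots,\vec c)\}$ with $\vec c$ ranging over the binary even-weight code of length $p$. This code contains no all-ones word, since $p$ is odd. Then $h_2(p^m)\ge 1$, consistent with $h_2(p)=2$ from \Cref{thm-known-1}. So the proposition must implicitly require $q\ge3$, or equivalently $q-2\neq0$, and we would state the proof under that assumption, noting the $q=2$ exception.
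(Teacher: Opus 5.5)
Your argument is the paper's own: reduce to $A_{p^r}^{(p^r)}\geqslant N(p)^{p^{r-1}}$ and exhibit one positive summand of $N(p)$. Your witness ($t=1$, $\vec v=(p)$, $\vec\ell=(p-2)$, summand $(q-1)(q-2)^{p-2}$) is simpler than the paper's ($\vec v=(3,\dots,3,p')$, $\vec\ell=(1,\dots,1,p'-2)$). The paper's summand also contains the factor $(q-2)^{p-2t}$ with $p-2t=t+p'-3\geqslant 1$, so its proof silently needs $q\geqslant 3$ as well. Your conclusion that the proposition is false for $q=2$ is therefore correct: for example, $\ord_3(2)=\varphi(3)$ yet $h_2(3)=1$, and $h_2(5)=2$ by Theorem~\ref{thm-known-1}.

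One slip is in your $q=2$ paragraph. For $r\geqslant 2$ the code $\mathcal{M}_1^{(p^r)}=\langle x^{p^{r-1}}-1\rangle$ has dimension $\varphi(p^r)$ and consists of the words whose $p$ consecutive blocks of length $p^{r-1}$ sum to $\vec 0$. The $p^{r-1}$-fold repetition of the even-weight code of length $p$ is $\mathcal{M}_{p^{r-1}}^{(p^r)}$ instead. Neither code contains a full-weight word because $p$ is odd, so $h_2(p^m)\geqslant 1$ still follows.
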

\begin{proof}
	We only need to determine $A_{p^r}^{(p^r)}$. 
	Let $w_1 = w_2 = \dots = w_{p^{r-1}} = w$ be the solution of $\sum_{i=1}^{p^{r-1}} = p^r$. 
	Since $N(\cdot)$ is non-negative, then $A_{p^r}^{(p^r)} \geqslant N(p)^{p^{r-1}}$. 
	Let $p = 3(t-1) + p'$ for some integer $t$ with $p' \in \{ 3, 4, 5 \}$. 
	It is clear that 
	\[ \vec{v} = (3,3,\dots,3,p') \in P_t(p), \quad \vec{\ell} = (1,1,\dots, p'-2) \in L(\vec{v}). \]
	Since 
	\[ p - \sum_{i=1}^{t} \ell_i - 2t + 1 = p - (p'+t-3) - 2t + 1 = 1 > 0 \]
	then $a_{\vec{\ell}} > 0$, which implies that 
	\[ A_{p^r}^{(p^r)} \geqslant N(p) \geqslant A(\vec{v}; \vec{\ell}) > 0, \]
	which completes the proof. 
\end{proof}

\begin{rem}
	In fact, \Cref{prop-q-p-1} has also been proved in \cite[Theorem 3.3]{arxiv-LYLZ} by a different method. 
\end{rem}

\subsection{\texorpdfstring{$\ord_{p^m}(q) = p^d$}{} with odd prime \texorpdfstring{$p$}{}}

\begin{thm}[\texorpdfstring{\cite[Theorem 3]{FFTA-SB}}{}] 
	Let $q$ be a prime power and $p$ be an odd prime co-prime to $q$. 
	Assume that $\ord_{p^m}(q) = p^d$ for some integer $d$ with $0 \leqslant d \leqslant m$. 
	\begin{enumerate}[label=$\mathrm{(\arabic*)}$]
		\item If $r \leqslant m - d$, then the only possible non-zero weight in $\mathcal{M}_{1}^{(p^r)}$ is $p^r$, which is attained by all its $q-1$ non-zero codewords. 
		
		\item If $r > m - d$, then the weight distribution $A_{w}^{(p^r)}$ of $\mathcal{M}_{1}^{(p^r)}$ is given by 
		\[ A_{w}^{(p^r)} = 
		\begin{cases}
			0, & p^{m-d} \nmid w, \\
			\binom{p^{r-(m-d)}}{w'} (q-1)^{w'}, & w = p^{m-d} w', 0 \leqslant w' \leqslant p^{r-(m-d)}. 
		\end{cases}
		\]
	\end{enumerate}
\end{thm}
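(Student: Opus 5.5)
We argue directly with the trace representation of the irreducible cyclic code. Let $\theta=\theta_{p^r}$ be a primitive $p^r$-th root of unity, and set $k=\ord_{p^r}(q)$. Then
\[ \mathcal{M}_1^{(p^r)}=\{ (\mathrm{Tr}(\beta\theta^{i}))_{i=0}^{p^r-1} : \beta\in\mathbb{F}_{q^k}\}, \]
where $\mathrm{Tr}$ is the trace from $\mathbb{F}_{q^k}$ to $\mathbb{F}_q$. The first step is arithmetic. Put $s=m-d$.
\begin{enumerate}[label=$\mathrm{(\alph*)}$]
\item Since $\ord_{p^m}(q)=p^d$ is a power of $p$, $q$ has order dividing both $p-1$ and a power of $p$ modulo $p$, so $q\equiv 1\pmod p$.
\item The lifting-the-exponent lemma for the odd prime $p$ gives $v_p(q^{p^j}-1)=v_p(q-1)+j$. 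Matching this with order $p^d$ modulo $p^m$ forces $v_p(q-1)=s$ (when $d\geqslant 1$; when $d=0$ we only need $q\equiv 1 \pmod{p^m}$).
\item Consequently $\ord_{p^r}(q)=1$ for $r\leqslant s$, and $\ord_{p^r}(q)=p^{r-s}$ for $r>s$.
\end{enumerate}

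For case (1), $r\leqslant s$, we have $q\equiv 1\pmod{p^r}$, so $\theta\in\mathbb{F}_q$ and $k=1$. The code is then $\{(\beta\theta^i)_i:\beta\in\mathbb{F}_q\}$, which is one-dimensional. Every nonzero word has all coordinates nonzero, so the $q-1$ nonzero codewords all have weight $p^r$.

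For case (2), $r>s$, put $e=p^{r-s}=k$ and $\gamma=\theta^{e}$. Then $\gamma$ is a primitive $p^s$-th root of unity, and $\gamma\in\mathbb{F}_q$ because $p^s\mid q-1$.
\begin{enumerate}[label=$\mathrm{(\alph*)}$]
\item The polynomial $x^{e}-\gamma$ has $\theta$ as a root and degree $e=[\mathbb{F}_q(\theta):\mathbb{F}_q]$, so it is the minimal polynomial of $\theta$. Hence $1,\theta,\dots,\theta^{e-1}$ is an $\mathbb{F}_q$-basis of $\mathbb{F}_{q^k}$.
\item The Frobenius conjugates of $\theta^{i}$ are $\theta^{i}$ times $e$-th roots of unity. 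Therefore $\mathrm{Tr}(\theta^{i})=0$ unless $e\mid i$, in which case $\mathrm{Tr}(\theta^{i})=e\,\theta^{i}$, with $e\neq 0$ in $\mathbb{F}_q$.
\item Write $\beta=\sum_{j=0}^{e-1}b_j\theta^{-j}$ and $i=j+eb$ with $0\leqslant j<e$. Then
\[ \mathrm{Tr}(\beta\theta^{i})=e\,b_j\gamma^{b}. \]
\item The $p^r$ coordinates thus split into $e$ residue classes modulo $e$, each of size $p^s$. The codeword vanishes identically on class $j$ exactly when $b_j=0$, and is nonzero at every coordinate of that class otherwise.
\end{enumerate}
The map $\beta\mapsto(b_j)$ is a bijection $\mathbb{F}_{q^k}\to\mathbb{F}_q^{e}$. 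So the weight is $p^s w'$, where $w'=\#\{j:b_j\neq0\}$, and the number of codewords of weight $p^s w'$ is $\binom{p^{r-s}}{w'}(q-1)^{w'}$. No other weights occur.

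The main obstacle is the arithmetic in the first step: pinning down $v_p(q-1)=m-d$ exactly, and hence $\ord_{p^r}(q)$. This needs care at the edge cases $d=0$ and $d=m$. Everything afterwards is the clean binomial-type trace computation. As a consequence, taking $w'=p^{r-s}$ gives $A_{p^r}^{(p^r)}=(q-1)^{p^{r-s}}>0$ for every $r$. By the preceding Remark and the criterion that $h_q(n)=0$ if and only if every cyclic code contains a full-weight word, this yields $h_q(p^m)=0$ in this case.
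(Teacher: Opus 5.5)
The paper does not prove this statement. It quotes it from Sharma--Bakshi \cite[Theorem 3]{FFTA-SB} and uses it only as a black box in Proposition~\ref{prop-q-p-2}, where all it needs is $A_{p^r}^{(p^r)}>0$. Your argument is a correct, self-contained replacement for that citation.

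The arithmetic is right. From $q\equiv 1\pmod p$, $\ord_{p^m}(q)$ is a power of $p$, and lifting the exponent gives $\ord_{p^r}(q)=p^{\max(0,\,r-v_p(q-1))}$. Hence $v_p(q-1)=m-d$ as soon as $d\geqslant 1$.

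The core computation is also right. Because the minimal polynomial of $\theta$ is the binomial $x^{e}-\gamma$, the codeword attached to $\beta=\sum_j b_j\theta^{-j}$ has entry $e\,\gamma^{b}b_j$ at $i=j+eb$. So $\mathcal{M}_1^{(p^r)}$ is monomially equivalent to the $p^{m-d}$-fold repetition of the full space $\mathbb{F}_q^{\,p^{r-(m-d)}}$. Its weight enumerator is $\bigl(1+(q-1)z^{p^{m-d}}\bigr)^{p^{r-(m-d)}}$, which is exactly the claimed distribution.

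Compared with the bare citation, your route shows structurally why only multiples of $p^{m-d}$ occur. It also yields Proposition~\ref{prop-q-p-2} at once: the full-weight words are exactly those with every $b_j\neq 0$. The same residue-class splitting, with $x^e-\gamma$ replaced by $f(x^e)$ for some $f$ of degree greater than $1$, gives the product-form formulas in the $\varphi(p^m)$ and $2p^d$ cases. There each class carries a nontrivial short code rather than a single scaled symbol. The citation's advantage is a single reference covering all three cases.

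Three small points to tidy.
\begin{enumerate}[label=$\mathrm{(\roman*)}$]
\item State $1\leqslant r\leqslant m$ explicitly; these are the only lengths $p^{m-j}$ that arise. For $d=0$ and $r>m$, your formula $\ord_{p^r}(q)=p^{r-s}$, and indeed part (2) of the statement, fails whenever $v_p(q-1)>m$. With $r\leqslant m$, case (2) is simply empty for $d=0$.
\item The edge case $d=m$ you flag is vacuous. Since $\ord_{p^m}(q)\mid\varphi(p^m)=p^{m-1}(p-1)$, we always have $d\leqslant m-1$ and so $s\geqslant 1$.
\item With the paper's convention, $\mathcal{M}_1$ has nonzeros $\theta^{q^l}$, and its trace representation is $(\mathrm{Tr}(\beta\theta^{-i}))_i$ rather than $(\mathrm{Tr}(\beta\theta^{i}))_i$. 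Either take your $\theta$ to be the inverse root, or note that $i\mapsto -i$ is a coordinate permutation and so leaves the weight distribution unchanged.
\end{enumerate}
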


\begin{prop}\label{prop-q-p-2}
	If $\ord_{p^m}(q) = p^d$, then $h_q(p^m) = 0$. 
\end{prop}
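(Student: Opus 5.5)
We reduce, as in the proofs of the neighbouring propositions, to showing that every cyclic code of length $p^m$ over $\mathbb{F}_q$ contains a codeword of Hamming weight $p^m$. By the theorem characterizing $h_q(n)=0$ via cyclic codes, this is exactly what must be shown.

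\textbf{Step 1: reduce to irreducible codes.} Since $\gcd(p^m,q)=1$, every nonzero cyclic code of length $p^m$ is a direct sum of irreducible cyclic codes. Hence it contains some $\mathcal{M}_s^{(p^m)}$. It therefore suffices to treat the irreducible ones, namely $\mathcal{M}_0^{(p^m)}$ and $\mathcal{M}_{\theta_n^k p^j}^{(p^m)}$.

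\textbf{Step 2: reduce to $\mathcal{M}_1^{(p^r)}$.}
\begin{itemize}
\item By part (3) of \cite[Theorem 1]{FFTA-SB} as quoted above, every nonzero codeword of $\mathcal{M}_0^{(p^m)}$ has weight $p^m$, so this code is fine.
\item By parts (2) and (4), $\mathcal{M}_{\theta_n^k p^j}^{(p^m)}$ has the same weight distribution as the $p^j$-fold repetition of $\mathcal{M}_1^{(p^{r})}$ with $r=m-j$.
\item By the subsequent remark, $A_{p^m}^{(p^m)}\neq 0$ if and only if $A_{p^r}^{(p^r)}\neq0$.
\end{itemize}
So we only need $A_{p^r}^{(p^r)}>0$ for each $1\leqslant r\leqslant m$.

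\textbf{Step 3: apply the weight distribution of \cite[Theorem 3]{FFTA-SB}.} The hypothesis $\ord_{p^m}(q)=p^d$ is what licenses this theorem.
\begin{itemize}
\item If $r\leqslant m-d$, all $q-1>0$ nonzero codewords of $\mathcal{M}_1^{(p^r)}$ have weight $p^r$, so $A_{p^r}^{(p^r)}=q-1>0$.
\item If $r>m-d$, take $w=p^r=p^{m-d}w'$ with $w'=p^{r-(m-d)}$. The formula gives $A_{p^r}^{(p^r)}=\binom{p^{r-(m-d)}}{p^{r-(m-d)}}(q-1)^{p^{r-(m-d)}}=(q-1)^{p^{r-(m-d)}}$, which is positive because $q\geqslant 2$.
\end{itemize}
Thus every irreducible cyclic code, and hence every nonzero cyclic code, of length $p^m$ contains a full-weight codeword, which gives $h_q(p^m)=0$.

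\textbf{Expected obstacle.} The only delicate point is bookkeeping. The order of $q$ modulo the smaller modulus $p^{r}$ governs the code $\mathcal{M}_1^{(p^r)}$, and the cited theorem is stated in terms of $\ord_{p^m}(q)=p^d$ with parameter $m-d$. I would check that the theorem is applied consistently to each length $p^r$; I rely on the quoted statement being uniform in $r$. Either way the answer is a positive power of $q-1$, so the conclusion is robust. The case $p=2$ is likewise covered by the remark citing \cite[Lemma 2]{FFTA-SBR}, if needed.
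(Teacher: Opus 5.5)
Your proof is correct and follows essentially the paper's argument: reduce to $A_{p^r}^{(p^r)}$ for $\mathcal{M}_1^{(p^r)}$ and read off $(q-1)^{p^{r-(m-d)}}>0$ from \cite[Theorem 3]{FFTA-SB}. You are in fact slightly more complete, because you also treat the case $r\leqslant m-d$, which the paper's one-line computation passes over (there $w'=p^{r-(m-d)}$ only makes sense for $r\geqslant m-d$).

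One correction: delete your closing remark about $p=2$. The remark citing \cite[Lemma 2]{FFTA-SBR} extends only the structural theorem to $p=2$, not the weight distribution of \cite[Theorem 3]{FFTA-SB}, and for $p=2$ the statement is actually false. For example, $\ord_8(3)=2$, yet the paper's corollary gives $h_3(8)=1$; indeed every nonzero word of $\mathcal{M}_1^{(8)}$ over $\mathbb{F}_3$ has weight $6$.
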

\begin{proof}
	We only need to determine $A_{p^r}^{(p^r)}$. 
	It is clear that $w = p^r = p^{m-d} w'$ with non-zero $w' = p^{r-(m-d)}$. 
	Then 
	\[ A_{p^r}^{(p^r)} = \binom{p^{r-(m-d)}}{w'} (q-1)^{w'} = (q-1)^{p^{r-(m-d)}} > 0,  \]
	which completes the proof. 
\end{proof}

\subsection{\texorpdfstring{$\ord_{p^m}(q) = 2p^d$}{} with odd prime \texorpdfstring{$p$}{}}

\begin{thm}[\texorpdfstring{\cite[Theorem 4]{FFTA-SB}}{}] 
	Let $q$ be a prime power and $p$ be an odd prime co-prime to $q$. 
	If $\ord_{p^m}(q) = 2 p^d$ for some integer $d$ with $0 \leqslant d \leqslant m$, then the weight distribution $A_{w}^{(p^r)}$ of $\mathcal{M}_{1}^{(p^r)}$ is given by 
	\[ A_{w}^{(p^r)} = \sum n(w_1) n(w_2) \dots n(w_{p^{r-u}}), \]
	where 
	\[
	n(v) = 
	\begin{cases}
		1, & v = 0, \\
		(q-1) p^u, & v = p^u - 1, \\
		(q-1)(q - p^u + 1), & v = p^u, \\
		0, & \mathrm{otherwise}, 
	\end{cases}
	\]
	with $u = \min\{ r, m-d \}$. 
\end{thm}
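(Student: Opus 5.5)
The statement is the weight-distribution result quoted from \cite[Theorem 4]{FFTA-SB}. I would prove it in two stages. First, show that $\mathcal{M}_1^{(p^r)}$ is an interleaving of $p^{r-u}$ independent copies of a two-dimensional irreducible code of length $p^u$. Second, compute the weight distribution of that small code directly with the trace representation. The product formula $\sum n(w_1)\cdots n(w_{p^{r-u}})$ then follows, because weights add over disjoint coordinate classes.

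\emph{Step 1 (orders).} Put $s=p^{r-u}$. From $\ord_{p^m}(q)=2p^d$, I would argue that $q$ has order exactly $2$ modulo $p^{u}$, so that $p^u\mid q+1$. This uses $p$ odd and that $\ord_{p^m}(q)=2p^d$ forces $q^2\equiv 1 \pmod{p^{m-d}}$, via the structure of the cyclic group $(\mathbb{Z}/p^m\mathbb{Z})^\ast$. Lifting the exponent then gives $\ord_{p^r}(q)=2p^{r-u}=2s$.

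\emph{Step 2 (decomposition).} Let $\theta$ be a primitive $p^r$-th root of unity. Then $\theta^s$ is a primitive $p^u$-th root of unity with minimal polynomial $M_1^{(p^u)}(x)$ of degree $2$. Hence $M_1^{(p^u)}(x^s)$ has $\theta$ as a root and has degree $2s=\deg M_1^{(p^r)}$, so $M_1^{(p^r)}(x)=M_1^{(p^u)}(x^s)$. The code $\mathcal{M}_1^{(p^r)}$ is the set of vectors annihilated by the check polynomial $M_1^{(p^r)}(x)$. Since this polynomial involves only powers of $x^s$, the defining recurrence links only coordinates in the same residue class modulo $s$. Therefore, for each $j\in\{0,\dots,s-1\}$, the subsequence $(c_{j+s\ell})_{0\le \ell<p^u}$ lies in $\mathcal{M}_1^{(p^u)}$, and the $s$ classes vary independently. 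A dimension count ($2s=s\cdot 2$) gives equality: $\mathcal{M}_1^{(p^r)}\cong\bigoplus_{j=0}^{s-1}\mathcal{M}_1^{(p^u)}$ as interleaved codes. So the weight enumerator of $\mathcal{M}_1^{(p^r)}$ is the $s$-th power of that of $\mathcal{M}_1^{(p^u)}$, which is exactly the stated convolution.

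\emph{Step 3 (the small code).} Write the codewords of $\mathcal{M}_1^{(p^u)}$ as $c_i=\mathrm{Tr}_{q^2/q}(\beta\zeta^i)$ with $\beta\in\mathbb{F}_{q^2}$ and $\zeta$ of order $p^u$ dividing $q+1$. The kernel of the trace is a one-dimensional $\mathbb{F}_q$-subspace $K$. Suppose a nonzero word vanished at two positions $i\ne j$. Then $\zeta^{i-j}\in\mathbb{F}_q^\ast$, so its order divides $\gcd(q-1,q+1)\le 2$, which is impossible since $p$ is odd and $\zeta^{i-j}\neq 1$. So every nonzero word has weight $p^u$ or $p^u-1$. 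For each position $i$ there are exactly $q-1$ nonzero $\beta$ with $\beta\zeta^i\in K$. This gives $(q-1)p^u$ words of weight $p^u-1$, and the remaining $q^2-1-(q-1)p^u=(q-1)(q-p^u+1)$ nonzero words have full weight. This matches $n(v)$.

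The main obstacle I expect is Step 1. I must justify carefully that $u=\min\{r,m-d\}$ is exactly the level at which $q$ has order $2$, including the boundary case $r\le m-d$, where $s=1$ and the formula reduces to Step 3 alone. I would handle it with the standard lemma that, for odd $p$, $\ord_{p^{k+1}}(q)=p\cdot\ord_{p^k}(q)$ once $\ord_{p^k}(q)$ has stopped coinciding with $\ord_{p^{k+1}}(q)$.
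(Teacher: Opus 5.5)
Your proposal is correct. The paper gives no proof of this statement. It is quoted from \cite[Theorem 4]{FFTA-SB} and used as a black box in Proposition~\ref{prop-q-p-3}, so your argument is a self-contained replacement for the citation rather than a variant of an argument in the paper.

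What your route buys is an explanation of the shape of the formula. The identity $M_1^{(p^r)}(x)=M_1^{(p^u)}(x^{p^{r-u}})$ shows that $\mathcal{M}_1^{(p^r)}$ is an interleaving of $p^{r-u}$ copies of the two-dimensional code $\mathcal{M}_1^{(p^u)}$. That is exactly why $A_w^{(p^r)}$ is a $p^{r-u}$-fold convolution of $n(\cdot)$. The elementary trace count in $\mathbb{F}_{q^2}$ then produces $n(v)$ directly: a nonzero word has at most one zero because $\zeta^{i-j}\notin\mathbb{F}_q$ for $i\not\equiv j$.

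You should state the hypothesis $1\leqslant r\leqslant m$ explicitly. It is implicit in the paper, where $r=m-j$, and your Step 1 uses it. It is genuinely needed when $d=0$. For instance, $p=3$, $q=8$, $m=1$ gives $\ord_3(8)=2$ and $d=0$. Taking $r=2$, however, one has $\ord_9(8)=2$, so $\mathcal{M}_1^{(9)}$ is $2$-dimensional rather than $2p^{r-u}=6$-dimensional.

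Granting $r\leqslant m$, Step 1 follows at once from the cyclic group structure. The reduction $(\mathbb{Z}/p^m\mathbb{Z})^\ast\to(\mathbb{Z}/p^r\mathbb{Z})^\ast$ has kernel of order $p^{m-r}$, so $\ord_{p^r}(q)=2p^d/\gcd(2p^d,p^{m-r})=2p^{r-u}$. Moreover, $q^2\equiv 1 \pmod{p^{m-d}}$ together with $p\nmid q-1$ gives $p^u\mid q+1$.
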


\begin{prop}\label{prop-q-p-3}
	If $\ord_{p^m}(q) = 2p^d$, then $h_q(p^m) = 0$ when $q - p^u + 1 \neq 0$ and $h_q(p^m) \geqslant 1$ when $q - p^u + 1 = 0$. 
\end{prop}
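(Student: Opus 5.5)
We will follow the same strategy as in \Cref{prop-q-2-3}, now in the odd-prime setting of \cite[Theorem 4]{FFTA-SB}.

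\textbf{Reduction.} By the characterisation of $h_q(n)=0$ via cyclic codes, it suffices to decide whether every irreducible cyclic code of length $p^m$ contains a codeword of full weight. Any cyclic code contains an irreducible one as a subcode, and a full-weight word of the subcode is a full-weight word of the code. By the structure theorem from \cite[Theorem 1]{FFTA-SB} and the subsequent remark, the irreducible codes of length $p^m$ are of two kinds:
\begin{itemize}
\item the code $\mathcal{M}_0^{(p^m)}$, all of whose nonzero words have weight $p^m$;
\item codes equivalent to repetitions of $\mathcal{M}_1^{(p^r)}$ for $1\leqslant r\leqslant m$.
\end{itemize}
Moreover $A_{p^m}^{(p^m)}\neq 0$ if and only if $A_{p^r}^{(p^r)}\neq 0$. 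So the problem reduces to computing $A_{p^r}^{(p^r)}$ from the formula with $n(\cdot)$ and $u=\min\{r,m-d\}$.

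\textbf{Counting full-weight words.} We need the tuples $(w_1,\dots,w_{p^{r-u}})$ with $\sum w_i = p^r$ and each $w_i\in\{0,p^u-1,p^u\}$; any other value contributes $n(w_i)=0$. Since there are $p^{r-u}$ entries, each at most $p^u$, the total $p^r$ is attained only when every $w_i = p^u$. Concretely, let $a$ entries equal $p^u$ and $b$ entries equal $p^u-1$, with $a+b\leqslant p^{r-u}$. Then
\[ p^r = ap^u + b(p^u-1) = (a+b)p^u - b \leqslant p^{r} - b, \]
which forces $b=0$ and $a=p^{r-u}$. Hence
\[ A_{p^r}^{(p^r)} = n(p^u)^{p^{r-u}} = \bigl((q-1)(q-p^u+1)\bigr)^{p^{r-u}}. \]
This includes the case $u=r$, where the sum has a single entry.

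\textbf{Conclusion.} If $q-p^u+1\neq 0$, the quantity above is positive, so every irreducible cyclic code of length $p^m$ has a full-weight word and $h_q(p^m)=0$. If $q-p^u+1=0$ for the relevant $r$, the code $\mathcal{M}_1^{(p^r)}$ repeated $p^{m-r}$ times is an irreducible cyclic code of length $p^m$ with no word of weight $p^m$. By the characterisation, $h_q(p^m)\neq 0$, i.e.\ $h_q(p^m)\geqslant 1$.

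\textbf{Main obstacle.} The dependence of $u$ on $r$ must be handled. The condition "$q-p^u+1\neq 0$" has to be read as holding for all $r\in\{1,\dots,m\}$ for the first conclusion, and for some $r$ for the second. Since $q\geqslant 2$ and $p^u\geqslant 3$, the value $q-p^u+1$ can vanish only when $q=p^u-1$. We will state this quantifier explicitly in the proof.
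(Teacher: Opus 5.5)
Your proof is correct and follows essentially the same route as the paper. Both reduce to $A_{p^r}^{(p^r)}$ through the irreducible-code structure and show that the only contributing tuple is $w_1=\dots=w_{p^{r-u}}=p^u$, which gives $A_{p^r}^{(p^r)}=\bigl((q-1)(q-p^u+1)\bigr)^{p^{r-u}}$. Your bound $(a+b)p^u-b\leqslant p^r-b$ is a cleaner form of the paper's uniqueness argument; the paper's version states the case as $u=m-d>r$, with the inequality reversed. Your explicit quantification over $r$ (all $r$ for the first conclusion, some $r$ for the second) makes precise the dependence of $u$ on $r$, which the paper leaves implicit.
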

\begin{proof}
	We only need to determine $A_{p^r}^{(p^r)}$. 
	If $u = r$, then 
	\[ A_{p^r}^{(p^r)} = n(p^r) = n(p^u) = (q-1)(q - p^u + 1). \]
	If $u = m - d > r$, then there exists integers $a,b$ such that 
	\[ p^r = a \cdot p^u + b \cdot (p^u - 1). \]
	It is clear that $(a,b) = (p^{r-u}, 0)$ is one solution. 
	
	We claim that it is the unique solution such that $a,b \geqslant 0$ and $a + b \leqslant p^{r-u}$.  
	
	Let $S = a+b \leqslant p^{r-u}$. 
	Since $b = S p^u - p^r \geqslant 0$, then 
	\[ p^{r-u} \leqslant S \leqslant p^{r-u}. \]
	That is to say, $S$ is unique, which implies that $b = S p^u - p^r$ is unique. 
	Hence, 
	\[ A_{p^r}^{(p^r)} = (n(p^u))^a = n(p^u) = (q-1)^a (q - p^u + 1)^a, \]
	which completes the proof. 
\end{proof}

\section{Conclusion}\label{sec-conclusion}
In this paper, we have determined the values on $h_q(n)$ by irreducible cyclic codes, constacyclic codes, support weight distribution in coding theory. 
We found that $h_q(n) = 0$ if and only every non-trivial cyclic code contains a codeword of Hamming weight $n$, which has been discovered by different methods in the literature. 
By new methods, we proved $h_2(2p) = 2$ if $2$ is a primitive root modulo the prime $p$. 
Moreover, the support weight distribution and constacyclic codes also make sense when determining $h_q(n)$. 
Finally, using irreducible cyclic codes, we have derived several families of $n$ such that $h_q(n) = 0$, which covers some results in the literature and develops more on trivial cyclically covering subspaces.

\section*{Declarations}  The authors declare no conflict of interest.


\begin{thebibliography}{00}
	\bibitem{JCTA-AGJ} J. Aaronson, C. Groenland, and T. Johnston. Cyclically covering subspaces in $\mathbb{F}_2^n$. J. Comb. Theory Ser. A 181 (2021): 105436.
	
	\bibitem{EJC-CER} P. J. Cameron, D. Ellis, and W. Raynaud. Smallest cyclically covering subspaces of $\mathbb{F}_q^n$, and lower bounds in Isbell's conjecture. Eur. J. Comb. 81 (2019): 242--255.
	
	\bibitem{PBC-DM-C} P. J. Cameron. Problems from the Thirteenth British Combinatorial Conference. Discrete Math. 125 (1994): 407--417.
	
	\bibitem{FFTA-CFLL} B. Chen, Y. Fan, L. Lin, and H. Liu. Constacyclic codes over finite fields. Finite Fields and Their Applications 18 (2012): 1217--1231.
	
	\bibitem{TIT-D} C. Ding. The weight distribution of some irreducible cyclic codes. IEEE Trans. Inf. Theory 55(3) (2009): 955--960.
	
	\bibitem{SAM-G} C. Greene. Weight enumeration and the geometry of linear codes. Stud. Appl. Math. 55(2) (1976): 119--128.
	
	\bibitem{DM-HKM} T. Helleseth, T. Kløve, and J. Mykkeltveit. The weight distribution of irreducible cyclic codes with block lengths $n_1 ((q^{l}-1)/N)$. Discrete Math. 18(2) (1977): 179--211.
	
	\bibitem{TIT-HKY} T. Helleseth, T. Kløve, and O. Ytrehus. Generalized Hamming weights of linear codes. IEEE Trans. Inf. Theory 38(3) (1992): 1133--1140.
	
	\bibitem{JRAM-H} C. Hooley. On Artin's conjecture. J. Reine Angew. Math. 225 (1967): 209--220.
	
	\bibitem{FFTA-H} J. Huang. On trivial cyclically covering subspaces of $\mathbb{F}_q^n$. Finite Fields and Their Applications 96 (2024): 102423.
	
	\bibitem{Huffman-Pless} W. C. Huffman and V. Pless. Fundamentals Error Correcting Codes. Cambridge University Press, 2003.
	
	\bibitem{MS-Isbell-1} J. R. Isbell. Homogeneous games. Math. Stud. 25 (1957): 123--128.
	
	\bibitem{PAMS-Isbell-2} J. R. Isbell. Homogeneous games, II. Proc. Am. Math. Soc. 11 (1960): 159--161.
	
	\bibitem{JCTA-J} R. E. Jamison. Covering finite fields with cosets of subspaces. J. Comb. Theory Ser. A 22 (1977): 253--266.
	
	\bibitem{DM-Klove} T. Kløve. The weight distribution of linear codes over GF($q^l$) having generator matrix over GF$(q)^{\ast}$. Discrete Math. 23(2) (1978): 159--168.
	
	\bibitem{arxiv-LY-1} S. Li and P. Yuan. On trivial cyclically covering subspaces of $\mathbb{F}_q^n$ in non-coprime characteristic. arXiv:2512.24301.
	
	\bibitem{arxiv-LY-2} Y. Li and P. Yuan. Discrete Fourier transform approach to cyclically covering subspaces of $\mathbb{F}_q^n$. arXiv:2606.13307.
	
	\bibitem{arxiv-LY-3} Y. Li and P. Yuan. Cyclic codes and cyclically covering subspaces over finite fields. arXiv:2607.02239.
	
	\bibitem{arxiv-LYLZ} Y. Li, P. Yuan, S. Li, and Y. Zeng. On cyclically covering subspaces of $\mathbb{F}_q^n$. arXiv:2602.04558.
	
	\bibitem{TIT-L} J. H. van Lint. Repeated-root cyclic codes. IEEE Trans. Inf. Theory 37(2) (1991): 343--345.
	
	\bibitem{Book-Finite_Fields} R. Lidl and H. Niederreiter. Finite Fields. Cambridge University Press, 1997.
	
	\bibitem{AMASH-Luh} J. Luh. On the representation of vector spaces as a finite union of subspaces. Acta Math. Acad. Sci. Hungar. 23 (1972): 341--342.
	
	\bibitem{FFTA-SB} A. Sharma and G. K. Bakshi. The weight distribution of some irreducible cyclic codes. Finite Fields and Their Applications 18(1) (2012): 144--159.
	
	\bibitem{FFTA-SBR} A. Sharma, G. K. Bakshi, and M. Raka. The weight distribution of irreducible cyclic codes of length $2^m$. Finite Fields and Their Applications 13(1) (2007): 1086--1095.
	
	\bibitem{TIT-SLH} M. Shi, S. Li, and T. Helleseth. The weight enumerator polynomials of the lifted codes of the projective Solomon-Stiffler codes. IEEE Trans. Inf. Theory 70(9) (2024): 343--345.
	
	\bibitem{FFTA-SMZ} M. Sun, C. Ma, and L. Zeng. Cyclically covering subspaces of $\mathbb{F}_q^n$. Finite Fields and Their Applications 106 (2025): 102625.
	
	\bibitem{TIT-Wei} V. K. Wei. Generalized Hamming weights for linear codes. IEEE Trans. Inf. Theory 37(5) (1991): 1412--1418.
	
%	\bibitem{JCTA-AGJ} J. Aaronson, C. Groenland, T. Johnston, Cyclically covering subspaces in $\mathbb{F}_2^n$, J. Comb. Theory, Ser. A 181 (2021) 105436. 
%	
%	\bibitem{EJC-CER} P.J. Cameron, D. Ellis, W. Raynaud, Smallest cyclically covering subspaces of $\mathbb{F}_q^n$, and lower bounds in Isbell’s conjecture, Eur. J. Comb. 81 (2019) 242--255. 
%	
%	\bibitem{PBC-DM-C} P.J. Cameron, Problems from the Thirteenth British Combinatorial Conference, Discrete Math., 125: 407–417 (1994).
%	
%	\bibitem{FFTA-CFLL} Bocong Chen, Yun Fan, Liren Lin, Hongwei Liu, Constacyclic codes over finite fields, Finite Fields and Their Applications, 18: 1217-1231 (2012). 
%	
%	\bibitem{TIT-D} Cunsheng Ding, The weight distribution of some irreducible cyclic codes, IEEE Trans. Inf. Theory, 55(3): 955-960 (2009). 
%	
%	\bibitem{SAM-G} C. Greene, Weight enumeration and the geometry of linear codes, Stud. Appl. Math., vol. 55, no. 2, pp. 119–128, Jun. 1976.
%	
%	\bibitem{DM-HKM} T. Helleseth, T. Kløve, J. Mykkeltveit, The weight distribution of irreducible cyclic codes with block lengths $n_1 ((q^{l}-1)/N)$, Discrete Mathematics, vol. 18, no. 2, pp. 179–211, 1977.
%	
%	\bibitem{TIT-HKY} T. Helleseth, T. Klove, and O. Ytrehus, Generalized Hamming weights of linear codes, IEEE Trans. Inf. Theory, vol. 38, no. 3, pp. 1133–1140, May 1992.
%	
%	\bibitem{JRAM-H} C. Hooley, On Artin’s conjecture, J. Reine Angew. Math. 225: 209–220 (1967).
%	
%	\bibitem{FFTA-H} J. Huang, On trivial cyclically covering subspaces of $\mathbb{F}_q^n$, Finite Fields and Their Applications, 96: 102423 (2024). 
%	
%	\bibitem{Huffman-Pless} W. C. Huffman, V. Pless, Fundamentals Error Correcting Codes, Cambridge, U.K.: Cambridge Univ. Press, 2003. 
%	
%	\bibitem{MS-Isbell-1} J.R. Isbell, Homogeneous games, Math. Stud. 25: 123-128 (1957). 
%	\bibitem{PAMS-Isbell-2} J.R. Isbell, Homogeneous games, II, Proc. Am. Math. Soc. 11: 159-161 (1960).
%	
%	\bibitem{JCTA-J} R.E. Jamison, Covering finite fields with cosets of subspaces, J. Comb. Theory A 22: 253–266 (1977).
%	
%	\bibitem{DM-Klove} Kløve, The weight distribution of linear codes over GF($q^l$) having generator matrix over GF$(q)^{\ast}$, Discrete Math., vol. 23, no. 2, pp. 159–168, 1978.
%	
%	
%	\bibitem{arxiv-LY-1} S. Li, P. Yuan, On trivial cyclically covering subspaces of $\mathbb{F}_q^n$ in non-coprime characteristic, arXiv:2512.24301. 
%	
%	\bibitem{arxiv-LY-2} Yangcheng Li, Pingzhi Yuan, Discrete Fourier Transform Approach to Cyclically Covering Subspaces of $\mathbb{F}_q^n$, arXiv:2606.13307. 
%	
%	\bibitem{arxiv-LYLZ} Yangcheng Li, Pingzhi Yuan, Shuang Li, Yuanpeng Zeng, On cyclically covering subspaces of $\mathbb{F}_q^n$, arXiv:2602.04558. 
%	
%	\bibitem{TIT-L} J. H. van Lint, Repeated-root cyclic codes, IEEE Trans. Inf. Theory, 37(2): 343-345 (1991). 
%	 
%	\bibitem{Book-Finite_Fields} R. Lidl, H. Niederreiter: Finite Fields. Cambridge University Press, Cambridge (1997). 
%	
%	\bibitem{AMASH-Luh} J. Luh, On the representation of vector spaces as a finite union of subspaces, Acta Math. Acad. Sci. Hungar. 23: 341–342 (1972).
%	
%	\bibitem{FFTA-SB} Anuradha Sharma, Gurmeet K. Bakshi, The weight distribution of some irreducible cyclic codes, Finite Fields and Their Applications, 18(1): 144-159 (2012). 
%	
%	\bibitem{FFTA-SBR} Anuradha Sharma, Gurmeet K. Bakshi, Madhu Raka, The weight distribution of irreducible cyclic codes of length $2^m$, Finite Fields and Their Applications, 13(1): 1086-1095 (2007). 
%	
%    \bibitem{TIT-SLH} Minjia Shi, Shitao Li, Tor Helleseth, The weight enumerator polynomials of the lifted
%    codes of the projective Solomon-Stiffler codes, IEEE Trans. Inf. Theory, 70(9): 343-345 (2024). 
%    
%    \bibitem{FFTA-SMZ} Meng Sun, Changli Ma, Liwei Zeng, Cyclically covering subspaces of $\mathbb{F}_q^n$, Finite Fields and Their Applications, 106: 102625 (2025). 
%   
%   \bibitem{TIT-Wei} V.K. Wei, Generalized Hamming weights for linear codes, IEEE Trans. Inf. Theory, 37(5): 1412–1418 (1991). 
\end{thebibliography}
\end{document}